\documentclass[11pt,letterpaper]{article}
\usepackage[utf8x]{inputenc}
\usepackage{float}
\usepackage{thm-restate}
\usepackage{caption}

\usepackage[linesnumbered,boxed]{algorithm2e}
\SetAlCapSkip{1em}
\SetAlCapNameFnt{\small}
\SetAlCapFnt{\small}

\usepackage{mathtools}

\def\showauthornotes{1}

\usepackage{xspace,enumerate}

\usepackage[dvipsnames]{xcolor}

\usepackage[T1]{fontenc}
\usepackage[full]{textcomp}

\usepackage[american]{babel}

\usepackage{mathtools}

\usepackage{hyperref}
\usepackage{amsthm}
\usepackage{cleveref}

\newtheorem{theorem}{Theorem}[section]
\newtheorem*{theorem*}{Theorem}

\newtheorem{proposition}[theorem]{Proposition}
\newtheorem*{proposition*}{Proposition}
\newtheorem{lemma}[theorem]{Lemma}
\newtheorem*{lemma*}{Lemma}

\newtheorem*{conjecture*}{Conjecture}
\newtheorem{fact}[theorem]{Fact}
\newtheorem*{fact*}{Fact}

\newtheorem*{hypothesis*}{Hypothesis}

\newtheorem{claim}[theorem]{Claim}
\newtheorem*{claim*}{Claim}

\theoremstyle{definition}
\newtheorem{definition}[theorem]{Definition}

\newtheorem{remark}[theorem]{Remark}
\newtheorem*{remark*}{Remark}

\newtheorem*{observation*}{Observation}

\usepackage[
letterpaper,
top=0.7in,
bottom=0.9in,
left=1in,
right=1in]{geometry}

\usepackage[varg]{pxfonts}
\usepackage{textcomp}
\usepackage[scr=rsfso]{mathalfa}
\usepackage{bm} 
\linespread{1.08}
\let\mathbb\varmathbb

\usepackage{hyperref}
\hypersetup{
colorlinks=true,
urlcolor=blue,
linkcolor=blue,
citecolor=OliveGreen,
}
\usepackage{prettyref}

\newcommand{\Abs}[1]{\left\lvert#1\right\rvert}

\newcommand{\Esymb}{\mathbb{E}}
\newcommand{\Psymb}{\mathbb{P}}

\DeclareMathOperator*{\E}{\Esymb}

\DeclareMathOperator*{\ProbOp}{\Psymb}

\renewcommand{\Pr}{\ProbOp}

\DeclareMathOperator{\Tr}{Tr}
\DeclareMathOperator{\poly}{poly}
\DeclareMathOperator{\negl}{negl}

\newcommand{\C}{\mathbb C}

\newcommand{\cC}{\mathcal C}

\newcommand{\cR}{\mathcal R}

\numberwithin{equation}{section}

\newcommand{\hf}{\hat{f}}
\newcommand{\binset}{\{0,1\}}
\newcommand{\lab}{\mathsf{lab}}
\newcommand{\off}{\mathsf{off}}

\newcommand{\secp}{\lambda}
\newcommand{\prg}{\mathsf{G}}

\newcommand{\hF}{\hat{F}}

\newcommand{\ncz}{\mathbf{NC}^0}
\newcommand{\qnczf}{\mathbf{QNC}^0_{\text{\rm f}}}
\newcommand{\acz}{\mathbf{AC}^0}
\newcommand{\bqp}{\mathbf{BQP}}
\newcommand{\szk}{\mathbf{SZK}}
\newcommand{\ph}{\mathbf{PH}}

\usepackage{relsize}
\usepackage[font=footnotesize]{caption}

\usepackage{appendix}
\usepackage{amssymb}
\usepackage{amsbsy}
\usepackage{amsfonts}
\usepackage{latexsym}
\usepackage{amsmath}
\usepackage{stmaryrd}

\usepackage{qcircuit}

\ifnum\showauthornotes=0
\newcommand{\hnote}[1]{\textcolor{blue}{\small {\textbf{(Henry:} #1\textbf{) }}}}
\newcommand{\znote}[1]{\textcolor{red}{\small {\textbf{(Z:} #1\textbf{) }}}}
\else
\newcommand{\hnote}[1]{}
\newcommand{\znote}[1]{}
\fi

\definecolor{DarkBlue}{rgb}{0,0,0.8}  
\definecolor{DarkOrange}{rgb}{0.8,0.4,0}  
\def\mylinkcolor{DarkBlue}
\def\mysubcolor{DarkOrange}

\hypersetup{
	colorlinks=true, urlcolor=\mylinkcolor,
  linkcolor=\mylinkcolor, citecolor=\mylinkcolor}

\newcommand{\ket}[1]{|#1\rangle}
\newcommand{\bra}[1]{\langle#1|}

\newcommand{\ketbra}[2]{|#1\rangle\! \langle #2|}

\newcommand{\reg}[1]{\mathsf{#1}}
\newcommand{\qrv}[1]{\ensuremath{\boldsymbol{#1}}}

\renewcommand{\cal}[1]{\mathcal{#1}}
\renewcommand{\sf}[1]{\mathsf{#1}}
\newcommand{\scr}[1]{\mathscr{#1}}

\newcommand{\epr}{\mathsf{EPR}}

\newcommand{\win}{1}
\newcommand{\wout}{2}

\newcommand{\puretomixed}[1]{ \left \llbracket #1 \right \rrbracket}
\newcommand{\ub}[2]{{\color{\mysubcolor} \underbracket{\textcolor{black}{#1}}_{\textcolor{\mysubcolor}{#2}}}}
\newcommand{\ubket}[2]{{\color{\mysubcolor}\underbracket{\textcolor{black}{\ket{#1}}}_{\textcolor{\mysubcolor}{#2}}}}
\newcommand{\wh}[1]{\widehat{#1}}
\newcommand{\gadget}[1]{\wh{\sf{#1}}}

\newcommand{\corr}{\mathrm{corr}}

\newcommand{\Sim}{\mathsf{Sim}}

\newcommand{\Enc}{\mathsf{Enc}}
\newcommand{\Dec}{\mathsf{Dec}}
\newcommand{\CRE}{\mathsf{CRE}}
\newcommand{\QRE}{\mathsf{QRE}}
\newcommand{\QGC}{\mathsf{QGC}}
\newcommand{\CEnc}{\mathsf{CEnc}}
\newcommand{\CDec}{\mathsf{CDec}}
\newcommand{\CSim}{\mathsf{CSim}}

\newcommand{\ggate}{\mathsf{GateEnc}}

\newcommand{\evg}{\mathsf{GateEval}}

\newcommand{\winwire}{{\mathtt{inwire}}}
\newcommand{\woutwire}{{\mathtt{outwire}}}
\newcommand{\xz}{\{x,z\}}

\newcommand{\qma}{\mathbf{QMA}}
\newcommand{\np}{\mathbf{NP}}

\begin{document}

\title{Quantum Garbled Circuits}
\author{Zvika Brakerski~\thanks{Weizmann Institute of Science, \texttt{zvika.brakerski@weizmann.ac.il}. Supported by the Binational Science Foundation (Grant No.\ 2016726), and by the European Union Horizon 2020 Research and Innovation Program via ERC Project REACT (Grant 756482) and via Project PROMETHEUS (Grant 780701).} \and
Henry Yuen~\thanks{University of Toronto. \texttt{hyuen@cs.toronto.edu}. Supported by a NSERC Discovery Grant.}}

\date{}
\maketitle

\begin{abstract}
We present a garbling scheme for quantum circuits, thus achieving a decomposable randomized encoding scheme for quantum computation. Specifically, we show how to compute an encoding of a given quantum circuit and quantum input, from which it is possible to derive the output of the computation and nothing else. 

In the classical setting, garbled circuits (and randomized encodings in general) are a versatile cryptographic tool with many applications such as secure multiparty computation, delegated computation, depth-reduction of cryptographic primitives, complexity lower-bounds, and more. However, a quantum analogue for garbling general circuits was not known prior to this work. We hope that our quantum randomized encoding scheme can similarly be useful for applications in quantum computing and cryptography.

The properties of our scheme are as follows:
\begin{itemize}
		\item Our scheme has perfect correctness, and has perfect information-theoretic security if we allow the encoding size to blow-up considerably (double-exponentially in the depth of the circuit in the worst-case). This blowup can be avoided via computational assumptions (specifically, the existence of quantum-secure pseudorandom generators). In the computational case, the size of the encoding is proportional to the size of the circuit being garbled, up to a polynomial in the security parameter.
		
	\item The encoding process is decomposable: each input qubit can be encoded independently, when given access to classical randomness and EPR pairs.
	
	\item The complexity of encoding essentially matches the size of its output and furthermore it can be computed via a constant-depth quantum circuit with bounded-arity gates as well as quantum fan-out gates (which come ``for free'' in the classical setting). Formally this is captured by the complexity class $\qnczf$.	
\end{itemize}

To illustrate the usefulness of quantum randomized encoding, we use it to design a conceptually-simple zero-knowledge (ZK) proof system for the complexity class $\qma$. Our protocol has the so-called $\Sigma$ format with a single-bit challenge, and allows the inputs to be delayed to the last round. The only previously-known ZK $\Sigma$-protocol for $\qma$ is due to Broadbent and Grilo (FOCS 2020), which does not have the aforementioned properties.

\end{abstract}
 
\newpage

\tableofcontents

\section{Introduction}

A \emph{randomized encoding} (RE) of a function $f$ is another function $\hat{f}$, computed probabilistically, such that on every input $x$, the output $f(x)$ can be recovered from $\hat{f}(x)$, and no other information about $f$ or $x$ is conveyed by $\hat{f}(x)$. A trivial example of a RE of a function $f$ is $f$ itself. Things become much more interesting when computing $\hat{f}(x)$ is simpler in some way than computing $f(x)$; for example, $\hat{f}(x)$ could be computed via a highly parallel process even if evaluating $f(x)$ itself requires a long sequential computation.

REs are central objects in cryptographic research and have proven useful in a multitude of settings: the most famous example of a RE is Yao's garbled circuits construction~\cite{Yao86}, but it was only until the work of Applebaum, Ishai and Kushilevitz in \cite{AIK04,AIK06} that the formal notion of randomized encodings was presented. Applications of RE range from secure multi-party computation, parallel cryptography, verifiable computation, software protection, functional encryption, key-dependent message security, program obfuscation and more. We refer the readers to an extensive survey by Applebaum \cite{Applebaum17} for additional details and references. Interestingly, REs have also proved useful in recent circuit lower bounds \cite{ChenR20}.

A useful feature of many randomized encodings is \emph{decomposability}: this is where a function $f$ and a sequence of inputs $(x_1,\ldots,x_n)$ can be encoded in such a way that $\hf(x_1,\ldots,x_n) = (\hf_\off,\hf_1,\ldots,\hf_n)$ where $\hf_\off$ (called the ``offline'' part of the encoding) depends only on $f$ and the randomness $r$ of the encoding, and $\hf_i$ (the ``online'' part) only depends on $x_i$ and the randomness $r$. Such randomized encodings are called \emph{decomposable}.

A good illustration of the usefulness of decomposable REs (DREs) is the task of ``private simultaneous messages'' (PSM) introduced by Feige, Kilian and Naor~\cite{FKN94}. In a PSM protocol for computing a function $f$, a set of $n$ separated players each have an input $x_i$ and send a message $m_i$ to a referee, who then computes the output value $y = f(x_1, \ldots, x_n)$. The messages $m_i$ cannot reveal any information about the $x_i$'s aside from the fact that $f(x_1,\ldots,x_n) = y$ (formally, the messages $e_i$ can be simulated given $y$). The parties share a common random string $r$ that is independent of their inputs and unknown to the referee, and the goal is to accomplish this task using minimal communication.

Using a DRE such as garbled circuits, the parties can simply send an encoding of the function $f$ and their respective inputs respectively to the referee. In particular, using the point-and-permute garbled circuits scheme of Beaver, Micali and Rogaway~\cite{BMR90,Rogawaythesis}, it is possible to construct DREs with perfect decoding correctness and perfect simulation security for any function $f$ with complexity that scales with the formula size of $f$. Assuming the adversaries are computationally bounded, it is possible to reduce this complexity to scale polynomially with the circuit size of $f$. Thus, the PSM task can be performed efficiently using DREs.

In some cases, even non-decomposable REs can be useful. However, some other non-degeneracy condition should be imposed, since (as mentioned before) every function $f$ is trivially a non-decomposable RE of itself. For example, if for some measure of complexity, the complexity of computing the encoding $\hf$ is lower than the complexity of computing $f$, then this can be leveraged for blind delegated computation: a verifier who wishes to compute $f(x)$ can first compute the encoding of a function $g(x)$ that outputs a random string $r_0$ if $f(x) = 0$, and otherwise outputs $r_1$. The server can evaluate the encoding $\hat{g}(x)$ to obtain either $r_0$ or $r_1$, which the verifier decodes to determine $f(x)$. As long as the complexity of encoding $g(x)$ is less than $f(x)$, this yields a non-trivial delegation scheme.

Given the richness and utility of randomized encodings in cryptography and theoretical computer science, it is very natural to ask whether there exists a \emph{quantum} analogue of randomized encodings. Despite its appeal, this question has remained open, and as far as we know the notion was not even formally defined in the literature before.

\subsection{Quantum Randomized Encodings} 

In this paper we introduce the notion of randomized encodings in the quantum setting, propose a construction, and analyze it. Our definition is an adaptation of the classical one: the \emph{quantum randomized encoding (QRE)} of a quantum operation $F$ (represented as a quantum circuit) and a quantum state $\qrv{x}$ is another quantum state $\hat{F}(\qrv{x})$ satisfying two properties:
\begin{enumerate}
	\item (\emph{Correctness}). The quantum state $F(\qrv{x})$ can be \emph{decoded} from $\hat{F}(\qrv{x})$. 
	\item (\emph{Privacy}). The encoding $\hat{F}(\qrv{x})$ reveals no information about $F$ or $\qrv{x}$ apart from the output $F(\qrv{x})$. 
\end{enumerate}
The privacy property is formalized by saying there is a simulator that, given $F(\qrv{x})$, can compute the encoding $\hat{F}(\qrv{x})$. We also refer to $\hat{F}$ as the encoding of $F$. 

Furthermore, we also define what it means for a QRE to be \emph{decomposable}: the encoding $\hat{F}(\qrv{x})$ can be computed in a way that each qubit of the input $\qrv{x}$ is encoded independently, and the encoding takes in as input $\qrv{x}$, a classical random string $r$, and a sequence of EPR pairs $\qrv{e}$. \footnote{We recall that an EPR pair is the maximally entangled state $\frac{1}{\sqrt{2}} \Big( \ket{00} + \ket{11} \Big)$, the quantum analogue of a pair of classically correlated bits. } (See \Cref{sec:qredef} for a formal definition of (decomposable) QRE.)

For comparison of the notion of QRE with other cryptographic notions such as MPC, FHE and program obfuscation see Appendix~\ref{sec:comparisonwithnotions}.

We then present a construction of a decomposable QRE, which we call the \emph{Quantum Garbled Circuits} scheme:
\begin{theorem}[Main result, informal]
\label{thm:main-informal}
Suppose $\CRE$ is a classical DRE scheme with perfect correctness, information-theoretic (resp. computational) privacy, and polynomial time decoding. 
Then there exists a decomposable QRE scheme $\QGC$ with the following properties: 
\begin{enumerate}
	\item $\QGC$ has perfect correctness and polynomial-time decoding.
	\item $\QGC$ uses $\CRE$ as a black box, and has information-theoretic (resp. computational) privacy.
	\item If the encoding procedure of $\CRE$ can be computed in $\ncz$, then the encoding procedure of $\QGC$ can be computed in $\qnczf$ (i.e. the class of constant-depth quantum circuits with unbounded fan-out gates). 
\end{enumerate}
\end{theorem}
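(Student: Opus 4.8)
The plan is to compile a quantum computation into a \emph{fixed, input-independent} ``Clifford + magic state'' quantum skeleton acting on quantum-one-time-pad (QOTP) encrypted wires, together with a purely \emph{classical} bookkeeping function that tracks the Pauli keys; we then garble the classical part using $\CRE$ as a black box, while the quantum part is controlled entirely by Pauli and gate-teleportation identities. First I would put $F$ in canonical form: decompose it over $\{H,\mathrm{CNOT},T\}$ with computational-basis measurements, and inject each $T$ via a magic state $\ket{T}=T\ket{+}$, so that (after deferring measurements) $F(\qrv{x})$ is computed by appending $\ket{0}$ ancillas and magic states, applying a single Clifford circuit $\mathbf{C}$, and performing measurement-controlled corrections. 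Then apply a fresh QOTP $X^{a_w}Z^{b_w}$ to each wire $w$. Since a Clifford $C$ satisfies $C X^a Z^b = X^{a'}Z^{b'}C$ with $(a',b')$ an affine image of $(a,b)$, a Clifford ``gate gadget'' is just the gate $C$ itself together with a \emph{linear classical} key update; hence the whole skeleton $\mathbf{C}$ is applied directly to the encrypted wires, and post-$\mathbf{C}$ keys are a linear function of the initial ones.

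\textbf{The crux: non-Clifford gates.} Applying $T$ to $X^a Z^b\ket{\psi}$ yields $X^a S^a Z^b\,(T\ket{\psi})$ (and the $T$-injection gadget contributes a further measurement-dependent $S$-power): a \emph{non-Pauli} byproduct whose $S$-part depends on the secret key bit $a$ and on a fresh measurement outcome $m$. We cannot afford to let this byproduct degrade into a general Clifford, as the decoder would then have to learn and undo a Clifford that encodes the structure of $\mathbf{C}$, breaking privacy; so the conditional $S$-correction must be performed in place. We do this with a teleportation gadget whose resource states are prepared in the offline phase from EPR pairs and the encoder's own randomness (which includes $a$): for each possible value $c$ of the controlling measurement bit, the encoder prepares $(I\otimes S^{-(c\oplus a)})\ket{\epr}$, and at decode time the decoder (who legitimately holds $m$) teleports the wire through the $c=m$ resource, emerging with $T\ket{\psi}$ under a \emph{new} QOTP whose keys are a fixed classical function of $(a,b,m,\text{Bell outcome})$. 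Maintaining this invariant over all gates, the decoder runs a fixed quantum skeleton (apply Cliffords, consume magic states, do fixed Bell/computational measurements), the state stays QOTP-encrypted throughout, and every key — in particular on the output wires — is a classical function of the initial QOTP keys and the collected measurement outcomes $m^{(1)},\dots,m^{(k)}$.

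\textbf{Classical garbling, decomposability, correctness, privacy.} Let $g$ be the classical function that, with the initial keys and $\mathbf{C}$ hardwired, maps the measurement outcomes to the final output-wire QOTP keys. The $\QGC$ offline encoding consists of the input-independent quantum pieces (magic states, the $S$-correction resources, and one EPR-half per input wire) together with $\CRE_\off(g)$; the online encoding for input qubit $\qrv{x}_i$ teleports $\qrv{x}_i$ through its EPR pair into the offline structure and outputs the resulting Bell outcomes, which (with the later measurement outcomes) are handed to $\CRE$ as its online inputs. To decode: run the fixed skeleton, collect all outcomes, run $\CRE$-decoding to recover the output keys, and un-QOTP the output wires. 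Perfect correctness is then immediate from the Pauli/teleportation identities and perfect correctness of $\CRE$. For privacy I would build the $\QGC$ simulator as a quantum wrapper around $\CSim$: given $F(\qrv{x})$, apply a fresh QOTP to it, produce magic states and $S$-correction resources from fresh randomness (all input-independent), sample the measurement outcomes with the correct marginal, and call $\CSim$ on the fresh output keys to obtain a fake classical garbling. This is faithful because the only input-dependence in the real encoding flows either through the output wires — killed by the QOTP twirl — or through the garbled $g$, which is hidden because $\CRE$ is invoked as a black box; hence $\QGC$ inherits exactly the information-theoretic, resp.\ computational, privacy of $\CRE$.

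\textbf{Complexity, and the main obstacle.} For the complexity claim, QOTP, magic-state preparation, and the EPR-based $S$-correction resources are each constant-depth quantum circuits on disjoint blocks of qubits; and $\CRE_\off(g)$, when the encoding procedure of $\CRE$ is in $\ncz$, is produced by a constant-depth \emph{classical} circuit, hence by a constant-depth quantum circuit once we insert fan-out gates to copy each shared random bit to the many gadgets and $\CRE$-input slots that read it — fan-out being free classically but needing to be charged for quantumly, which is exactly what $\qnczf$ captures. I expect the crux above to be the main obstacle: Clifford gates are essentially free under QOTP, so the entire difficulty is in making non-Clifford gates compatible with a non-adaptive, input-independent quantum skeleton plus classical garbled bookkeeping — i.e.\ offloading the conditional $S$-correction onto EPR-based resources whose ``selector'' is a public measurement outcome while the secret-key dependence is pushed into $g$ — and doing so while keeping the scheme decomposable, each input qubit encoded on its own and glued into the offline structure through a single EPR pair.
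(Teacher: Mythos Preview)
Your overall architecture---QOTP-encrypt the wires, push Cliffords through for free, inject $T$-gates via magic states, and garble the classical key-tracking with $\CRE$---is the right shape, and the paper itself sketches a close relative of this in its ``simple QRE'' variant. But the specific mechanism you give for the $S$-correction has a genuine gap that breaks privacy and, for any circuit with more than one layer of $T$-gates, breaks the construction outright.

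You propose that the encoder prepare, for each $T$-gate, the pair of resources $(I\otimes S^{-(c\oplus a)})\ket{\epr}$ for $c\in\{0,1\}$, where $a$ is the $X$-key on the wire entering that $T$-gate, and that the decoder select the $c=m$ one. Two problems. First, the ordered pair handed to the decoder is either $(\ket{\epr},(I\otimes S^{-1})\ket{\epr})$ or $((I\otimes S^{-1})\ket{\epr},\ket{\epr})$; these are perfectly distinguishable, so the decoder reads off $a$ directly and privacy fails. Second, and more fundamentally: for any $T$-gate not in the first layer, the incoming key $a$ is itself a function of \emph{earlier} measurement and Bell outcomes (from previous $T$-gadgets and your own $S$-correction teleportations), which the encoder does not know at encoding time. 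So the encoder simply cannot build these resources. This is exactly the ``catch-22'' the paper names: you cannot correct the $S$-error without the outcome, and you cannot safely proceed to the next gate with an uncorrected $S$-error on the wire. Your scheme does not escape the loop---it just hides the dependence in a quantum resource the encoder cannot prepare. A single global $\CRE$ on the final-key function $g$ does not help here, because the decoder needs guidance for the \emph{intermediate} $S$-corrections, not just the final un-pad.

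The paper's fix is different in kind. The encoder never bakes secret keys into quantum resources. Instead, for each gate it applies a key-independent circuit $\Lambda_1(\ell)$ to the connecting EPR halves, followed by a fresh random element $A$ of a small ``randomization group''; it then uses $\CRE$ \emph{per gate} to encode the classical map from the previous gate's teleportation labels to a canonical description of the correction unitary $\Lambda_2(R,\ell,s,t)\cdot A^{-1}$. The decoder, holding the previous labels, decodes this $\CRE$, applies the resulting Clifford (which looks uniformly random by the group-randomization, so leaks nothing), and then a fixed $\Lambda_3$; the identity $\Lambda_3\Lambda_2\Lambda_1 = \sf{TP}\cdot R$ makes this equivalent to ``correct, then teleport.'' The essential idea you are missing is that the classical garbling must be \emph{threaded gate-by-gate}---the output labels of one teleportation are the input labels to the next gate's $\CRE$---and every quantum object the decoder receives must be independent of all secret keys, randomized by a fresh group element rather than by $a$.
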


(See \Cref{sec:qreresult} for a formal statement of our main result). We elaborate on the properties of the QRE scheme below. It assumes the existence of a classical DRE scheme $\CRE$ with specific correctness, privacy, and complexity properties; examples of such schemes can be found in \cite{BMR90,Rogawaythesis} (also see the survey in~\cite{Applebaum17}). In the case of computational privacy, we assume the existence of quantum-secure one-way functions. 

\paragraph{Correctness.} The correctness property asserts that from an encoding $\hat{F}(\qrv{x})$, it is possible to decode the output state $F(\qrv{x})$ with probability $1$. This is inherited from the perfect correctness property of $\CRE$. Furthermore, the decoding procedure preserves quantum correlations with side information: if $(\qrv{x},\qrv{y})$ denotes the joint state of the input $\qrv{x}$ and some auxiliary quantum state $\qrv{y}$ which may be \emph{entangled} with $\qrv{x}$, then the joint state of the output and side information after decoding is $(F(\qrv{x}),\qrv{y})$. 

The decoding procedure also takes polynomial time in the size of encoding. This inherits the polynomial-time decoding complexity of $\CRE$.

\paragraph{Privacy.} 
The privacy property implies that there exists a quantum algorithm $\Sim$ such that $\Sim(F(\qrv{x}))$ is indistinguishable from the encoding $\hat{F}(\qrv{x})$, and furthermore $\Sim$ runs in time polynomial in the size of the encoding $\hat{F}(\qrv{x})$. The quantum scheme inherits its privacy from the classical scheme in a black box way: if $\CRE$ is secure against all (quantum) distinguishers of size $S$, then $\QGC$ is secure against distinguishers of size at most $S - \Lambda$ where $\Lambda$ is the complexity of decoding $\hat{F}(\qrv{x})$ (which is polynomial in the size of $\hat{F}(\qrv{x})$). Note that perfect information-theoretic privacy corresponds to privacy against distinguishers of all sizes. 

The privacy property also holds even when considering entangled side information: the joint state $(\Sim(F(\qrv{x})),\qrv{y})$ (which is computed by applying $F$, then $\Sim$ to the input $\qrv{x}$) is indistinguishable from $(\hat{F}(\qrv{x}),\qrv{y})$.

\paragraph{Size of the Encoding.} 
The \emph{size} of the encoding of $\QGC$ is the number of qubits in  $\hat{F}(\qrv{x})$, which depends on the circuit size and depth of $F$, and also on the size of the classical encodings computed in the scheme $\CRE$. For example, the classical DRE schemes from \cite{BMR90,Rogawaythesis} with computational privacy have encoding size that scales polynomially with the circuit size of $f$. This translates to the size of $\hat{F}$ being polynomial in the circuit size of $F$. On the other hand, the known classical decomposable RE schemes with information-theoretic privacy all have encodings $\hat{f}$ that grow \emph{exponentially} with the circuit \emph{depth} of the function $f$. Using such a scheme as our $\CRE$, the size of the corresponding quantum encoding $\hat{F}$ in our construction might grow even doubly-exponentially with the circuit depth of the quantum operation $F$. Note that this is the worst-case, the exact growth depends on the composition of gates in the circuit (see technical overview).\footnote{\label{fn:bug}We note that an earlier version of this work claimed that the growth in the information-theoretic setting is only (single) exponential. However we discovered an error in our original proof, and the correct analysis turns out to imply the aforementioned parameters. We discuss the causes for this in our technical overview below.}
However we note that even in this case, the number of EPR pairs used in the encoding $\hat{F}(\qrv{x})$ remains linear in the circuit size of $F$.

\paragraph{Complexity and Locality of Encoding.} 
The decomposability of the encoding $\hat{F}(\qrv{x})$ is analogous to the decomposability property of classical DRE schemes, where the encoding can be expressed as the following concatenation:
\[
	\hat{F}(\qrv{x}; r, \qrv{e}) = (\hat{F}_\off(r, \qrv{e}), \hat{F}_1(\qrv{x}_1; r, \qrv{e}), \ldots, \hat{F}_n(\qrv{x}_n; r, \qrv{e}))
\]
where we indicate the dependency of the encoding on randomness string $r$ and a sequence of EPR pairs $\qrv{e}$. The state $\hat{F}_\off(r, \qrv{e})$ is called the ``offline'' part of the encoding that depends on $F$ but not $\qrv{x}$, and $\{ \hat{F}_j(\qrv{x}_j; r, \qrv{e}) \}_j$ forms the ``online'' part of the encoding where $\qrv{x}_j$ is the $j$-th qubit of the $n$-qubit state $\qrv{x}$. 

The encoding procedure of $\QGC$ is highly parallelizable. Suppose that the encoding procedure of $\CRE$ is computable in $\ncz$ (which is the case for the randomized encoding schemes of \cite{BMR90,Rogawaythesis}). Then the offline part $\hat{F}_\off(r;\qrv{e})$ can be computed by a $\qnczf$ circuit acting on $(r,\qrv{e})$; a $\qnczf$ circuit is a constant-depth circuit composed of single- and two-qubit gates, as well as \emph{fan-out gates} with unbounded arity, which implements the unitary $\ket{x,y_1,\ldots,y_n} \to \ket{x,x \oplus y_1,\ldots,x \oplus y_n}$. 
Similarly, $\hat{F}_j(\qrv{x}_j; r, \qrv{e})$ can be computed by a $\qnczf$ circuit acting on $r$, a constant number of qubits of $\qrv{e}$, and the qubit $\qrv{x}_j$ (but does not depend on the gates of $F$). 

We note that in the quantum setting it is not so clear what is the ``correct'' analogue for the complexity class $\ncz$, which is the class of functions that can be computed in constant-depth, or equivalently, functions where each output bit depends only on a constant number of input bits. This implicitly assumes that input bits can be replicated an arbitrary number of times (for example, all of the output bits may depend on one input bit). However, due to the No-Cloning Theorem we cannot assume that input qubits can be copied, and thus it seems reasonable to consider constant-depth quantum circuits augmented with fan-out gates. However, the fan-out gate does appear to yield unexpected power in the quantum setting: for example, the parity gate can be computed in $\qnczf$, while classically it is even outside $\acz$ (see e.g.~\cite{M99,HS05}). Nevertheless, $\qnczf$ circuits appear to be weaker than general polynomial-size quantum circuits and it may be reasonable to assume that it will be possible to implement the fan-out gate in ``constant depth'' in some quantum computing architectures (see \cite{HS05} for discussion).

\paragraph{Classical Encoding for Classical Inputs.} A desirable property that comes up in the quantum setting is to allow some of the parties to remain classical, even when performing a quantum task. In the RE setting, we would like to allow parties with a classical inputs to compute their encoding in a classical manner (and in particular with access only to the classical part of the randomness/EPR string). Our scheme indeed allows this type of functionality, and therefore allows applications such as quantum PSM (as discussed above) even when some of the parties are classical. Nevertheless, the encoding (and in particular the offline part that depends on the circuit) requires quantum computation.

Could the encoding of a quantum circuit and classical input be made entirely classical? As we will discuss later, this could be used to achieve general indistinguishability obfuscation for quantum circuits. However, there are certain complexity-theoretic constraints on this possibility: Applebaum showed that any language decidable by circuits that admit efficient RE with information-theoretic security falls into the class $\szk \subseteq \ph$~\cite{Applebaum14szk}. Therefore, if we could achieve QRE with statistical security for polynomial size quantum circuits, this would imply $\bqp \subseteq \szk$. On the other hand, the oracle separation between $\bqp$ and $\ph$ by Raz and Tal~\cite{RT19} suggests that this inclusion is unlikely.\footnote{We thank Vinod Vaikuntanathan for pointing this out to us.} As presented, our Quantum Garbled Circuits scheme achieves statistical security for all quantum circuits of depth $O(\log \log n)$, but with small modifications can handle an interesting subclass of quantum circuits of depth $O(\log n)$ that is not obviously classically simulable, and thus languages computed by this subclass are not obviously contained in $\szk$.\footnote{This subclass includes circuits such that the first $O(\log \log n)$ layers can have arbitrary $2$-qubit gates, and the remaining layers are all Clifford gates.} 
This suggests that obtaining entirely classical encoding of quantum circuits cannot be achieved with statistical security; whether it can be achieved with computational security remains an intriguing open problem.

\subsection{Other Related Work} 
We mention some related work on adapting the notion of randomized encodings/garbled circuits to the quantum setting. In \cite{KashefiW17}, Kashefi and Wallden present an interactive, multi-round protocol for verifiable, blind quantum computing, that is inspired by Yao's garbled circuits. The motivation for their protocol comes from wanting a protocol where a weak quantum client delegates a quantum computation to a powerful quantum server, while still maintaining verifiability.

In a recent paper \cite{Zhang20Succinct} (which builds on prior work \cite{Zhang19}), Zhang presents a blind delegated quantum computation protocol that is (partially) ``succinct'': it is an interactive protocol with an initial quantum phase whose complexity is independent of the computation being delegated, and the second phase is completely classical (with communication and round complexity that depends on the size of the computation). The security of the protocol is proved in the random oracle model. The construction and analysis appear to use ideas from classical garbled circuits. 

Both the work of \cite{KashefiW17} and \cite{Zhang20Succinct} focus on protocols for delegated quantum computation, and both protocols involve a large number of rounds of interaction that grow with the size of the computation being delegated. In contrast, the focus of our work is on studying the notion of quantum randomized encodings (in which the number of rounds of interaction is constant). 

Finally, we mention that while our notion of quantum randomized encoding has many similarities with other commonly studied cryptographic notions such as secure multiparty computation (MPC) and homomorphic encryption, QRE is a distinct notion with different goals. We provide a more detailed comparison in \Cref{sec:comparisonwithnotions}.

\subsection{Application: A New Zero-Knowlege $\Sigma$-Protocol for $\qma$}

To highlight the usefulness of the notion of QRE, we present an application to designing zero-knowledge (ZK) protocols for the complexity class $\qma$. Specifically we show how to easily obtain $3$-round ``sigma'' (abbreviated by $\Sigma$) protocols for $\qma$ using QRE as a black box, and in fact our construction achieves features that were not known before in the literature. We elaborate more below. 

Zero-knowledge proofs~\cite{goldwasser1989knowledge} is one of the most basic and useful notions in cryptography. Essentially, it is an interactive proof system where the verifier is guaranteed to learn nothing beyond the validity of the statement being proven. This is formalized by showing that for any accepting instance and any (possibly malicious) verifier, there exists a simulator which can generate a view which is indistinguishable from the actual view of the verifier in an interaction with an honest prover. The notion of indistinguishability depends on whether we are considering computational or statistical zero-knowledge.

In the classical setting, the canonical ZK protocol for $\np$ was presented by Goldreich, Micali, and Wigderson~\cite{goldreich1991proofs}. The protocol has a simple $3$-message structure (known as the $\Sigma$ format): the prover sends a message, the verifier sends a uniformly random challenge, the prover responds, and the verifier decides to accept or reject based on the transcript of the communication. Aside from their simplicity, $\Sigma$-protocols are also desirable because it one can then use the Fiat-Shamir heuristic~\cite{fiat1986prove} to make the protocol non-interactive, for example. 
In some cases, it is useful to have a $\Sigma$-protocol with a single challenge bit. In the classical case this implies a notion known as ``special soundness'' which is useful, for example, for constructing non-interactive zero-knowledge (NIZK) protocols~\cite{BFM88,FLS90}. Another useful feature is ``delayed input'', where the prover can produce the first message without any knowledge of the instance or the witness. 
This is useful for confirming well-formedness of the execution of a protocol while minimizing the number of extra rounds of communication. In the classical setting Blum's Graph-Hamiltonicity protocol \cite{Blum86,FLS90} has these properties and is thus often used.

Zero-knowledge proof systems for $\qma$, the quantum analogue of $\np$, have only been studied fairly recently, and known results are still few~\cite{broadbent2016zero,vidick2020classical,coladangelo2020non,broadbent2019zero,bitansky2020post}. Recently, Broadbent and Grilo~\cite{broadbent2019zero} presented the first ZK $\Sigma$-protocol for $\qma$, achieving constant soundness error.  
Their protocol relies on a reduction to a special variant of the local Hamiltonians problem. It requires multi-bit challenges and does not seem to support delayed inputs.

In this work, we show a simple approach for obtaining a $\Sigma$-protocol with a single-bit challenge and delayed-input functionality, using quantum randomized encodings. Like~\cite{broadbent2019zero}, our protocol also has constant soundness error. In contrast to \cite{broadbent2019zero}, we do not require a reduction to a specific $\qma$-complete problem. Our approach is similar to constructions of ZK protocols from randomized encoding in the classical setting \cite{HazayV16}.

Conceptually, the protocol is simple. Recall that a $\qma$ problem $L$ is defined by a (quantum polynomial time) verifier circuit $V$, which takes a classical instance $x$ and a quantum witness $\qrv{w}$ and decides (with all but negligible probability) whether $x$ is a yes or no instance. We generically create a zero-knowledge protocol, where the basic idea is as follows. The prover creates a QRE of $V$, and sends it to the verifier, together with commitments to the labels and the randomness used to generate the QRE. The verifier sends a challenge bit $b$. Now for $b=0$ simply all the commitments are opened and the verifier checks that indeed the proper circuit was encoded, if $b=1$ then only the labels corresponding to the actual $x,\qrv{w}$, and the verifier can thus check the value of $V$ on them. The actual protocol is slightly more complicated since the QRE only has ``labels'' for classical inputs, and $\qrv{w}$ needs to be hidden even given the labels. Therefore $\qrv{w}$ is treated slightly differently than described above (essentially ``teleported'' into the circuit).

\subsection{Future Directions and Open Problems}

We end this section with several examples of future directions and open problems.

\begin{enumerate}
	\item \textbf{Applications of QRE}. We presented one application in the form of a simple zero-knowledge protocol for $\qma$. Given the variety of applications of RE in classical cryptography, we anticipate that there is similarly many analogous applications in the quantum setting. We elaborate on several potential applications in \Cref{sec:otherapps}.
	
	\item \textbf{Obtain statistically-private QRE for all log-depth circuits}. Our information-theoretic QRE has overhead that is \emph{doubly-exponential} in the depth of the circuit being encoded (although as mentioned it should be possible to encode certain classes of circuits with ``only'' an exponential overhead). Thus there is a gap between what is achievable with classical RE (where it is possible to encode all log-depth circuits with statistical privacy). Can information-theoretic QRE be achieved for all log-depth circuits, or is the gap inherent? We note that it is not known whether statistically secure RE can be performed for all polynomial-size \emph{classical} circuits.
	
	\item \textbf{Completely classical encoding for quantum circuits}. Can the encoding of a quantum circuit be made completely classical? This would be very useful for obtaining obfuscation for quantum circuits, assuming classical obfuscation. 
\end{enumerate}

\subsection{Paper Organization}

Section~\ref{sec:overview} contains a technical overview of our contribution. Section~\ref{sec:prelims} contains notation and preliminaries about quantum computation and classical randomized encoding. In Section~\ref{sec:qredefresult} we define the notion of quantum randomized encoding, state some of its basic properties and state our main result. The details of our new zero-knowledge $\Sigma$-protocol appear in Section~\ref{sec:zk}. Section~\ref{sec:construction} contains our Quantum Garbled Circuits construction and Section~\ref{sec:analysis} contains proofs of correctness and privacy. We note that there is no dependence at all (or vice versa) between the last two sections and Section~\ref{sec:zk}, and the order of reading them should be up to the reader's preference.

\subsection*{Acknowledgments} We thank Sanjam Garg and Vinod Vaikuntanathan for insightful discussions. We thank Chinmay Nirkhe for lengthy discussions about quantum garbled circuits. We thank Nir Bitansky and Omri Shmueli for discussions on quantum zero-knowledge. We thank anonymous conference reviewers for their helpful feedback. We thank the Simons Institute for the Theory of Computing -- much of this work was performed while the authors were visiting the Institute as a part of the Summer Cluster on Quantum Computing (2018) and the Semester Programs on Quantum Computing and Lattice Cryptography (2020).

\section{Overview of Our Construction}
\label{sec:overview}

We provide an overview of our techniques, we refer to the technical sections for the formal presentation and proofs. In what follows, we use bolded variables such as $\qrv{q}$ to denote density matrices, and for a unitary $U$ we write $U(\qrv{q})$ to denote the state $U \qrv{q} U^\dagger$ (see \Cref{sec:prelims} for more details about notation).

\subsection{Our Approach: Quantum Computation via Teleportation} The basis of our approach to quantum RE is \emph{computation by teleportation}, an idea that is common to many prior results on protocols for delegated quantum computation and computing on encrypted data~\cite{broadbent2009universal,BJ15,DSS16}. We briefly review this concept.

Recall that quantum teleportation allows one party to transmit a qubit $\qrv{q}$ to another party using only classical communication and a preshared EPR pair $\qrv{e} = (\qrv{e}_\win,\qrv{e}_\wout)$. Specifically, the sender performs a measurement on the qubit $\qrv{q}$ and $\qrv{e}_\win$ to obtain two (uniformly distributed) classical bits $a,b$ (often called ``teleportation keys''). The receiver's qubit $\qrv{e}_\wout$ collapses to $X^a Z^b(\qrv{q})$ where $X$, $Z$ are the bit-flip and phase-flip Pauli matrices respectively. Using the teleportation keys $(a,b)$ the original qubit $\qrv{q}$ can be recovered. 

Teleportation can be used to apply gates: let $G$ be a single-qubit unitary (the generalization to multi-qubit unitaries is straightforward), and suppose that the sender and receiver share the state $(\qrv{e}_\win,G(\qrv{e}_\wout))$ instead, in which $G$ is applied to the second half of an EPR pair. When the sender teleports the qubit $\qrv{q}$ and obtains teleportation keys $(a,b)$, the resulting state on the receiver's side is $G ( X^a Z^b(\qrv{q}))$. If $G$ is a Clifford gate (i.e. a unitary that normalizes the Pauli group), then this is equal to $X^{a'} Z^{b'} (G(\qrv{q}))$ for some updated keys $(a',b')$ that are a deterministic function of $(a,b)$ and $G$. We call $X^{a'} Z^{b'}$ the \emph{Pauli error} on the state.

This already suggests a method of quantum randomized encoding for the class of Clifford circuits. Let $C$ be a circuit consisting of gates $G_1,\ldots,G_m$. The encoding of the circuit $C$ and an $n$-qubit quantum input $\qrv{x}$ can be computed in the following way:
\begin{enumerate}
	\item Generate EPR pairs $\qrv{e}^w = (\qrv{e}^w_\win,\qrv{e}^w_\wout)$ for each wire $w$ of the circuit $C$.\footnote{One can think of a \emph{wire} as a line segment in a circuit diagram in between the gates, as well as the segments for the inputs/output qubits.}
	\item For each gate $G_i$, if the input wires as specified by circuit $C$ are $v_1,v_2$ (if $G_i$ is a two-qubit gate, for example), then apply $G_i$ to the ``second halves'' $(\qrv{e}^{v_1}_\wout,\qrv{e}^{v_2}_\wout)$ of the corresponding EPR pairs. Note that after this operation the qubits $(\qrv{e}^{v_1}_\wout,\qrv{e}^{v_2}_\wout)$ now store the output of $G_i$.
	\item If wire $v$ is connected to wire $w$ via some gate, perform the teleportation measurement on the qubits $(\qrv{e}^{v}_\wout,\qrv{e}^{w}_\win)$ to obtain classical teleportation keys $(a_{vw},b_{vw})$.
	\item If wire $w$ is the $i$-th input wire to circuit $C$, then perform the teleportation measurement on qubits $(\qrv{x}_i,\qrv{e}^w_\win)$ to obtain classical teleportation keys $(a_i,b_i)$.
	\item Compute from all the intermediate teleportation keys $(a_i,b_i)_i $ and $(a_{vw},b_{vw})_{v,w}$ the final teleportation keys $(a_j',b_j')$ corresponding to the $j$-th output qubit, for each $j$. The final teleportation keys are a deterministic function $f_\corr$ of all the intermediate teleportation keys, as well as the gates $G_1,\ldots,G_m$. 
\end{enumerate}
Each of the teleportation operations will yield uniformly random teleportation keys for each pair of connected wires, inducing Pauli errors that accumulate as the teleported state ``moves'' through the circuit. Since all gates are Clifford, the Pauli errors get adjusted in a deterministic way, and the resulting state in the qubits $(\qrv{e}^w_\wout)$ for output wires $w$ will be $X^{a_1'} Z^{b_1'} \otimes \cdots \otimes X^{a_n'} Z^{b_n'} (C(\qrv{x}))$. This output state, along with the final teleportation keys $(a_j',b_j')_j$, yields a QRE of circuit $C$ and input $\qrv{x}$, because the final state $C(x)$ can be recovered from this, and it yields no information about the gates or the original input $\qrv{x}$ (as long as the intermediate teleportation keys are not revealed). The \emph{quantum} complexity of this encoding is quite low: preparing the EPR pairs, applying the gates, and applying the teleportation measurements can be parallelized and thus performed in constant-depth. However the \emph{classical} complexity of this encoding is dominated by the complexity of computing the final teleportation keys $(a_j',b_j')_j$, which takes time that is linear in the size of the circuit $C$. 

This complexity issue can be solved by leveraging \emph{classical randomized encodings} (CREs): the encoder, instead of computing $(a_j',b_j')_j$ itself, computes a randomized encoding $\hat{f}_\corr(\vec{k})$ of the function $f_\corr$ and intermediate teleportation keys $\vec{k}$. Using a decomposable RE scheme such as (classical) garbled circuits, it is possible to compute $\hat{f}_\corr$ using a constant-depth circuit; this encoding corresponds of an offline part $\hf_{\corr,\off}$ and an online part that consists of \emph{labels} for each bit of the teleportation keys $\vec{k}$. Thus the overall quantum encoding of $C(\qrv{x})$ will be the quantum state $X^{a_1'} Z^{b_1'} \otimes \cdots \otimes X^{a_n'} Z^{b_n'} (C(\qrv{x}))$, along with the CRE $\hat{f}_\corr(\vec{k})$. The decoder can compute from the CRE the final teleportation keys, and then recover $C(\qrv{x})$.

We see that this yields a simple QRE for Clifford circuits. If the CRE used is decomposable, the QRE is decomposable as well: observe that the input qubits $\qrv{x}$ are encoded separately from the encoding of the circuit (the input teleportation measurements and the computation of the input labels for $\hat{f}_\corr$ can be done independently). Furthermore, the QRE has information-theoretic (resp. computational) privacy if the CRE has information-theoretic privacy (resp. computational).

\subsection{The Challenge: Going Beyond Clifford Gates}

The real challenge comes from dealing with the case of non-Clifford gates in the circuit (such as the $T = \begin{pmatrix} 1 & 0 \\ 0 & e^{i \pi/4} \end{pmatrix}$ gate).\footnote{Recall that when augmenting the Clifford group with any non-Clifford such as the $T$ gate, the resulting set of gates is universal for quantum computation.} The QRE described above does not work when one of the gates is a $T$ gate; this is because the gate teleportation protocol induces a non-Pauli error: $T( X^a Z^b (\qrv{q})) = X^{a'} Z^{b'} P^{a'} (T(\qrv{q}))$, where $P = \begin{pmatrix} 1 & 0 \\ 0 & i \end{pmatrix}$ is the phase gate (a Clifford gate). Thus we no longer have the invariant that the intermediate states of the teleportations are the masked by Pauli errors. This is problematic because the errors that are induced via the sequence of teleportations will no longer be simple to compute classically. 

Instead, a phase error induced by a gate teleportation should be removed before the next teleportation. However, there appears to be a catch-22: in order to know whether there is a phase error, a teleportation measurement needs to be performed to get the keys $(a,b)$. On the other hand, the teleportation can only be performed if one was certain that there was no phase error from previous teleportations! If the encoder wants to avoid performing a sequential computation, it appears that both the teleportation measurements and the corresponding Pauli/phase error corrections have to be performed by the evaluator -- and this must be done in a way that does not violate the privacy of the encoding. 

We now describe the key ideas used in our Quantum Garbled Circuits scheme to handle these issues.

\paragraph{Encrypted Teleportation Gadgets.} First, to allow the teleportation measurements to be performed by the evaluator in a manner that maintains the privacy of the encoding, the encoder will apply \emph{encrypted teleportation gadgets} between the connected EPR pairs. For simplicitly assume that $G$ is a single-qubit gate in the circuit that connects wire $v$ to wire $w$. The encoder applies gate $G$ to the EPR qubit $\qrv{e}^v_\wout$ as in the Clifford encoding, but instead of performing the teleportation measurements on the pair $(\qrv{e}^v_\wout,\qrv{e}^w_\win)$, the encoder applies the following circuit to the two qubits as well as additional ancilla:\footnote{If $G$ is a multiqubit gate, then the encoder will apply $G$ to qubits $(\qrv{e}^{v_1}_\wout,\ldots,\qrv{e}^{v_p}_\wout)$ before applying the teleportation gadget to each $(\qrv{e}^{v_i}_\wout,\qrv{e}^{w_i}_\win)$.}
\begin{figure}[H]
  \centering
  $
        \Qcircuit @C=0.6em @R=0.8em {
          \qrv{e}^v_\wout  & & &  & \qw & \ctrl{3} & \gate{H}  & \qw & \ctrl{1} & \gate{X^{s_x}} & \gate{Z^{s_z}} & \qw \\
          \ket{0 \cdots 0} & & & & {/} \qw  & \qw & \qw & \gate{X(\ell_{z,0})} & \gate{X(\ell_{z,0} \oplus \ell_{z,1})} & \qw & \qw & \qw  \\
          \ket{0 \cdots 0}  & & & & {/} \qw & \qw & \qw     & \gate{X(\ell_{x,0})} & \gate{X(\ell_{x,0} \oplus \ell_{x,1})} & \qw & \qw & \qw  \\
          \qrv{e}^w_\win & & & & \qw & \targ & \qw  & \qw & \ctrl{-1} & \gate{X^{t_x}} & \gate{Z^{t_z}} & \qw \\          
        }
        $
    \caption{Encrypted teleportation gadget.}
    \label{fig:teleportation-gadget-intro}
\end{figure}
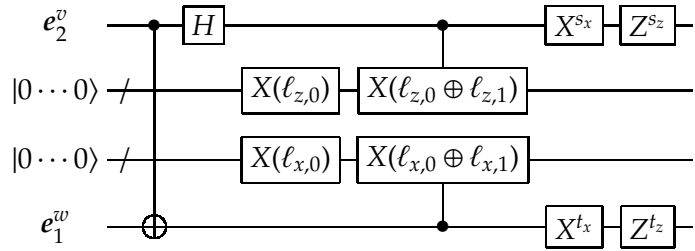
Here, $(\ell_{z,0},\ell_{z,1},\ell_{x,0},\ell_{x,1})$ are random strings in $\{0,1\}^\kappa$ (we will say in a moment where they come from), and $s_x,s_z,t_x,t_z$ are uniformly random bits. For a string $r \in \{0,1\}^\kappa$, the gate $X(r)$ denotes the $\kappa$-qubit gate that applies a bit-flip gate $X$ to qubit $i$ if $r_i = 1$. 

The functionality of this circuit is to perform the teleportation measurement on $(\qrv{e}^v_\wout,\qrv{e}^w_\win)$, but instead of obtaining teleportation keys $(a,b) \in \{0,1\}^2$ as usual, the middle two wire-bundles yield \emph{teleportation labels} $(\ell_{z,a}, \ell_{x,b})$ which indicate the Pauli error $X^b Z^a$ on the teleported qubit.\footnote{The ``label'' terminology is inspired by how random labels are used to encrypt the contents of each wire in classical garbled circuits.} As long as the randomness used to choose $(\ell_{z,0},\ell_{z,1},\ell_{x,0},\ell_{x,1})$ and the bits $s_x,s_z,t_x,t_z$ are kept secret from the evaluator, then this circuit completely hides all information about the teleportation keys $(a,b)$, and thus the teleported qubit is completely randomized (from the evaluator's point of view).

\paragraph{Switching The Order of Correction and Teleportation.} There still is the issue that before applying the encrypted teleportation gadget to a specific pair $(\qrv{e}^v_\wout,\qrv{e}^w_\win)$, the input qubit may have a Pauli/phase error that needs correcting. For example, imagine that the qubit $\qrv{e}^v_\wout$, originally the second half of an EPR pair, now stores a qubit $X^a Z^b(\qrv{q})$ due to a previous teleportation. When we apply the gate $G$ (which we again assume is a single-qubit gate for simplicity) to $\qrv{e}^v_\wout$, we obtain the following correction: $G(X^a Z^b(\qrv{q})) = R (G (\qrv{q}))$ where $R$ is a Pauli or phase correction. We need to avoid having the correction $R$ before applying the teleportation gadget.

We show that we can essentially \emph{switch} the order of the two operations: instead of performing the Pauli/phase correction and then applying the encrypted teleportation gadget, we show that a circuit similar to the teleportation gadget can be applied \emph{first} by the encoder (obliviously to whether the correction $R$ was needed or not), and \emph{then} the consequences of this bold move can be dealt with by the encoder. More precisely, if we let $\sf{TP}_{\ell,s,t}$ denote the encrypted teleportation gadget, then there exist unitaries $\Lambda_1,\Lambda_2,\Lambda_3$ such that the following circuit identity holds:
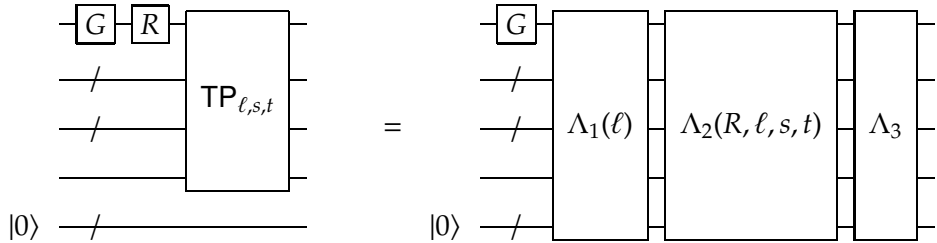
\begin{figure}[H]
  \centering
  $
        \Qcircuit @C=0.6em @R=0.8em {
            & & & \gate{G} & \gate{R} & \multigate{3}{\sf{TP}_{\ell,s,t}} & \qw &  	& && &	 & & & &  & & \gate{G} & \multigate{4}{\Lambda_1(\ell)}  & \multigate{4}{\Lambda_2(R,\ell,s,t)} &\multigate{4}{\Lambda_3} & \qw \\
           &  & & {/} \qw & \qw & \ghost{\sf{TP}_{\ell,s,t}} & \qw  			   & && 	& &		& &  & &   & & {/}\qw  & \ghost{\Lambda_1(\ell)} 		& \ghost{\Lambda_2(R,\ell,s,t)} & \ghost{\Lambda_3} & \qw \\
            &  & & {/} \qw & \qw & \ghost{\sf{TP}_{\ell,s,t}}  & \qw &  &&  & & = & & & & & & {/} \qw & \ghost{\Lambda_1(\ell)}	& \ghost{\Lambda_2(R,\ell,s,t)} &   \ghost{\Lambda_3} & \qw \\
           & &  & \qw &  \qw & \ghost{\sf{TP}_{\ell,s,t}} & \qw  & &&& &	&&  & & & & \qw & \ghost{\Lambda_1(\ell)} & \ghost{\Lambda_2(R,\ell,s,t)} & \ghost{\Lambda_3} & \qw & \\
		  \ket{0} & &  & {/} \qw & \qw & \qw & \qw  & 		& &&&		&&  & \ket{0} &  & & {/} \qw & \ghost{\Lambda_1(\ell)} & \ghost{\Lambda_2(R,\ell,s,t)} & \ghost{\Lambda_3} & \qw & 
        }
$
    \caption{Switching the order of correction and teleportation.}
    \label{fig:switch-order-intro}
\end{figure}
Here, $\Lambda_1$ is a circuit that only depends on the labels $\ell$. The circuit $\Lambda_2$ depends on all parameters $R, \ell, s,t$. The circuit $\Lambda_3$ does not depend on any parameters. The circuits on the right-hand side use additional ancilla. 

Thus, the encoder can first apply the gate $G$ and then the circuit $\Lambda_1(\ell)$ to the connecting EPR pairs and ancilla qubits, because it knows the random labels $\ell$. The idea is then to offload the task of applying the remaining circuits $\Lambda_2(R,\ell,s,t)$ and $\Lambda_3$ -- which we call \emph{correction gadgets} -- to the evaluator in a way that hides the values of $R, \ell, s,t$ (and thus maintains the privacy of the encoding). Since $\Lambda_3$ doesn't depend on any parameters, the decoder can always automatically apply this. For $\Lambda_2$, the encoder computes a quantum randomized encoding of $\Lambda_2(R,\ell,s,t)$ --- but this time we have an easier task because $\Lambda_2(R,\ell,s,t)$ is a Clifford circuit (in fact, it comes from a special subset of depth-$1$ Clifford circuits). In principle one could then recursively compute the QRE of $\Lambda_2$ using the Clifford scheme described above, but for our Quantum Garbled Circuits scheme we employ a different method called \emph{group-randomizing QRE}.

\paragraph{Group-Randomizing QRE.} In this section we explain group-randomizing QRE in greater generality than is needed for our Quantum Garbled Circuits scheme, because we believe that it is a useful conceptual framework for thinking about randomizing computations and may be useful for other applications of QRE. 

Let $\cal{G}$ denote a subgroup of the $n$-qubit unitary group. Let $C$ denote a circuit computing a unitary in $\cal{G}$, and let $\qrv{x}$ denote an input. The encoding of $C(\qrv{x})$ is a pair $\hat{C}(\qrv{x}) = (R(\qrv{x}), CR^\dagger)$, where $R$ is a uniformly random\footnote{For the purpose of this discussion we can always assume that $\cal{G}$ is finite.} element of $\cal{G}$, and we assume that $CR^\dagger$ is described using some canonical circuit representation for elements of $\cal{G}$. Clearly, the output $C(\qrv{x})$ can be decoded from this encoding: the decoder simply applies $CR^\dagger$ to $R(\qrv{x})$. This encoding is also perfectly private: for a uniformly random $R \in \cal{G}$, the unitary $CR^\dagger$ is also uniformly distributed in $\cal{G}$. Thus, a simulator on input $\qrv{y}$ can output $(E^\dagger(\qrv{y}), E)$ for a uniformly random element $E \in \cal{G}$. When $\qrv{y} = C(\qrv{x})$ this yields the same distribution as the encoding $\hat{C}(\qrv{x})$. We note that this method is conceptually similar to the well-known classical RE techniques for branching programs using matrix randomization \cite{Kilian88}.

For general unitaries, group-randomizing QRE is exorbitantly inefficient, because a general unitary (even after discretization) requires exponentially many random bits to describe. However when the group $\cal{G}$ is the Clifford group, for example, then this encoding is efficient: the $n$-qubit Clifford group is a finite group of order $2^{\Theta(n^2)}$, and a uniformly random element can be sampled in $\poly(n)$ time. Furthermore, there are efficient algorithms to compute canonical representations of Clifford circuits~\cite{G98,AG04}. In particular, an $n$-qubit Clifford unitary can always be written as a Clifford circuit with $O(n^2)$ gates, and this is tight.

We now apply this group-randomizing framework to encoding the correction circuits $\Lambda_2$ described above. Let $\cal{G}$ denote a unitary subgroup that contains $\Lambda_2$; this will be a group of a special class of depth-$1$ Clifford circuits. The circuits $\Lambda_2$ depends on parameters $R, \ell, s, t$. The parameters $(\ell,s,t)$ are randomness that is known to the encoder, so this can be be fixed. However, as discussed the correction unitary $R$ is not known ahead of time to the encoder -- it depends on the teleportation keys $(a,b)$ of a \emph{previous} gate teleportation, as well as the gate $G$ that was involved. In fact $R$ is a deterministic function of $(a,b)$, and therefore the description of $\Lambda_2(R,\ell,s,t)$ is a deterministic function of $(a,b)$. 

To encode the circuit $\Lambda_2(R,\ell,s,t)$ with quantum input $\qrv{x}$, the encoder first samples a uniformly random element $A \in \cal{R}$. Then it applies the unitary $A$ to the input $\qrv{x}$. This constitutes the quantum part of the encoding. 

For the classical part: let $f_{\corr,A,\ell,s,t}(a,b)$ denote the function that computes a canonical description of the unitary $\Lambda_2(R,\ell,s,t) \cdot A^\dagger \in \cal{G}$. Since $\Lambda_2$ is a Clifford unitary on $O(\kappa^2)$ qubits (the ancillas at the bottom of \Cref{fig:switch-order-intro} consist of $O(\kappa^2)$ qubits), the function $f_{\corr,A,\ell,s,t}$ can be computed by a $\poly(\kappa)$-sized circuit. The encoder can then efficiently compute a classical decomposable RE $\hat{f}_{\corr,A,\ell,s,t}$ of $f_{\corr,A,\ell,s,t}$. Since the input $(a,b)$ is not known yet, the encoding consists of an offline part $\hf_\off$ and online labels $(\ell_{z,0}',\ell_{z,1}',\ell_{x,0}',\ell_{x,1}')$ for the $4$ possible values of $(a,b)$. The classical part of the encoding of $\Lambda_2$ is simply the offline part $\hf_\off$.

Thus, the decoder has the encoding $(A(\qrv{x}), \hf_\off)$. Suppose for the moment that the decoder had two out of the four labels $(\ell_{z,a}', \ell_{x,b}')$ for some $(a,b) \in \{0,1\}^2$. Then the tuple $(\hf_\off, \ell_{z,a}', \ell_{x,b}')$ denotes the RE $\hat{f}_{\corr,A,\ell,s,t}(a,b)$, from which the value $f_{\corr,A,\ell,s,t}(a,b) = \Lambda_2(R,\ell,s,t) \cdot A^\dagger$ can be efficiently decoded. Given this, the decoder can then decode $\Lambda_2(R,\ell,s,t)(\qrv{x})$, as desired. The privacy of this encoding follows from the group-randomizing property described above.

\subsection{Putting it Together} 

We now put everything together. For every pair $v,w$ of wires in the circuit connected by a gate $G$, the encoder applies the circuit $\Lambda_1(\ell_{vw})$ to the EPR pairs corresponding to $v,w$. Then, it computes the encoding of $\Lambda_2(\cdot,\ell_{vw},s_{vw},t_{vw})$ as described. Here, $s_{vw},t_{vw}$ are uniform randomness sampled for every connected pair $(v,w)$. The random labels $\ell_{vw}$ are generated from the classical RE $\hf_\corr$ of a $\Lambda_2$ circuit corresponding to a \emph{subsequent} pair $(v',w')$ of connected wires. 

Thus, the decoder can sequentially decode each wire\footnote{This should be viewed as analogous to the decoding in classical garbled circuits.} of the circuit in the following manner: inductively assume that the decoder has two-of-four labels $(\ell_{z,a}', \ell_{x,b}')$ resulting from a \emph{previous} encrypted teleportation measurement. Then the decoder can decode the group-randomizing QRE of the $\Lambda_2$ circuits, and then apply the $\Lambda_3$ circuits. For each pair $(v,w)$ of wires, the resulting state on the qubits $(\qrv{e}^v_\wout,\qrv{e}^w_\win)$ will be $\Lambda_3 \cdot \Lambda_2(R,\ell,s,t) \cdot \Lambda_1(\ell) \cdot G$, which by \Cref{fig:switch-order-intro} is equivalent to $\sf{TP}_{\ell,s,t} \cdot R \cdot G$, the desired gate followed by ``correct-then-teleport'' operation\footnote{Again we assume that the gate $G$ is a single-qubit gate; in the multiqubit case the gate $G$ is ``spread'' across multiple $\qrv{e}^v_\wout$ qubits.} Measuring the middle wire-bundles of the teleportation gadget to obtain labels $(\ell_{z,d}, \ell_{x,e})$ sets the decoder up for decoding a subsequent wire pair $(v',w')$. 

The encoder also provides a ``dictionary'' of labels for the output qubits, so that the decoder can undo any final Pauli/phase errors. All together, the decoding is equivalent to a sequence of correct-then-teleport operations, followed by a final correction, which allows the decoder to recover the value $C(\qrv{x})$. 

The encoding is constant-depth. The classical RE used in the encoding of each $\Lambda_2$ circuit can be done all in parallel and in constant-depth (assuming the CRE scheme has constant-depth encoding). Then the quantum part of the encoding for each pair of wires $(v,w)$ can be computed in constant-depth, because the $\Lambda_1$ circuits are constant-depth and the group $\cal{G}$ consists of constant-depth circuits. 

In the setting of information-theoretic privacy, the size of the encoding grows doubly-exponentially with the depth of the circuit. This is because if the labels $\ell$ for a pair of wires is $\kappa$ bits, then the corresponding teleportation gadget $\sf{TP}_{\ell,s,t}$ has size $\Theta(\kappa)$, and thus the corresponding $\Lambda_2$ circuit has size $\Theta(\kappa^2)$, which means that the complexity of computing the correction function $f_\corr$ is at least $\Theta(\kappa^2)$. This means that the labels for statistically-secure classical RE of $f_\corr$ are at least $\Omega(\kappa^2)$ bits long. Thus the label sizes square with each layer. In the worst-case, this means that we can only achieve efficient statistically-private encodings of $O(\log \log n)$-depth circuits.\footnote{In a previous version of this paper, we erroneously claimed that the label sizes grow linearly with each layer in the information-theoretic setting, and concluded that the encoding size grows exponentially with the depth (which would match the encoding complexity of known information-theoretic classical RE schemes). We find it an interesting problem to determine whether we can avoid the doubly-exponential complexity blow-up for information-theoretic QRE.} With some simple modifications to this scheme, it is possible to handle certain circuits of larger depth, e.g., $O(\log n)$-depth circuits where all layers except for the first $O(\log \log n)$ layers consist of Clifford gates.

In the setting of computational privacy (e.g., where we assume the existence of pseudorandom generators), the size of the encoding is polynomial in the size of the circuit. This is because the label sizes of computationally-secure classical RE are \emph{independent} of the complexity of the function being encoded; they only depend on the security parameter.

\subsection{Privacy of Our Scheme}
The privacy properties of our randomized encoding scheme is established via the existence of a \emph{simulator}, which is an efficient procedure $\Sim$ that takes as input a quantum state $F(\qrv{x})$ for some quantum circuit $F$ and state $\qrv{x}$, and produces another quantum state that is indistinguishable from the randomized encoding $\hat{F}(\qrv{x})$. This formalizes the idea that the only thing that can be learned from the randomized encoding $\hat{F}(\qrv{x})$ is the value $F(\qrv{x})$. 

An important feature of the randomized encoding $\hat{F}(\qrv{x})$ is that it hides the specific names of the gates being applied in a circuit $F$, and only reveals the \emph{topology} of the circuit $F$. This feature automatically implies the privacy of the randomized encoding: this means that it is not possible to distinguish between the encoding of $F$ on input $\qrv{x}$, or the encoding of a circuit $E$ with the same topology as $F$, but with all identity gates, with input $F(\qrv{x})$. In both cases, the decoding process (which is a public procedure which does not require any secrets) produces the output $F(\qrv{x})$. Thus, there is a canonical choice of simulator for such a randomized encoding: given input $F(\qrv{x})$, it computes the randomized encoding $\hat{E}(F(\qrv{x}))$, which is indistinguishable from the randomized encoding $\hat{F}(\qrv{x})$ via the circuit hiding property.

We now discuss a subtle point. We have described the decoding/evaluation procedure as involving measurements: the decoder is supposed to measure the middle wire-bundles of the teleportation gadgets to obtain the labels needed to decode the group-randomizing QRE in a subsequent teleportation. However, there is no guarantee that a malicious evaluator (who is trying to learn extra information from the encoding) will perform these measurements. The encoding of a gate technically gives a superposition over all possible labels of a teleportation gadget, and a malicious evaluator could try to perform some coherent operation to extract information about labels for different teleportation keys simultaneously (which would in turn compromise the privacy of the classical RE). 

In our scheme, we in fact define the honest decoding procedure $\Dec$ to be unitary. The randomization bits $s_x,s_z,t_x,t_z$ in the teleportation gadgets, which are never revealed to the decoder, effectively \emph{force} a measurement of the teleportation gadget labels. In particular, randomizing over the $s_z,t_z$ bits destroy any coherence between the different labels in the superposition, which means that the decoder (even a malicious one) can only get labels for a single teleportation key per teleportation gadget at a time. 

The unitarity of $\Dec$ is used in the privacy analysis in the following way: we show that given the randomized encoding of $F(\qrv{x})$ and $E(F(\qrv{x}))$, applying a unitary decoder $\Dec$ to both states yields outputs that are indistinguishable (statistical or computational, depending on the privacy properties of the classical RE scheme). Thus, applying the inverse unitary $\Dec^{-1}$ to the outputs preserves the indistinguishability (up to a loss related to the complexity of $\Dec$), and shows that the encodings of $F(\qrv{x})$ and $E(F(\qrv{x}))$ are indistinguishable.

\subsection{Another Simple QRE Using Group-Randomizing QRE}
\label{sec:overview:simpleqre}

We note that it is possible to construct a simpler QRE scheme for circuits using the group-randomizing QRE in different and arguably more straightforward manner. This construction does not have the low complexity property, or the gate-by-gate encoding property as our main construction presented above, but it is simpler and carries conceptual resemblance to classical branching program RE via matrix randomization as the well known Kilian RE \cite{Kilian88}.

The idea is to use the ``magic state'' representation of quantum circuits \cite{BK05magic}. At a high level and using the terminology of this work, \cite{BK05magic} shows that any quantum circuit can be represented (without much loss in size and depth) by a layered circuit as follows. Each layer consists of a unitary Clifford circuit with two types of outputs. Some of the output qubits are passed to the next layer as inputs (hence each layer has fewer inputs than its predecessor), and some of them are measured and the resulting classical string determines the gates that will be applied in the next layer (the last layer of course contains no measured qubits).\footnote{In fact, the \cite{BK05magic} characterization is much more specific about what the layers look like and only uses very specific classical characterization, in particular each measured bit controls one gate, but for our purposes even the above suffices.} The inputs to the first layer consist of the input to the original circuit, and in addition some auxiliary qubits, each of which is independently sampled from an efficiently samplable distribution over single-qubit quantum states (called ``magic state'').

Given our methodology above, one can straightforwardly come up with a QRE for such circuits. Given a circuit in the aforementioned layered form, and an input, the encoding is computed as follows.
\begin{enumerate}
	\item Generate the required number of magic states and concatenate them with the given input.
	
	\item If the circuit contains only one layer, i.e.\ is simply a Clifford circuit, use group-randomizing encoding (there is no need for classical RE in this case). 
	
	\item If the circuit contains more than one layer, consider the last layer (that produces the output), we refer to the layer before last as the ``predecessor layer''. 
	
	\begin{enumerate}
		\item 	Generate (classical) randomness that will allow to apply group-randomizing QRE on the last layer (including decomposable classical RE of the classical part of the encoding). This includes a randomizing Clifford $R$ and randomness for classical RE.
		
		\item Modify the description of the predecessor layer so that instead of outputting its designated output, it essentially outputs the QRE of the last layer. Specifically, modify the predecessor layer as follows. For the outputs that are passed to the next layer, add an application of $R$ before the values are actually output (since $R$ is Clifford, the layer remains a Clifford layer).

		 For the outputs that are to be measured, add a $Z$-twirl (i.e.\ $Z^s$ for a random $s$)\footnote{Applying $Z^s$ for a randomly chosen bit $s$ removes all information except the information that is recoverable via measurement (in the computational basis). Therefore one can think of $Z$-twirl as equivalent to measurement (or as a randomized encoding of the measurement operation).} followed by a Clifford circuit that selects between the two labels of the classical RE of the following (i.e.\ last) layer description. Also always output the (fixed classical) offline part of the classical RE.
		 
		 This transformation maintains the invariant that the new last layer (which is the augmented predecessor layer) is a Clifford circuit where the identity of the gates is determined by a classical value that comes from predecessor layers.
		
		\item Remove the last layer from the circuit and continue recursively.
	\end{enumerate}
\end{enumerate}

Correctness and security follow from those of the group-randomizing QRE and the $Z$-twirl. 
While this QRE is not natively decomposable, it can be made decomposable by adding a single layer of teleportation-based encoding (as in our full-fledged scheme) at the input. Interestingly, the only quantum operation required in this QRE is an application of a random Clifford on the input (more accurately, extended input containing the actual input and a number of auxiliary qubits in a given fixed state).

Carefully keeping track of the lengths of the labels of the classical RE would imply again that for perfect security we may incur up to a double-exponential blowup of the label length as a function of the depth.\footnote{While the description above was completely sequential, one can notice that a blowup in labels of a certain gates does not effect other gates that are in the same level in the circuit, even though the encoding is described sequentially.} In the computational setting, the blowup is only polynomial.

In terms of efficiency of encoding, we tried to present the scheme in a way that would make it easiest to verify its correctness and security, but an efficiency-oriented description would allow to encode all layers in parallel. This is because the modification to each layer only depends on the randomness of the QRE of the next layer, which can be sampled ahead of time.

\section{Preliminaries}
\label{sec:prelims}

\subsection{Notation}
\label{sec:prelim:notation}

\paragraph{Registers.} A \emph{register} is a named Hilbert space $\C^d$ for some dimension $d$. We denote registers using sans-serif font such as $\reg{a},\reg{b},\reg{c}$, etc. Let $\reg{w}_1,\reg{w}_2,\ldots,\reg{w}_n$ be a collection of registers. For a subset $S = \{i_1,\ldots,i_k\} \subseteq [n]$, we write $\reg{w}_S$ to denote the union of registers $\bigcup_{i \in S} \reg{w}_i$. We write $\dim(\reg{a})$ to denote the dimension of register $\reg{a}$. 

We also use underbrackets to denote the registers associated with a state, e.g.,
\[
	\ubket{\psi}{\reg{a}}\;,  \qquad \text{and} \qquad \frac{1}{\sqrt{2}} \sum_{e} \ubket{e,e}{\reg{v} \reg{u}} \;.
\]
The first denotes a pure state $\ket{\psi}$ in the register $\reg{a}$, and the second denotes an EPR pair on registers $\reg{v} \reg{u}$, respectively.

\paragraph{Quantum Random Variables.} A \emph{quantum random variable (QRV)} $\qrv{a}$ on a register $\reg{a}$ is a density matrix on register $\reg{a}$. Note that we denote QRVs using bolded font. When referring to a collection $(\qrv{a},\qrv{b},\qrv{c})$ of QRVs simultaneously, we are referring to the reduced density matrix of a global state on the registers $\reg{a} \reg{b} \reg{c}$ -- we say that $(\qrv{a},\qrv{b},\qrv{c})$ is a \emph{joint QRV}, which is also a QRV itself. We say that a QRV $\qrv{a}$ is \emph{independent} of a collection of QRVs $(\qrv{b},\qrv{c})$ if the density matrix corresponding to $\qrv{a}$ is in tensor product with the density matrix corresponding to $(\qrv{b},\qrv{c})$. We denote this by $\qrv{a} \otimes (\qrv{b},\qrv{c})$.

\paragraph{Quantum Operations.} Given a quantum operation $F$ mapping register $\reg{a}$ to register $\reg{a}'$ and a collection of QRVs $(\qrv{a},\qrv{b})$, we write 
$(F(\qrv{a}), \qrv{b})$ to denote the a density matrix on registers $\reg{a}' \reg{b}$ that is the result of applying $F$ to the density matrix $(\qrv{a},\qrv{b})$. Given quantum operations $F$ and $G$ that act on disjoint qubits, we write $(F,G)$ to denote the product operation $F \otimes G$. For a unitary $U$, we also write $U(\qrv{x})$ as shorthand for $U \, \qrv{x} U^\dagger$.

\paragraph{Quantum Circuits and Their Descriptions.} Throughout this paper we will talk about quantum circuits both as quantum operations (e.g. a unitary), and as classical descriptions of a sequence of gates. Formally, one is an algebraic object and the other is a classical string (using some reasonable encoding format for quantum circuits). To distinguish between the two presentations we write $\gadget{Ckt}$ to denote a classical description of a circuit (i.e. a sequence of gates on some number of qubits), and use sans-serif font such as $\sf{Ckt}$ to denote the corresponding unitary.

\paragraph{EPR Pair.} We let $\ket{\epr}$ denote the maximally entangled state on two qubits, i.e. $\frac{1}{\sqrt{2}} \left ( \ket{00} + \ket{11} \right)$.

\paragraph{Distinguishability of Quantum States.} We say that two quantum states $\qrv{a}, \qrv{b}$ on the same number of qubits are \emph{$(t,\epsilon)$-indistinguishable} if for all auxiliary quantum states $\qrv{q}$ that are unentangled with either $\qrv{a}$ or $\qrv{b}$, for all quantum circuits $D$ of size $t$, 
\[
	\Big |\, \Pr[D(\qrv{a} \otimes \qrv{q}) = 1] - \Pr[D(\qrv{b} \otimes \qrv{q}) = 1] \, \Big | \leq \epsilon.
\]
We often write $\qrv{a} \approx_{(t,\epsilon)} \qrv{b}$ to denote this.
This notion of indistinguishability satisfies the triangle inequality: if $\qrv{a},\qrv{b},\qrv{c}$ are quantum states such that $\qrv{a} \approx_{(t_1,\epsilon_1)} \qrv{b}$, and $\qrv{b} \approx_{(t_2,\epsilon_2)} \qrv{c}$, then we have $\qrv{a} \approx_{(\min(t_1,t_2),\epsilon_1 + \epsilon_2)} \qrv{c}$.

Furthermore, if $\qrv{a} \approx_{(t,\epsilon)} \qrv{b}$ and $U$ is a size-$s$ quantum circuit, then $U(\qrv{a}) \approx_{(t - s,\epsilon)} U(\qrv{b}))$. This is because if there was a size-$(t-s)$ circuit $D$ that could distinguish between the two with advantage more than $\epsilon$, then there exists a circuit $D' = D \circ U$ of size $t$ that could distinguish between $\qrv{a}$ and $\qrv{b}$ with advantage more than $\epsilon$.

\subsection{Quantum Gates and Circuits}

\subsubsection{Pauli, Clifford, and PX Groups}
\label{sec:prelim:pauliclifford}

\paragraph{Pauli Group.} The single-qubit \emph{Pauli group} $\cal{P}$ consists of the group generated by the following Pauli matrices: 
\[
	I = \begin{pmatrix} 1 & 0 \\ 0 & 1 \end{pmatrix} \qquad X = \begin{pmatrix} 0 & 1 \\ 1 & 0 \end{pmatrix} \qquad Y = \begin{pmatrix} 0 & i \\ -i & 0 \end{pmatrix} \qquad Z = \begin{pmatrix} 1 & 0 \\ 0 & -1 \end{pmatrix}
\]
The $n$-qubit Pauli group $\cal{P}_n$ is the $n$-fold tensor product of $\cal{P}$. 

\paragraph{Clifford Group.} The $n$-qubit \emph{Clifford group} $\cal{C}_n$ is defined to be the set of unitaries $C$ such that
\[
	C \cal{P}_n C^\dagger = \cal{P}_n.
\]
Elements of the Clifford group are generated by CNOT, Hadamard ($H = \frac{1}{\sqrt{2}} \begin{pmatrix} 1 & 1 \\ 1 & -1 \end{pmatrix}$), and Phase ($P = \begin{pmatrix} 1 & 0 \\ 0 & i \end{pmatrix}$) gates.

The \emph{third level of the $n$-qubit Clifford hierarchy} (Pauli and Clifford groups being the first and second, respectively) are unitaries $U$ where
\[
	U \cal{P}_n U^\dagger = \cal{C}_n.
\]
In other words, conjugating Paulis by $U$ yield a Clifford element.

\def\pxg{\cal{PX}}

\paragraph{PX Group.} The following subgroup of the Clifford group, which we call the PX-group, will be of particular interest to us. The group is defined as a group of single-qubit unitaries, and can be extended to multiple qubits by tensoring as usual. We define the single-qubit PX-group to be the group generated by the Pauli $X$ gate and Phase $P$ gate.

The aforementioned $P$ operation is the ``square root'' of the Pauli $Z$, so $P^2=Z$ (and therefore the Pauli group is a subgroup of the PX-group). The Pauli group is a strict subset of the PX-group (since the Pauli group does not contain $P$), and the PX-group itself is a strict subgroup of the single-qubit Clifford group (since the PX group does not contain the Hadamard gate). 

Calculation shows that $P X = i X P^3$, which implies that any element in the PX-group can be written as $i^{c} X^a P^b$, where $a \in \binset$ and $b,c \in \{0,1,2,3\}$. This immediately implies that given a circuit that contains only $I, X, P$ gates (and of course also $Z=
P^2$ gates which are equivalent to two consecutive applications of $P$), it is possible to efficiently find its canonical representation as $i^{c} X^a P^b$. We refer to such a circuit as a ``PX circuit''. Given the canonical representation of a PX element, it is possible to find a canonical PX circuit that implements the operation of the PX circuit. Since the PX group is a group, applying a random PX operation on an arbitrary PX operation results in a random PX operation.

\subsubsection{Universal Gate Set} 
\label{sec:prelim:universal}

A universal set of gates is $\cal{C}_2 \cup \{T\}$, i.e. the set of two-qubit Clifford gates along with
\[
	T = \begin{pmatrix} 1 & 0 \\ 0 & e^{i\pi/4} \end{pmatrix} \;. 
\]
The $T$ gate is an example of a unitary from the third level of the Clifford hierarchy, as formalized in the following fact. 
\begin{fact}\label{fact:tpauli}
For all $a,b \in \binset$ it holds that $T X^a Z^b = P^{a} X^{a} Z^{b}  T$.
\end{fact}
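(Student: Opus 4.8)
The plan is to verify the identity $T X^a Z^b = P^{a} X^{a} Z^{b} T$ by reducing to the two ``generating'' cases $a \in \{0,1\}$ together with $b \in \{0,1\}$, and in fact to just the single nontrivial relation $TX = PXT$; everything else follows from the fact that $Z$ commutes with both $T$ and $P$ (since all three are diagonal) and that $X^a, Z^b, P^a$ depend on $a,b$ only through their residues mod $2$ (using $X^2 = Z^2 = I$ and $P^2 = Z$, so $P^a$ with $a\in\binset$ is literally $I$ or $P$). So first I would note that the case $a = 0$ is trivial: both sides equal $T Z^b$, and since $T$ and $Z$ are diagonal they commute, so this is just $Z^b T = Z^b T$. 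This disposes of half the statement.

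Next, for $a = 1$ the claim reads $T X Z^b = P X Z^b T$. Since $Z$ is diagonal it commutes with $T$, so the left side is $T X Z^b$... wait, I need to be careful about the order; let me instead push $Z^b$ to the right on both sides. On the left, $T X Z^b$: I want to move $Z^b$ past nothing on the left side as written, so instead I compute both sides directly. The cleanest route: first establish $TX = PXT$ by an explicit $2\times 2$ matrix multiplication. We have $T = \mathrm{diag}(1, e^{i\pi/4})$ and $X$ the swap, so $TX = \begin{pmatrix} 0 & 1 \\ e^{i\pi/4} & 0 \end{pmatrix}$. On the other side, $XT = \begin{pmatrix} 0 & e^{i\pi/4} \\ 1 & 0\end{pmatrix}$ and then $P \cdot XT$ with $P = \mathrm{diag}(1,i)$ gives $\begin{pmatrix} 0 & e^{i\pi/4} \\ i & 0 \end{pmatrix}$. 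Hmm, that is not obviously equal to $TX$. Let me recheck: $e^{i\pi/4} = \frac{1+i}{\sqrt 2}$, so $i \cdot$ ... actually the discrepancy means I should double-check whether the correct relation is $TX = e^{i\pi/4} PXT$ up to a global phase, or whether $P$ should be $P^\dagger$; the paper's \texttt{fact:tpauli} as stated claims equality on the nose, so I would carefully recompute. Actually $TXZ^b$ — note the right-hand side has $P^a X^a Z^b T$ with the $Z^b$ sitting between $X$ and $T$, and since $Z$ and $T$ commute we can also write it as $P^a X^a T Z^b$; meanwhile $Z X = -XZ$, a relation that might be needed if I try to move $Z^b$ around, but here it is not. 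The point of potential subtlety — and the main obstacle — is getting all global phases and operator orderings exactly right so that the stated identity holds as written (no spurious phase), which is precisely the kind of thing that is easy to botch.

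So the key steps, in order, are: (1) handle $a=0$ by commutativity of diagonal matrices; (2) for $a=1$, reduce to proving $TX = PXT$ by using that $Z$ is diagonal (hence commutes with $T$) so that $TXZ^b = (TX)Z^b$ and $PXZ^bT = PX(Z^bT) = PXTZ^b = (PXT)Z^b$, whence the $b$-dependence cancels and it suffices to treat $b=0$; (3) verify $TX = PXT$ by direct $2\times 2$ matrix computation, being scrupulous about phases — if it turns out a phase correction is needed I would flag that the fact should read $TX^aZ^b = \zeta^a P^a X^a Z^b T$ for the appropriate root of unity $\zeta$, but I expect the stated version is correct once I compute $P$, $X$, $T$ in the conventions fixed earlier in the paper ($P = \mathrm{diag}(1,i)$, $T = \mathrm{diag}(1,e^{i\pi/4})$).

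I expect the main obstacle to be purely bookkeeping: tracking global phases and the left-versus-right placement of the Pauli factors so the equality is exact, not merely projective. There is no conceptual difficulty — the statement is just the observation that $T$ sends a Pauli $X$-error to an $X$-error plus a $P$ ``phase defect,'' which is the defining feature of a third-level Clifford-hierarchy gate — but since this identity is used downstream to track how $T$-gate teleportation produces phase errors, getting the exact operator (including that it is $P^a$ and not $(P^\dagger)^a$ or $P^{-a}$) matters, so I would not skip the explicit arithmetic.
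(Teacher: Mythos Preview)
Your approach --- reduce to $a\in\{0,1\}$, use that $Z$ commutes with $T$ and $P$ to strip off the $Z^b$, and then check $TX = PXT$ by explicit $2\times 2$ multiplication --- is exactly right, and the paper offers no proof of this Fact at all, so there is nothing to compare against.

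Your instinct about the phase is also correct, and you should trust your arithmetic rather than your expectation. With the paper's conventions $T=\mathrm{diag}(1,e^{i\pi/4})$ and $P=\mathrm{diag}(1,i)$ one gets
\[
TX=\begin{pmatrix}0&1\\ e^{i\pi/4}&0\end{pmatrix},\qquad PXT=\begin{pmatrix}0&e^{i\pi/4}\\ i&0\end{pmatrix}=e^{i\pi/4}\,TX,
\]
so the exact identity is $TX^aZ^b = e^{-i\pi a/4}\,P^aX^aZ^bT$, and the statement as printed is true only up to a global phase. This is harmless for the paper: the identity is only ever used at the level of density matrices (see the overview where it appears as $T(X^aZ^b(\qrv{q}))=X^{a'}Z^{b'}P^{a'}(T(\qrv{q}))$, and in \eqref{eq:pauli-to-clifford} which is applied via conjugation), so global phases drop out. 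But as an operator equation the Fact needs the phase $e^{-i\pi a/4}$; you were right to flag it.
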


\subsubsection{Classical Circuits}
\label{sec:classical-circuits}

Here we briefly review classical circuits. A \emph{classical circuit topology} $T$ consists of a directed acyclic graph (DAG) where the nodes are divided into \emph{input terminals}, \emph{placeholder gates}, and \emph{output terminals}. Input terminals have in-degree $0$ and arbitrary out-degree (i.e. they are source nodes). Output terminals have in-degree $1$ and out-degree $0$ (i.e. they are sink nodes). Placeholder gates have constant in-degree and out-degree (without loss of generality this constant can be $2$ while incurring only a constant blowup in size and depth compared to any other constant).
A classical circuit $C$ with topology $T$ is simply an assignment of 
boolean functionalities to the placeholder gates.

The \emph{depth} of a circuit is simply the length of the longest path from an input terminal to an output terminal. The \emph{size} of a circuit is the number of wires (i.e.\ edges) in the circuit topology. 

An important class of circuits are \emph{constant-depth circuits}. These are captured by the complexity class $\ncz$, which technically consists of function families $\{f_n\}$ that can be computed by a family of polynomial-size circuits whose depth is bounded by a constant (i.e.\ does not grow with $n$). As we shall see in \Cref{sec:cre}, the class $\ncz$ captures the complexity of encoding in classical randomized encoding schemes.

\subsubsection{Quantum Circuits and Their Topology}
\label{sec:circuits}

\paragraph{Circuit Topology.} A \emph{quantum circuit topology} $\scr{T}$ is a tuple $(\cal{B},\cal{I},\cal{O},\cal{W},\winwire,\woutwire,\cal{Z},\cal{T})$ where 
\begin{enumerate}
	\item $G_{\scr{T}} = (\cal{B} \cup \cal{I} \cup \cal{O},\cal{W})$ forms a directed acyclic graph (DAG) where the vertex set consists of the union of disjoint sets $\cal{B}$, $\cal{I}$, $\cal{O}$, and the edge set is $\cal{W}$. 
	\item The set of edges $\cal{W}$ are called \emph{wires} of the circuit topology $\scr{T}$. 
	\item The set $\cal{I}$ is ordered, and consist of \emph{input terminals}, and have in-degree $0$, and out-degree $1$. Throughout this paper we will overload notation and let $\cal{I}$ denote the subset of wires $\cal{W}$ that are incident to the input terminals. 
	\item The set $\cal{O}$ is ordered, and consist of \emph{output terminals}, and have in-degree $1$, and out-degree $0$. Throughout this paper we will overload notation and let $\cal{O}$ denote the subset of wires $\cal{W}$ that are incident to the output terminals. 
	\item The vertices $\cal{B}$ are called \emph{placeholder gates}, and for every $g \in \cal{B}$, the in-degree and out-degree of $g$ are equal. We let $p_g$ denote the degree, which we call the \emph{arity} of the placeholder gate $g$.
	\item For every $g \in \cal{B}$, we let $\winwire(g)$ denote an ordering of the wires $w \in \cal{W}$ that enter $g$, and let $\woutwire(g)$ denote an ordering of the wires that exit $g$.
	\item The set $\cal{Z}$ is a subset of $\cal{I}$ that denotes \emph{zero qubits} (i.e. qubits to initialize in the state $\ket{0}$).
	\item $\cal{T} \subseteq \cal{O}$ denotes the set of \emph{discarded qubits} (i.e. output qubits to trace out).
\end{enumerate}	

A topology thus specifies a quantum circuit in a natural way, except only ``placeholder gates'' are specified. Note that the number of input terminals must be equal to the number of output terminals; these correspond to the input and output wires of the circuit topology. We often let $n$ denote the number of input (and therefore output) terminals. Furthermore, a circuit topology allows some input qubits to be initialized in the $\ket{0}$ state, and some output qubits to be traced out.

Given a topology $\scr{T}$, we define an \emph{evaluation order} $\pi: \cal{B} \to |\cal{B}|$ to be a topological ordering of the ``blank gates''.

\paragraph{Quantum Circuits.} A \emph{general quantum circuit $F$} is a pair $(\scr{T},\cal{G})$ where 
\[
\scr{T} = (\cal{B},\cal{I},\cal{O},\cal{W},\winwire,\woutwire,\cal{Z},\cal{T})
\] 
is a circuit topology and $\cal{G}$ is a set of unitaries such that for every $g \in \cal{B}$, there is a corresponding $p_g$-qubit unitary $U_g \in \cal{G}$. We often write $g \in \cal{G}$ to denote the unitary itself. The \emph{size} of a unitary circuit $C$ is the number of wires $|\cal{W}|$. For an $(n - |\cal{Z}|)$-qubit state $\qrv{w}$ supported on the qubits not indexed by $\cal{Z}$, we write $F(\qrv{w})$ to denote the density matrix resulting from applying the gates $g \in \cal{G}$ to $(\qrv{w},0)$ where the qubits indexed by $\cal{Z}$ are set to $\ket{0}$, and at the end the qubits indexed by $\cal{T}$ are traced out. 

A \emph{unitary quantum circuit $C$} is a circuit where the set $\cal{Z} = \cal{T} = \emptyset$. In other words, it maps $n$ qubits to $n$ qubits, and no qubits are discarded at the end.

By the Stinespring dilation theorem, every quantum operation can be realized as a general quantum circuit. We associate the complexity of a quantum operation with the size of the quantum circuit that implements it, with respect to a universal set of gates. For this work, we choose to work with the universal set $\cal{C}_2 \cup \{T\}$ as described in Section~\ref{sec:prelim:universal}.

\paragraph{Constant-Depth Quantum Circuits.} The main model of constant-depth quantum circuits that we consider in this paper are $\qnczf$ circuits, which are constant-depth circuits consisting of one- or two-qubit gates, as well as \emph{fan-out} gates of arbitrary arity, which copy a control qubit to a number of target qubits (i.e., a fan-out gate with fan-out $k$ performs the following transformation: $\ket{x,y_1,\ldots,y_k} \mapsto \ket{x,y_1\oplus x,y_2 \oplus x,\ldots,y_k \oplus x}$). This is a natural analogue of classical $\ncz$ circuits, yet is surprisingly powerful: functions such as PARITY can be computed in this model~\cite{M99,HS05}. We will show that $\qnczf$ captures the complexity of the encoding procedures of our quantum randomized encoding scheme.

\subsection{Classical Randomized Encoding}
\label{sec:cre}

We define classical randomized encoding schemes and their properties. See survey by Applebaum~\cite{Applebaum17} for details and references.

\begin{definition}[Classical Randomized Encoding]
\label{def:cre}
	Let $f \in \binset^n \to \binset^m$ be some function. The function $\hf: \binset^n \times \binset^\ell \to \binset^{m'}$ is a $(t,\epsilon)$-private \emph{classical randomized encoding} (CRE) of $f$
	if there exist a deterministic function $\CDec$ (called a \emph{decoder}) and a randomized function $\CSim$ (called a \emph{simulator}) with the following properties.
	
	\begin{itemize}
		\item \textbf{Correctness.} For all $x,r$ it holds that $f(x) = \CDec(\hf(x ; r))$.\footnote{This is known as \emph{perfect correctness} and is the only notion of correctness considered in this work.}
		
		\item \textbf{$(t,\epsilon)$-Privacy.} For all $x$ and for all circuits $D$ of size $t$ it holds that
		\[
		\Abs{\Pr_r[D(\hf(x ; r))=1] - \Pr[D(\CSim(f(x))=1]} \le \epsilon~
		\]
		where the second probability is over the randomness of the simulator $\CSim$. The case of $\epsilon=0$ is called \emph{perfect privacy}.
	\end{itemize}
	The encoding $\hf$ is a \emph{decomposable} CRE (DCRE) of $f$ if there exist functions $\hf_{\off}(r)$ (called the \emph{offline part of the encoding}) and $\lab_{i,b}(r)$ (called the \emph{label functions}) for all $i \in [n], b \in \binset$, such that for all $(x,r)$, 
	\[
		\hf(x; r) = \Big ( \, \hf_{\off}(r) \, ,\,  (\lab_{i,x_i}(r))_{i \in [n]} \, \Big) \;.
	\]
\end{definition}

\begin{remark}
	We refer to the second input $r$ of $\hat{f}$ as the \emph{randomness} of the encoding, and we use a semicolon to distinguish it from the deterministic input $x$. 	We will sometimes write $\hat{f}(x)$ to denote the random variable $f(x; r)$ induced by sampling $r$ from the uniform distribution. Furthermore, we say that the \emph{value} $\hf(x)$ is the randomized encoding of a function $f$ and a deterministic input $x$. 
\end{remark}

Note that as presented in \Cref{def:cre}, there is no requirement that the randomized encoding $\hat{f}$ can be efficiently computed from the original function $f$. Furthermore, the decoder $\CDec$ and simulator $\CSim$ are technically allowed to depend arbitrarily on the function $f$ be encoded. However, it is a highly desirable feature that randomized encodings be efficiently computable given a description of $f$, and also have a \emph{universality} property (see, e.g., Section 7.6.2 of \cite{Applebaum14szk}), where the encoding $\hat{f}(x)$ hides information not just about the input $x$, but also about the function $f$. This is formalized by requiring that the decoding and simulation procedures depend only partially on $f$. In many cases, including in this work, they should only depend on the \emph{topology} of the circuit computing $f$ (see \Cref{sec:classical-circuits} for an overview of classical circuit topology). This motivates the following general definition.

\begin{definition}[Universal RE Schemes for Circuits]
\label{def:universal-re}
	Let $\cC$ denote a class of circuits and let $\cR$ denote an equivalence relation over $\cC$. An \emph{$(t,\epsilon)$-private and efficient $\cR$-universal RE scheme for the class $\cC$} is a tuple of polynomial-time algorithms $(\CEnc,\CDec,\CSim)$ such that for all circuits $f \in \cC$ (here we identify the circuit with the function it computes),
	\begin{itemize}
		\item \textbf{Efficient Encoding}. For all $x, r$, $\CEnc(f, x ; r)$ computes a randomized encoding $\hf(x, r)$.
		\item \textbf{Correctness}. For all $x$ and $r$ it holds that $f(x) = \CDec(c,\hf(x ; r))$ where $c$ denotes the equivalence class of $f$ in $\cR$.
		\item \textbf{$(t,\epsilon)$-Privacy}. For all $x$ and circuits $D$ of size $t$ it holds that
		\[
		\Abs{\Pr_r[D(\hf(x ; r))=1] - \Pr[D(\CSim(c,f(x))=1]} \le \epsilon~.
		\]		
		where $c$ denotes the equivalence class of $f$ in $\cR$.
	\end{itemize}
	Furthermore, we say that the $\cR$-universal RE scheme $(\CEnc,\CDec,\CSim)$ is \emph{decomposable} if the randomized encoding $\hf(x; r)$ is decomposable, and furthermore we say that it is \emph{label-universal} if the label functions $\lab_{i,b}(r)$ only depend on the equivalence class of $f$ in $\cR$.
\end{definition}

\begin{remark}[Universal RE and Universal Circuits]\label{rmk:universalcircuitcre}
	For many classes of functions it is possible to achieve universality for RE using the notion of a universal circuit (or machine). If the class of functions admits a universal circuit (which depends on a property such as the topology but takes the remainder of the description as input), and this universal circuit itself belongs to the class that can be encoded using the RE scheme, then one can apply the RE to the universal circuit and consider the description of $f$ as additional input to the circuit. This will result in a universal RE scheme, and if the original RE was decomposable then the resulting scheme will be decomposable with respect to both the input and the description of the function.
\end{remark}

For the remainder of this paper, when we speak of encoding a function $f$, we are referring to encoding a specific \emph{circuit implementation} of $f$ (see \Cref{sec:classical-circuits}).
Furthermore, in this paper we will be focused on \emph{topologically-universal RE schemes} -- in other words, the equivalence relation $\cR$ is such that two circuits $f,f'$ are equivalent if they have the same topology (i.e.\ the same interconnection between gates, but possibly different gate functionality). Randomized encoding schemes in the literature are typically topologically-universal (for example Yao's garbled circuits scheme).

We note that label universality can be derived from $\cR$-universality in a generic way (essentially by using a straightforward label-universal encoding of the multiplexer functions), but the constructions we cite from the literature will have this property even without this transformation.

\paragraph{Existing Decomposable Classical RE Schemes.} We use decomposable CRE (DCRE) as a building block for our construction of quantum RE. In particular we rely on the following information theoretical and computational schemes \cite{BMR90,Rogawaythesis,Applebaum17}.

\begin{theorem}[Information Theoretic DRE]\label{thm:classicalperfectre}
	There exists an efficient topologically-universal and label-universal DRE scheme $\CRE = (\CEnc,\CDec,\CSim)$ with the following properties: 
	\begin{itemize}
		\item \textbf{Efficiency.} For every function $f$ computable by a size-$s$ and depth-$d$ classical circuit and for every input $x$, the encoding $\CEnc(f, x; r)$ is computable in time $\poly(2^d) \cdot s$. Furthermore, the label functions $\lab_{i,b}(r)$ are computable in time $\poly(2^d)$. The decoding and simulation algorithms $\CDec$ and $\CSim$ are also computable in time $\poly(2^d) \cdot s$.

		\item \textbf{Perfect Information-Theoretic Privacy.} The scheme has perfect privacy against the class of \emph{all} distinguishers. 
		
		\item \textbf{Locality.} Every output bit of $\hf(x; r) = \CEnc(f, x; r)$ depends on at most $4$ bits of $(x,r)$.
	\end{itemize}
\end{theorem}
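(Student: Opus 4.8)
The plan is to recognize this as the classical information-theoretic garbled-formula construction of \cite{BMR90,Rogawaythesis} (combined with the degree/locality reductions of \cite{AIK04,AIK06}; see also the survey \cite{Applebaum17}), and to assemble it from three ingredients. First I would reduce depth-$d$ circuits to formulas: unrolling a fan-in-$2$, depth-$d$, size-$s$ circuit into a tree makes each output bit's computation subtree have at most $2^d$ leaves, so one obtains a formula $\Phi$ of size $O(s\cdot 2^d)$ and depth $d$. To get \emph{topological} universality I would not garble $\Phi$ directly but a fixed universal formula $U$ for depth-$d$ formulas --- of size $\poly(2^d)\cdot s$ and depth $O(d)$ --- feeding the gate descriptions of $f$ as part of $U$'s input, as in the universal-circuit route of \Cref{rmk:universalcircuitcre}; this forces $\CDec$ and $\CSim$ to depend only on the topology.

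Next I would instantiate the point-and-permute garbled-formula scheme. Assign each wire $w$ a uniform mask bit $\lambda_w$ and, for non-input wires, a uniform label pair $L_w^0,L_w^1$; for each gate $g$ with input wires $a,b$, output wire $c$, and Boolean functionality $\gamma_g$, publish a $4$-entry table whose $(\alpha,\beta)$-entry one-time-pad-encrypts, under disjoint blocks of $L_a^\alpha$ and $L_b^\beta$, the pair $(L_c^{v},v)$ with $v=\gamma_g(\alpha\oplus\lambda_a,\beta\oplus\lambda_b)\oplus\lambda_c$. Using a disjoint label-block per table entry avoids one-time-pad reuse at the price of the label length growing by a constant factor per formula layer, hence by $2^{O(d)}$ overall, for a total encoding size $\poly(2^d)\cdot|\Phi|=\poly(2^d)\cdot s$. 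The offline part $\hf_\off(r)$ is all the gate tables plus the output-wire masks $\{\lambda_w\}_{w\in\cal{O}}$, and $\lab_{i,b}(r)=(L_{w_i}^{b},\,b\oplus\lambda_{w_i})$ for the $i$-th input wire $w_i$; since $\lab_{i,b}$ depends only on $r$ and the topology, the scheme is decomposable and label-universal, and the label functions run in $\poly(2^d)$ time. Correctness follows by evaluating the tables top-down, using the pointer bits to select entries and reading off $f(x)$ from the output pointers XOR the output masks, giving a $\poly(2^d)\cdot s$-time decoder; perfect privacy follows from a hybrid over gates in reverse topological order --- the point being that each one-time-pad key is used exactly once, so the three ``unopened'' entries of every gate are exactly uniform, yielding a simulator that samples all labels and pointers at random (consistent with $f(x)$ on the output wires) and fills each table accordingly.

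Finally I would push the output locality down to $4$. Each output bit of the above encoding is a degree-$\le 3$ polynomial over $\mathrm{GF}(2)$ in $(x,r)$: the label-block XORs are degree $1$ in $r$, the quantity $v=\gamma_g(\alpha\oplus\lambda_a,\beta\oplus\lambda_b)\oplus\lambda_c$ has degree $\le 2$ in the $\lambda$'s since $\gamma_g$ is a fixed $2$-bit function, selecting the output label $L_c^{v}$ costs one further degree, and the input label functions are affine in $(x,r)$. So I can apply the standard degree-$3$-to-locality-$4$ transformation of \cite{AIK04,AIK06}, which replaces each output bit by $O(1)$ bits each depending on at most $4$ bits of $(x,r')$ for fresh randomness $r'$, blows up the size by only a constant factor, and preserves perfect privacy. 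Combining with Step~1 gives a topologically- and label-universal DRE with all the stated properties.

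The main obstacle, I expect, is making the three requirements coexist --- output locality exactly $4$, full decomposability, and the $\poly(2^d)\cdot s$ size/time bound --- rather than any single one of them. Concretely, one must check that the locality reduction does not break decomposability: it does not, because that transformation only rewrites the randomness and the output map while leaving each $\lab_{i,b}$ an affine (in fact projection-plus-mask) function of the single bit $x_i$. A secondary point needing care is the bookkeeping that the label length, and hence the encoding size, grows by only a constant factor per formula layer (so $2^{O(d)}$, not worse) and that unrolling a depth-$d$ circuit into a formula --- or building the universal formula --- costs only a $2^{O(d)}$ factor, so the final size really is $\poly(2^d)\cdot s$.
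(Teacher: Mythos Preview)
The paper does not actually prove this theorem: it is stated in the preliminaries as a known result from the literature, with citations to \cite{BMR90,Rogawaythesis,Applebaum17} and no accompanying argument. Your sketch is a faithful reconstruction of exactly that literature --- unroll to a formula (or route through a universal formula for topological universality), apply the point-and-permute information-theoretic garbling of \cite{BMR90,Rogawaythesis} with per-layer label growth, then compose with the degree-$3$-to-locality-$4$ reduction of \cite{AIK04,AIK06} --- so there is nothing to compare against beyond the citations themselves.
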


We note that the locality and efficiency properties of the DRE scheme specified by \Cref{thm:classicalperfectre} implies that the randomized encodings $\hf(x ; r)$ are computable by $\ncz$ circuits that take as input $x$ and the randomness $r$.

\begin{theorem}[Computational DRE]\label{thm:classicalcomre}
	Assume there exists a length doubling pseudorandom generator (PRG) $\prg$ that is secure against polynomial time classical (resp. quantum) adversaries.	
There exists an efficient topologically-universal and label-universal DRE scheme $\CRE = (\CEnc,\CDec,\CSim)$, which implicitly depends on a security parameter $\secp$, and has the following properties:
	\begin{itemize}
		\item \textbf{Efficiency.} For every function $f$ computable by a size-$s$ classical circuit and for every input $x$, the encoding $\CEnc(f, x; r)$ is computable in time $\poly(\secp) \cdot s$. Furthermore, the label functions $\lab_{i,b}(r)$ are computable in time $\poly(\secp)$. The decoding and simulation algorithms $\CDec$ and $\CSim$ are also computable in time $\poly(\secp) \cdot s$.

		\item \textbf{Computational Privacy.} For every polynomial $t(\secp)$, there exists a negligible function $\epsilon(\secp)$ such that the scheme is $(t(\secp),\epsilon(\secp))$-private against the class of size-$t(\secp)$ classical (resp. quantum) distinguishers.

		\item \textbf{Locality.} Every output bit of $\hf(x; r) = \CEnc(f,x; r)$ depends on at most $4$ bits of $(x,r,\prg(r))$. If the PRG $\prg$ can be computed by a $O(\log(\secp))$-depth circuit, then every output bit of $\hf(x;r)$ can be made to depend on at most $4$ bits of $(x,r)$, via non-black-box use of the PRG.
		
	\end{itemize}
\end{theorem}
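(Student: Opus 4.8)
The plan is to instantiate $\CRE$ with the point-and-permute garbled-circuits scheme of Beaver--Micali--Rogaway~\cite{BMR90,Rogawaythesis} (see also~\cite{Applebaum17}), using $\prg$ to build a double-key symmetric encryption, and then to verify the stated properties one at a time. Without loss of generality take $f$ to be a fan-in-$2$ circuit. The randomness $r$ assigns to each wire $w$ two independent labels $L_w^0,L_w^1\in\binset^\secp$ and a mask bit $\pi_w\in\binset$, plus encryption coins. For each gate $g$ with input wires $a,b$ and output wire $c$, and each $(\alpha,\beta)\in\binset^2$, form the ciphertext $\Enc_{L_a^\alpha,L_b^\beta}\big(L_c^{g(\alpha,\beta)}\,\|\,(\pi_c\oplus g(\alpha,\beta))\big)$, using the double-key encryption $\Enc_{k_1,k_2}(m)=m\oplus\prg(k_1)\oplus\prg(k_2)$ (with a per-gate identifier appended inside $\prg$ to avoid key reuse, and PRG outputs truncated to $|m|$), and place it in slot $(\alpha\oplus\pi_a,\,\beta\oplus\pi_b)$ of the garbled table for $g$. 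The offline part $\hf_\off(r)$ is the tuple of all garbled tables together with an output-translation table mapping, for each output wire, each (label, masked-bit) pair to the underlying bit; the label functions are $\lab_{i,b}(r)=(L_i^b,\,b\oplus\pi_i)$, which depend only on $r$ (label-universality), and the decoder and simulator below will depend on $f$ only through its topology $c$ (topological-universality).

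\textbf{Correctness and efficiency.} Decoding $\CDec(c,\hf(x;r))$ sweeps the wires in topological order: at each gate the two incoming (label, masked-bit) pairs point to exactly one of the four ciphertexts, which is decrypted by XOR-ing the two PRG expansions into it, yielding the outgoing (label, masked-bit) pair; the output wires are resolved via the translation table. Perfect correctness is immediate from the invariant ``wire $w$ carries $(L_w^{v_w},\,v_w\oplus\pi_w)$'', where $v_w$ is the value carried by $w$ on input $x$, which is preserved across every gate. Each garbled table consists of four strings of length $O(\secp)$ and there are $O(s)$ gates, so $|\hf(x;r)|=O(\secp s)$ and $\CEnc,\CDec,\CSim$ all run in time $\poly(\secp)\cdot s$, while the label functions run in time $\poly(\secp)$.

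\textbf{Privacy.} The simulator $\CSim(c,f(x))$ outputs a fake garbling: it picks a single active label per wire, takes all masked bits to be $0$, makes every ciphertext of each gate decrypt (under the active input labels) to the active output label paired with $0$, and rigs the translation table so that each output wire's active datum maps to the corresponding bit of $f(x)$. To see $\CSim(c,f(x))\approx\hf(x;r)$, run a hybrid argument over the gates in a fixed topological order, hybrid $k$ having the first $k$ gates simulated and the rest honest. Consecutive hybrids differ only in the three ``inactive'' ciphertexts of one gate, each of the form $\prg(L)\oplus(\cdot)$ for a label $L$ that occurs nowhere else in that hybrid (this independence is precisely why the ordering must be topological); hence a size-$t$ distinguisher between consecutive hybrids yields a size-$(t+\poly(\secp)\cdot s)$ distinguisher for $\prg$, so PRG security gives advantage $\negl(\secp)$ per step and $O(s)\cdot\negl(\secp)=\negl(\secp)$ in total. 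Thus for every polynomial $t(\secp)$ there is a negligible $\epsilon(\secp)$ with $(t(\secp),\epsilon(\secp))$-privacy. The quantum case is syntactically identical, since $\prg$ is invoked only as a black box against a size-bounded distinguisher.

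\textbf{Locality, and the main obstacle.} Once the PRG expansions $\prg(r)$ are provided as auxiliary inputs, every bit of the ciphertext in slot $(p_a,p_b)$ of gate $g$ is an XOR of a constant number of bits drawn from the expansions of $L_a^{p_a\oplus\pi_a},L_b^{p_b\oplus\pi_b}$ and from $L_c^0,L_c^1,\pi_c$, the relevant terms being selected by $\pi_a,\pi_b$ through a constant-depth multiplexer; so, relative to $\prg(r)$, the whole garbling is computed by a constant-depth circuit in $(x,r)$. Applying the perfect information-theoretic, $4$-local DRE of \Cref{thm:classicalperfectre} to this constant-depth outer computation re-encodes $\hf$ so that every output bit of the final encoding depends on at most $4$ bits of $(x,r,\prg(r))$ (folding the inner randomness into $r$); correctness, privacy, and topological- and label-universality all compose. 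If $\prg$ is moreover $O(\log\secp)$-depth, one further folds the computation of $\prg(r)$ into the inner encoding to obtain $4$-locality in $(x,r)$ alone, at the cost of non-black-box use of $\prg$. The construction and the size bounds are routine; the delicate points are (i) the privacy hybrid --- choosing $\CSim$ so that it exactly tracks the honest evaluation path, and ordering the hybrids so each step isolates precisely the inactive ciphertexts of one gate whose ``freed'' label is genuinely independent of the rest of that hybrid (this is what forces the topological ordering and is where the argument can go wrong) --- and (ii) checking that the point-and-permute slot ordering does not spoil the constant locality once $\prg(r)$ is supplied as advice, which is what makes the composition with \Cref{thm:classicalperfectre} go through cleanly.
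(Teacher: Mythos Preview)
The paper does not give its own proof of this theorem: it is stated in the preliminaries as a known result, with the construction attributed to \cite{BMR90,Rogawaythesis} and the surrounding discussion to \cite{Applebaum17}. Your proposal is a faithful and correct sketch of exactly that construction --- point-and-permute garbled circuits with PRG-based double-key encryption, the standard gate-by-gate hybrid for privacy, and the AIK locality reduction via composition with the information-theoretic DRE of \Cref{thm:classicalperfectre} --- so there is nothing to compare against and nothing substantive to correct.

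One small imprecision worth tightening if you flesh this out: your description of $\CSim$ (``makes every ciphertext of each gate decrypt under the active input labels to the active output label'') is ambiguous, since three of the four slots are keyed with at least one inactive label and hence do not ``decrypt under the active input labels'' at all. The cleaner formulation is that the simulator places the correct active ciphertext in the active slot and fills the remaining three slots with uniformly random strings; the hybrid then replaces, one gate at a time, the honest inactive ciphertexts (each of which is masked by $\prg(L)$ for an inactive label $L$ appearing nowhere else in that hybrid) by random strings. This is presumably what you intend, and it does not affect the correctness of your outline.
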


We note that the locality and efficiency properties of the DRE scheme specified by \Cref{thm:classicalcomre} implies that the randomized encodings $\hf(x; r)$ are computable by $\ncz$ circuits that take as input $x$, the randomness $r$, and the output of the PRG $\prg$.

\begin{remark}
In the case of computational security, the RE scheme $(\CEnc,\CDec,\CSim)$ may also depend on a security parameter $\secp$. Since the security parameter will always be set and fixed in application, we do not explicitly point it out in our notation. 
\end{remark}

 \section{Quantum Randomized Encoding -- Definition and Existence}
\label{sec:qredefresult}

\subsection{Definition}
\label{sec:qredef}

We propose the following quantum analogue of randomized encoding.

\begin{definition}[Quantum Randomized Encoding]
	Let $F(\qrv{x})$ be a quantum operation that maps $n$ qubits to $m$ qubits. 
	The quantum operation $\hF(\qrv{x}; r)$ where $r$ is classical randomness is a \emph{$(t,\epsilon)$-private quantum randomized encoding} (QRE) of $F$ if there exist quantum operations $\Dec$ (called the \emph{decoder}) and $\Sim$ (called the \emph{simulator}) with the following properties.
	
	\begin{itemize}
		\item \textbf{Correctness.} For all quantum states $(\qrv{x},\qrv{q})$ and all randomness $r$, it holds that $(\Dec(\hF(\qrv{x}; r)),\qrv{q})=(F(\qrv{x}),\qrv{q})$.
		
		\item \textbf{$(t,\epsilon)$-Privacy.} For all quantum states $(\qrv{x},\qrv{q})$, we have
		\[
		\Big (\, \hF(\qrv{x}; r) \, ,\, \qrv{q} \, \Big) \approx_{(t,\epsilon)} \Big (\, \Sim(F(\qrv{x})) \, ,\, \qrv{q} \, \Big)
		\]
		where the state on the left-hand side is averaged over $r$.
		The case of $\epsilon=0$ is called \emph{perfect privacy}.
	\end{itemize}
The encoding $\hF$ is a \emph{decomposable} QRE (DQRE) if there exists
a quantum state $\qrv{e}$ (called the \emph{resource state of the encoding}), 
an operation $\hF_\off$ (called the \emph{offline part of the encoding}) and 
a collection of \emph{input encoding operations} $\hF_1,\ldots,\hF_n$ such that for all inputs $\qrv{x} = (\qrv{x}_1,\ldots,\qrv{x}_n)$,
\[
	\hF(\qrv{x}; r) = \big (\hF_\off, \hF_1, \hF_2, \ldots, \hF_n \big) \, (\qrv{x}, r, \qrv{e})
\]
where the functions $\hF_\off, \hF_1,\ldots,\hF_n$ act on disjoint subsets of qubits from $\qrv{e}, \qrv{x}$ (but can depend on all bits of $r$), each $\hF_i$ acts on a single qubit $\qrv{x}_i$, and $\hF_\off$ does not act on any of the qubits of $\qrv{x}$.

\end{definition}

Similarly to the classical case, we refer to the second input $r$ of $\hat{F}$ as the randomness of the encoding. We will often write $\hat{F}(\qrv{x})$ to denote the quantum state $\hat{F}(\qrv{x}; r)$ when $r$ is sampled from the uniform distribution. Furthermore, we say that the \emph{quantum state} $\hat{F}(\qrv{x})$ is the randomized encoding of the operation $F$ and an input $\qrv{x}$.

One can see that this definition of quantum randomized encoding is syntactically similar to \Cref{def:cre}, with a couple differences. First, the correctness and privacy properties involves the pair $(\qrv{x},\qrv{q})$. We refer the reader to \Cref{sec:prelim:notation} for the full explanation of the quantum random variable notation; but in short $(\qrv{x},\qrv{q})$ represents a bipartite density matrix with an $\qrv{x}$ part, and a $\qrv{q}$ part, and these parts may be entangled. The $\qrv{q}$ part is never acted upon by the decoder or simulator, but distinguishability is measured with respect to the encoding of $\qrv{x}$ as well as $\qrv{q}$, which we think of as quantum side information. In other words, correlations between the input and an external system are preserved through the encoding, decoding, and simulation. 

A second difference involves the definition of decomposable QRE. In addition to receiving a random string $r$, the randomized encoding also receives a auxiliary quantum state $\qrv{e}$ (that is independent of the input $\qrv{x}$). The definition allows for any resource state $\qrv{e}$, but in this paper we focus on decomposable QREs where the resource state $\qrv{e}$ is a collection of EPR pairs, which is perhaps the most natural quantum analogue of a randomness string.

Furthermore, similar to the classical setting, it is highly desirable to have efficient QRE schemes that are universal with respect to some property of the quantum operations being encoded, say the topology of some circuit implementation of them. This motivates the following definition of universal QRE scheme, in analogy to \Cref{def:universal-re}.

\begin{definition}[Universal QRE Schemes for Circuits]
\label{def:universal-qre}
	Let $\cC$ denote a class of general quantum circuits\footnote{See \Cref{sec:circuits} for the definition of general quantum circuits.} and let $\cR$ denote an equivalence relation over $\cC$. An \emph{$(t,\epsilon)$-private and efficient $\cR$-universal QRE scheme for the class $\cC$} is a tuple of polynomial-time quantum algorithms $(\Enc,\Dec,\Sim)$ such that given a circuit $F \in \cC$ (here we identify the circuit with the function it computes), 
	\begin{itemize}
		\item \textbf{Efficient Encoding}. For all quantum inputs $\qrv{x}$ and randomness $r$, $\Enc(F, \qrv{x}; r)$ computes a quantum randomized encoding $\hF(\qrv{x}; r)$.

		\item \textbf{Correctness}. For all quantum states $(\qrv{x},\qrv{q})$ and randomness $r$ it holds that $(F(\qrv{x}),\qrv{q}) = (\Dec(c,\hF(\qrv{x}); r),\qrv{q})$ where $c$ denotes the equivalence class of $F$ in $\cR$.
		
		\item \textbf{$(t,\epsilon)$-Privacy}. For all quantum states $(\qrv{x},\qrv{q})$, we have
		\[
		\Big (\, \hF(\qrv{x}; r) \, ,\, \qrv{q} \, \Big) \approx_{(t,\epsilon)} \Big (\, \Sim(c,F(\qrv{x})) \, ,\, \qrv{q} \, \Big)
		\]
		where the state on the left-hand side is averaged over the randomness $r$, and $c$ denotes the equivalence class of $F$ in $\cR$.
	\end{itemize}
	Furthermore, we say that the $\cR$-universal QRE scheme $(\Enc,\Dec,\Sim)$ is \emph{decomposable} if the randomized encoding $\hF(\qrv{x})$ is decomposable and if the input encoding operations $\hF_i$ only depend on the equivalence class of $F$ in $\cR$.
\end{definition}

For the remainder of this paper, when we speak of encoding a quantum operation $F$, we are referring to encoding a specific \emph{circuit implementation} of $F$. Furthermore, in this paper we will be focused on \emph{topologically-universal QRE schemes} -- in other words, the equivalence relation $\cR$ is such that two circuits $F,F'$ are equivalent if they have the same topology (see \Cref{sec:circuits} for the definition of quantum circuit topology).

\subsection{Our Main Result: Existence of Decomposable Quantum Randomized Encodings}
\label{sec:qreresult}

Our main result is an efficient topologically-universal decomposable QRE scheme, which we call \emph{Quantum Garbled Circuits}. We use classical decomposable RE as a building block.

\begin{lemma}[Quantum Garbled Circuits Scheme]\label{lem:main}

	Let $\CRE$ be an efficient topologically-universal and label-universal classical DRE scheme such that for classical circuits $f$ of size $s$ and depth $d$, the time complexity of encoding $f$ is $c(d,s)$ and the length of the labels is $\kappa(d,s)$. Furthermore, suppose that the encoding of $\CRE$ can be computed by $\ncz$ circuits, and that the scheme is $(t,\epsilon)$-private with respect to quantum adversaries.

Recursively define $\kappa_0 = O(1)$, $\kappa_i=\kappa(O(1), O(\kappa^2_{i-1}))$, and define $c_i=c(O(1), O(\kappa^2_{i-1}))$. (All the $O(\cdot)$'s refer to universal constants.)
	
	Then there exists an efficient topologically-universal decomposable QRE scheme $\QRE = (\Enc,\Dec,\Sim)$ that satisfies the following properties: 
	\begin{itemize}
		\item \textbf{Efficiency.} For every operation $F$ computable by a size-$s$ and depth-$d$ quantum circuit and for every quantum input $\qrv{x}$, the encoding $\Enc(F, \qrv{x}; r)$ is computable by a $\qnczf$ circuit of size $O(c_d \cdot s)$. The $\qnczf$ encoding circuit takes as input a string of random bits $r$, the quantum input $\qrv{x}$, and a collection of EPR pairs. Furthermore, the input encoding operations $\hF_i$ can be computed by $\qnczf$ circuits of size $O(\kappa_d)$. The running time of $\Dec$ and $\Sim$ is $O(c_d \cdot s)$. 
		
		\item \textbf{Classical Inputs.} If an input qubit $\qrv{x}_i$ is classical, then the input encoding operation $\hF_i$ is computable by a classical circuit.

		\item \textbf{Privacy.} The scheme $\QRE$ is $(t',\epsilon')$-private where $t' = t - \poly(c_d) \cdot s$ and $s' = \epsilon \cdot s$. Here, $s$ and $d$ refer to the size and depth of the circuit being encoded. 
	\end{itemize}
\end{lemma}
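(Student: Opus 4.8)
The plan is to instantiate the Quantum Garbled Circuits construction sketched in \Cref{sec:overview} and verify its three asserted properties. First I would fix a circuit implementation of $F$ over the universal set $\mathcal{C}_2\cup\{T\}$, so that every gate is a one- or two-qubit Clifford or a single-qubit $T$, and hence every gate teleportation induces a correction lying in (a tensor power of) the PX group --- a Pauli correction for Clifford gates and the $P$-correction of \Cref{fact:tpauli} for $T$ gates. The encoder then: (i) prepares an EPR pair $\qrv{e}^w$ for every wire $w$; (ii) for each gate $G$ feeding wire $v$ into wire $w$, applies $G$ to the output halves of the incoming EPR pairs and then the circuit $\Lambda_1(\ell_{vw})$ of \Cref{fig:switch-order-intro} with fresh random labels $\ell_{vw}$; (iii) for each gate, builds a group-randomizing QRE of the correction circuit $\Lambda_2(\cdot,\ell_{vw},s_{vw},t_{vw})$ --- it samples a uniformly random element $A_{vw}$ of the (depth-one Clifford) group $\mathcal{G}$ containing all such $\Lambda_2$, applies $A_{vw}$ to the $O(\kappa^2)$ ancillas, and uses $\CRE$ to classically DRE-encode the correction function $f_\corr$ mapping teleportation keys $(a,b)$ to a canonical description of $\Lambda_2(R(a,b),\ell,s,t)\cdot A_{vw}^\dagger\in\mathcal{G}$; (iv) uses the four online labels of that classical DRE as the teleportation labels of the topologically preceding gadget, and supplies an output dictionary of labels (base parameter $\kappa_0=O(1)$) for the output wires, while $\ket{0}$-inputs are treated as fixed inputs and traced-out wires receive no dictionary entry. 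The input-encoding operation $\hF_i$ performs the encrypted teleportation measurement of $(\qrv{x}_i,\qrv{e}^{w}_{\inp})$ for the $i$-th input wire $w$, producing the labels that feed the first layer.

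For correctness I would work with a unitary decoder $\Dec$ and establish the per-gate circuit identity of \Cref{fig:switch-order-intro}, namely $\Lambda_3\cdot\Lambda_2(R,\ell,s,t)\cdot\Lambda_1(\ell)=\sf{TP}_{\ell,s,t}\cdot R$, by direct computation on the gadget of \Cref{fig:teleportation-gadget-intro}. Combined with the Clifford and $T$ gate-teleportation identities this shows that $\Dec$, running in topological order, at each gadget uses the two labels obtained from the predecessor gadget together with the classical-DRE offline part to recover $\Lambda_2 A^\dagger$ via $\CDec$, applies it and $\Lambda_3$, and thereby realizes a ``correct-then-teleport'' step $\sf{TP}_{\ell,s,t}\cdot R\cdot G$; chaining these across all gates and undoing the final PX error using the output dictionary yields exactly $F(\qrv{x})$, and since $\Dec$ never touches the side register, $(\Dec(\hF(\qrv{x};r)),\qrv{q})=(F(\qrv{x}),\qrv{q})$ for every $r$. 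I would stress the subtlety that the randomizing bits $s_z,t_z$, withheld from the decoder, decohere the label registers, so a single honest \emph{unitary} $\Dec$ suffices and cannot be abused to read two keys of one gadget coherently.

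For privacy I would run the circuit-hiding (topological universality) argument. The target is a simulator with $(\hF(\qrv{x};r),\qrv{q})\approx_{(t',\epsilon')}(\Sim(F(\qrv{x})),\qrv{q})$. I would first show, by a hybrid over the $O(s)$ gates in topological order, that $\E_r[\Dec(\hF(\qrv{x};r))]$ is $(t',\epsilon')$-close to $F(\qrv{x})\otimes\sigma_c$ for a state $\sigma_c$ depending only on the topology $c$: in the $j$-th step I replace the classical DRE $(\hf_{\corr,\off},\text{labels used})$ of the $j$-th gate's correction function by $\CSim(f_\corr(a,b))=\CSim(\Lambda_2 A_{vw}^\dagger)$ --- legitimate at cost $\epsilon$ in advantage, since the distinguisher composed with the downstream part of $\Dec$ is a size-$t$ (quantum) circuit against $\CRE$ --- and then invoke the group-randomizing property, that $A_{vw}$ uniform in $\mathcal{G}$ makes $\Lambda_2 A_{vw}^\dagger$ a uniformly random element of $\mathcal{G}$ independent of $G$, together with the uniformity of $\ell_{vw},s_{vw},t_{vw}$, to conclude the non-output registers no longer depend on that gate. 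Summing the $s$ steps gives advantage $\epsilon'=\epsilon\cdot s$ and threshold $t'=t-\poly(c_d)\cdot s$ (absorbing the decoder's size). Finally, setting $\Sim(c,\rho):=\Dec^{-1}(\rho\otimes\sigma_c)$ and applying the size-$\poly(c_d)\cdot s$ unitary $\Dec^{-1}$ to both sides of the displayed closeness --- using the distinguisher-composition bound of \Cref{sec:prelim:notation}, that $\Dec^{-1}(\E_r[\Dec(\hF(\qrv{x};r))])=\E_r[\hF(\qrv{x};r)]$ by unitarity, and that $\qrv{q}$ is untouched --- yields $(\hF(\qrv{x};r),\qrv{q})\approx_{(t',\epsilon')}(\Sim(c,F(\qrv{x})),\qrv{q})$. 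I expect this to be the main obstacle: one must argue the hybrid replacements compose --- in particular that the decoder's use of labels obtained at one gadget to open the classical DRE at the next does not create a dependency defeating the per-gate application of $\CRE$'s privacy --- and that the ``effective measurement'' forced by the $s_z,t_z$ twirls is exactly what validates this composition against a distinguisher holding the whole encoding.

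For efficiency, decomposability, and the parameter recursion I would observe that the quantum part of the encoder --- EPR preparation, the gates $G$, the circuits $\Lambda_1(\ell)$, and the random $\mathcal{G}$-elements $A_{vw}$ --- is constant-depth over one- and two-qubit gates plus fan-out gates (the fan-outs copy bits of $r$ to drive the many controlled-$X(\cdot)$'s in the gadgets and realize the teleportation CNOT fan-outs), hence in $\qnczf$, while the classical DREs are computed in parallel, each by an $\ncz$ circuit on $(r,\text{EPR-measurement bits})$ by the hypothesis on $\CRE$, which embeds in $\qnczf$. Tracking label lengths from the output side layer by layer gives the recursion: the output dictionary has parameter $\kappa_0=O(1)$; a gadget whose teleportation labels have length $\kappa_{i-1}$ has a constant-depth correction function $f_\corr$ of size $O(\kappa_{i-1}^2)$, whose classical DRE has labels of length $\kappa_i=\kappa(O(1),O(\kappa_{i-1}^2))$ and is computed in time $c_i=c(O(1),O(\kappa_{i-1}^2))$; over $d$ layers the deepest (input-adjacent) DREs dominate, giving encoding size $O(c_d\cdot s)$, input-encoding operations of size $O(\kappa_d)$, and $\Dec,\Sim$ running time $O(c_d\cdot s)$ via $\CDec,\CSim$. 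Decomposability is immediate --- $\hF_i$ touches only $\qrv{x}_i$, $r$, and a constant number of EPR qubits and depends only on the topology --- and when $\qrv{x}_i$ is a computational-basis state the encrypted teleportation measurement reduces to a classical computation on $x_i$, the classical part of $r$, and the pre-measured EPR outcomes, so $\hF_i$ is then a classical circuit.
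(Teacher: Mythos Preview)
Your proposal is correct and follows essentially the same route as the paper: the construction, the per-gate identity $\Lambda_3\Lambda_2\Lambda_1=\sf{TP}\cdot R$, the unitary decoder, the $s_z,t_z$-twirl decoherence argument, the label-length recursion, and the privacy analysis via ``apply $\Dec$, obtain $F(\qrv{x})\otimes(\text{topology-only tail})$, then invert $\Dec$'' all match the paper's Sections~\ref{sec:construction}--\ref{sec:analysis} closely.

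The one substantive difference is your choice of simulator. You set $\Sim(c,\rho)=\Dec^{-1}(\rho\otimes\sigma_c)$, which requires you to explicitly sample the topology-dependent tail $\sigma_c$ (built from $\CSim$ outputs on uniformly random correction gadgets). The paper instead takes $\Sim(c,\rho)=\hat{E}(P_\xi^{-1}(\rho,0))$, i.e.\ it simply runs the \emph{encoder} on the all-identity circuit $E$ of the same topology, applied to the (suitably permuted and zero-padded) output. The paper then observes that $\Dec(\hat{E}(P_\xi^{-1}(F(\qrv{x}))))=F(\qrv{x})\otimes\rho'$ with $\rho'\approx\tilde\rho\approx\rho$, so applying $\Dec^{-1}$ to both sides gives $\hat F(\qrv{x})\approx\hat E(P_\xi^{-1}(F(\qrv{x})))$. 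This buys two things: the simulator's efficiency is immediate (it is literally an invocation of $\Enc$), and topological universality is manifest without having to argue separately that $\sigma_c$ is efficiently samplable. Your version works too, but you should add a line noting that $\sigma_c$ is a product of maximally mixed qubits, fixed strings, and $\CSim$ outputs on uniform elements of the randomization group, hence samplable in time $O(c_d\cdot s)$.
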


\begin{remark}
	As mentioned in Remark~\ref{rmk:universalcircuitcre}, we can apply our encoding scheme to a universal circuit 
	rather than to $F$ itself, and consider the classical description of $F$ as an additional input. This would incur some overhead due to the use of the universal circuit but will have properties that may be useful in some settings. In particular, the dependence of the encoding on the description of $F$ becomes very simple and since the description is classical, the encoding of the input $F$ also becomes classical. Furthermore, if the input $\qrv{x}$ is classical as well, then the quantum part of the encoding $\hF(\qrv{x})$ is independent of both $F,\qrv{x}$ and can be generated beforehand as a ``resource state'' that is given to the encoder. 
\end{remark}

The proof of Lemma~\ref{lem:main} is presented in Sections~\ref{sec:construction} and~\ref{sec:analysis}. Specifically, Section~\ref{sec:construction} describes the encoding, decoding, and simulation procedures of our Quantum Garbled Circuits scheme. The correctness and privacy properties of the scheme are then analyzed in \Cref{sec:analysis}.

By instantiating $\CRE$ in Lemma~\ref{lem:main} with the classical RE schemes from Theorem~\ref{thm:classicalperfectre} and Theorem~\ref{thm:classicalcomre}, respectively, the following theorems immediately follow.

\begin{theorem}[Information Theoretic DQRE]\label{thm:perfectqre}
		There exists an efficient topologically-universal decomposable QRE scheme $\QRE = (\Enc,\Dec,\Sim)$ with the following properties: 
\begin{itemize}
	\item \textbf{Efficiency.} For every operation $F$ computable by a size-$s$ and depth-$d$ quantum circuit and for every quantum input $\qrv{x}$, the encoding $\Enc(F, \qrv{x}; r)$ is computable by a $\qnczf$ circuit of size $\poly(2^{2^d}) \cdot s$. The $\qnczf$ encoding circuit takes as input a string of random bits $r$, the quantum input $\qrv{x}$, and a collection of EPR pairs. Furthermore, the input encoding operations $\hF_i$ can be computed by $\qnczf$ circuits of size $\poly(2^{2^d})$. The running time of $\Dec$ and $\Sim$ is $\poly(2^{2^d}) \cdot s$. 
	
	\item \textbf{Classical Inputs.} If an input qubit $\qrv{x}_i$ is classical, then the input encoding operation $\hF_i$ is computable by a classical circuit.
	
	\item \textbf{Perfect Information-Theoretic Privacy.} The scheme has perfect privacy against the class of \emph{all} distinguishers.
	\end{itemize}
\end{theorem}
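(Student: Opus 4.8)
The plan is to derive Theorem \ref{thm:perfectqre} as a direct corollary of Lemma \ref{lem:main} (the Quantum Garbled Circuits scheme) by instantiating the abstract classical DRE building block $\CRE$ with the concrete information-theoretic scheme guaranteed by Theorem \ref{thm:classicalperfectre}. First I would verify that the scheme of Theorem \ref{thm:classicalperfectre} meets all the hypotheses demanded by Lemma \ref{lem:main}: it is efficient, topologically-universal, and label-universal; by the locality and efficiency properties noted after Theorem \ref{thm:classicalperfectre}, its encoding is computable by $\ncz$ circuits; and since it has \emph{perfect} information-theoretic privacy it is $(t,\epsilon)$-private against quantum adversaries for \emph{all} $t$ with $\epsilon = 0$. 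For a size-$s$, depth-$d$ classical circuit the encoding time is $c(d,s) = \poly(2^d)\cdot s$ and the label length is $\kappa(d,s) = \poly(2^d)$.

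Next I would unwind the recursion defining $\kappa_i$ and $c_i$ in Lemma \ref{lem:main} under this instantiation. We have $\kappa_0 = O(1)$ and $\kappa_i = \kappa(O(1), O(\kappa_{i-1}^2)) = \poly(2^{O(\kappa_{i-1}^2)})$, so $\log \kappa_i = O(\kappa_{i-1}^2)$, which gives $\log\log \kappa_i = O(\log \kappa_{i-1})$; iterating the latter $d$ times from $\kappa_0 = O(1)$ yields $\log \log \kappa_d = O(d)$, i.e.\ $\kappa_d = \poly(2^{2^{O(d)}})$, and likewise $c_d = c(O(1), O(\kappa_{d-1}^2)) = \poly(2^{O(\kappa_{d-1}^2)}) = \poly(2^{2^{O(d)}})$. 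Absorbing the universal constant in the exponent (as is standard, replacing $2^{2^{O(d)}}$ by $2^{2^d}$ at the cost of the implicit constants hidden in $\poly$), the efficiency bounds of Lemma \ref{lem:main} become: the encoding $\Enc(F,\qrv x; r)$ is a $\qnczf$ circuit of size $O(c_d \cdot s) = \poly(2^{2^d}) \cdot s$; each input encoding operation $\hF_i$ is a $\qnczf$ circuit of size $O(\kappa_d) = \poly(2^{2^d})$; and $\Dec,\Sim$ run in time $O(c_d\cdot s) = \poly(2^{2^d})\cdot s$. These are exactly the claimed parameters. The ``Classical Inputs'' clause transfers verbatim from Lemma \ref{lem:main}, since that property there did not depend on which $\CRE$ was used. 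Finally, for privacy: Lemma \ref{lem:main} gives $(t',\epsilon')$-privacy with $t' = t - \poly(c_d)\cdot s$ and $\epsilon' = \epsilon\cdot s$; since the instantiated $\CRE$ has $\epsilon = 0$ and is private for all $t$, we get $\epsilon' = 0$ and privacy against distinguishers of every size $t'$, i.e.\ perfect information-theoretic privacy against the class of all distinguishers.

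There is essentially no technical obstacle here — the entire content is in Lemma \ref{lem:main}, whose proof is deferred to Sections \ref{sec:construction} and \ref{sec:analysis}. The one step requiring a modicum of care is the recursion analysis: one must check that the double-exponential is genuinely the right growth rate, i.e.\ that squaring the label length at each layer (from the $\Theta(\kappa^2)$ size of the $\Lambda_2$ correction circuits, as explained in the technical overview) compounds over $d$ layers to $2^{2^{\Theta(d)}}$ rather than something larger, and that the constants in the $O(\cdot)$'s can be uniformly absorbed. This is the routine calculation sketched above; everything else is bookkeeping, transcribing the conclusions of Lemma \ref{lem:main} with the stated values of $c_d$ and $\kappa_d$ substituted in.
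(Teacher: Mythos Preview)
Your approach is exactly the paper's: instantiate Lemma~\ref{lem:main} with the information-theoretic DCRE of Theorem~\ref{thm:classicalperfectre} and read off the parameters. The paper's own proof is essentially a two-line version of what you wrote.

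That said, your recursion analysis is off. You write $\kappa_i = \kappa(O(1), O(\kappa_{i-1}^2)) = \poly(2^{O(\kappa_{i-1}^2)})$, but the $\poly(2^d)$ bound in Theorem~\ref{thm:classicalperfectre} refers to the \emph{depth} $d$ of the classical circuit being encoded, and here that depth is the universal constant $d_{\corr}$, not $\kappa_{i-1}^2$. The correct dependence (spelled out in the technical overview) is that for the constant-depth, size-$\Theta(\kappa_{i-1}^2)$ correction circuits, the information-theoretic labels must carry the $\Theta(\kappa_{i-1}^2)$-bit output, so $\kappa_i = \poly(\kappa_{i-1})$; hence $\log\kappa_i = O(\log\kappa_{i-1})$, which iterates to $\log\kappa_d = 2^{O(d)}$ and thus $\kappa_d = \poly(2^{2^d})$. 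Moreover, even granting your (incorrect) relation $\log\log\kappa_i = O(\log\kappa_{i-1})$, it does \emph{not} iterate to $\log\log\kappa_d = O(d)$: setting $b_i = \log\kappa_i$ you would have $\log b_i = O(b_{i-1})$, i.e.\ $b_i = 2^{O(b_{i-1})}$, which is a tower of height $d$, not $2^{O(d)}$. So your final answer is right, but the two computational steps getting there are both wrong and happen to cancel.
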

\begin{proof}
Plugging the properties of the DCRE scheme from Theorem~\ref{thm:classicalperfectre} into the conditions of Lemma~\ref{lem:main}, we get $\kappa_i = \poly(2^{2^i})$, $c_i = \poly(2^{2^i})$ and $(t,\epsilon=0)$-privacy. The result thus follows.
\end{proof}

\begin{theorem}[Computational DQRE]\label{thm:compdqre}
	Assume there exists a length doubling pseudorandom generator (PRG) $\prg$ that is secure against polynomial-time quantum adversaries. There exists an efficient topologically-universal decomposable QRE scheme $\QRE = (\Enc,\Dec,\Sim)$, which implicitly depends on a security parameter $\secp$, and has the following properties:

	\begin{itemize}
		\item \textbf{Efficiency.} For every operation $F$ computable by a size-$s$ and depth-$d$ quantum circuit and for every quantum input $\qrv{x}$, the encoding $\Enc(F, \qrv{x}; r)$ is computable by a $\qnczf$ circuit of size $\poly(\secp) \cdot s$. The $\qnczf$ encoding circuit takes as input a string of random bits $r$, the output $\prg(r)$ of the PRG, the quantum input $\qrv{x}$, and a collection of EPR pairs.\footnote{As in the classical case, the PRG is used in a black-box manner. If we allow non-black-box use of the PRG and if it can be computed by a $O(\log \secp )$-depth circuit, then the encoding circuit can be made fully in $\qnczf$ that takes as input only $\qrv{x}$, the randomness $r$, and EPR pairs.} Furthermore, the input encoding operations $\hF_i$ can be computed by $\qnczf$ circuits of size $\poly(\secp)$. The running time of $\Dec$ and $\Sim$ is $\poly(\secp) \cdot s$. 
	
	\item \textbf{Classical Inputs.} If an input qubit $\qrv{x}_i$ is classical, then the input encoding operation $\hF_i$ is computable by a classical circuit.

		\item \textbf{Computational Privacy.} For every polynomial $t(\secp)$, there exists a negligible function $\epsilon(\secp)$ such that the scheme is $(t'(\secp,s),\epsilon'(\secp,s))$-private with respect to quantum adversaries, where $t'(\secp,s) = t(\secp) - \poly(\secp,s)$ and $\epsilon'(\secp,s) = \epsilon(\secp) \cdot s$ with $s$ being the size of the circuit being encoded.

	\end{itemize}
\end{theorem}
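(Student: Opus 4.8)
The plan is to derive Theorem~\ref{thm:compdqre} as an immediate corollary of Lemma~\ref{lem:main}, by instantiating the classical building block $\CRE$ with the computational decomposable RE scheme of Theorem~\ref{thm:classicalcomre} — exactly the way Theorem~\ref{thm:perfectqre} was obtained from Theorem~\ref{thm:classicalperfectre}. First I would recall the relevant parameters of the scheme of Theorem~\ref{thm:classicalcomre}: the encoding time is $c(d,s) = \poly(\secp) \cdot s$ and the label length is $\kappa(d,s) = \poly(\secp)$, \emph{both independent of the depth} $d$ (and the label length independent of the size $s$ as well); the encoding is computable by $\ncz$ circuits taking as input $x$, the randomness $r$, and $\prg(r)$ (or taking only $x,r$ if the PRG is used non-black-box and has $O(\log\secp)$ depth); and the scheme is $(t(\secp),\epsilon(\secp))$-private against quantum distinguishers, for every polynomial $t(\secp)$ and a corresponding negligible $\epsilon(\secp)$.

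Next I would evaluate the recursion from Lemma~\ref{lem:main} under this instantiation. Since $\kappa(\cdot,\cdot)$ always outputs $\poly(\secp)$ regardless of its arguments, the recursion $\kappa_0 = O(1)$, $\kappa_i = \kappa(O(1), O(\kappa_{i-1}^2))$ never leaves $\poly(\secp)$; concretely $\kappa_i = \poly(\secp)$ for all $i$, and correspondingly $c_i = c(O(1), O(\kappa_{i-1}^2)) = \poly(\secp) \cdot O(\kappa_{i-1}^2) = \poly(\secp)$ for all $i$. This is precisely the point of departure from the information-theoretic case of Theorem~\ref{thm:perfectqre}, where the label length genuinely grows with $d$ and the recursion produces a doubly-exponential tower; here the PRG-based labels sever the recursion already at the first level.

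Then I would read off each claimed property directly from the conclusion of Lemma~\ref{lem:main} with $c_d = \kappa_d = \poly(\secp)$ substituted in: the encoding is a $\qnczf$ circuit of size $O(c_d \cdot s) = \poly(\secp) \cdot s$ taking $r$, $\prg(r)$, $\qrv{x}$, and EPR pairs as input — with the black-box/non-black-box distinction about $\prg$ inherited verbatim from the locality clause of Theorem~\ref{thm:classicalcomre}; the input encoding operations $\hF_i$ have size $O(\kappa_d) = \poly(\secp)$ and remain classical whenever $\qrv{x}_i$ is classical; and $\Dec, \Sim$ run in time $O(c_d \cdot s) = \poly(\secp) \cdot s$. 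For privacy, Lemma~\ref{lem:main} yields $(t',\epsilon')$-privacy with $t' = t - \poly(c_d)\cdot s = t(\secp) - \poly(\secp)\cdot s = t(\secp) - \poly(\secp, s)$ and $\epsilon' = \epsilon \cdot s = \epsilon(\secp)\cdot s$; quantifying over all polynomials $t(\secp)$ and taking the matching negligible $\epsilon(\secp)$ from Theorem~\ref{thm:classicalcomre} gives exactly the stated guarantee.

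There is essentially no substantive obstacle: the real work lives in Lemma~\ref{lem:main}, whose proof appears in Sections~\ref{sec:construction} and~\ref{sec:analysis}. The only steps needing care are (i) checking that the recursion genuinely collapses — i.e. that feeding a depth- and size-independent $\kappa$ into $\kappa_i = \kappa(O(1), O(\kappa_{i-1}^2))$ does not secretly reintroduce growth, which it does not since each application of $\kappa(\cdot,\cdot)$ returns $\poly(\secp)$ — and (ii) faithfully propagating the bookkeeping about how $\prg(r)$ enters the $\qnczf$ encoding circuit so that it matches the locality statement of Theorem~\ref{thm:classicalcomre}. Everything else is direct substitution into Lemma~\ref{lem:main}.
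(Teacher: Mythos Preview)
Your proposal is correct and matches the paper's approach exactly: the paper's proof is a three-line instantiation of Lemma~\ref{lem:main} with the computational DCRE of Theorem~\ref{thm:classicalcomre}, computing $\kappa_i = \poly(\secp)$ and $c_i = \poly(\secp)$ and reading off the privacy bound $t' = t - \poly(c_d)\cdot s = t(\secp) - \poly(\secp,s)$, $\epsilon' = \epsilon(\secp)\cdot s$. Your write-up is in fact slightly more careful than the paper's in tracking the recursion and in explaining why the label-length recursion collapses.
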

\begin{proof}
	Plugging the properties of the DCRE scheme from Theorem~\ref{thm:classicalcomre} into the conditions of Lemma~\ref{lem:main}, we get $\kappa_i = \poly(\secp)$, $c_i = \poly(\secp) \cdot s$. Since $\poly(c_d) \cdot s = \poly(\secp,s)$, we get that the scheme is $(t',\epsilon')$-private for the functions $t'(\secp,s)$ and $\epsilon'(\secp,s)$ specified in the Theorem statement.
	
\end{proof}

 \SetAlgorithmName{Protocol}{Protocol} 

\newcommand{\com}{\mathsf{Com}}
\newcommand{\ver}{\mathsf{Ver}}
\newcommand{\ZKSim}{\mathsf{ZKSim}}
\newcommand{\View}{\mathrm{View}}
\newcommand{\malv}{V^*}
\def\malV{\malv}
\newcommand{\malP}{P^*}

\newcommand{\malp}{P^*}
\def\malP{\malp}

\section{A New Zero-Knowledge $\Sigma$-Protocol for $\qma$}
\label{sec:zk}

We present a $3$-message zero-knowledge $\Sigma$-protocol for any $\qma$ problem. Our construction generically uses QRE with certain properties, as well as quantum-secure classical commitment schemes. Both are instantiable under (flavors of) quantum-secure one-way functions (QRE only requires general one-way functions, and the commitment schemes we use can be achieved with injective one-way functions or from arbitrary one-way functions in the presence of a common random string).

\subsection{Building Blocks}

We start by going over the building blocks that are used in our protocol.

\paragraph{A $\qma$ Problem with Almost-Perfect Soundness.} Let $L = (L_{yes},L_{no})$ be a promise problem in $\qma$. Let $V_L = \{V_{L,n}\}$ denote the corresponding $\qma$ verifier, specified as a (uniform) family of polynomial-size circuits that takes as input $(x,\qrv{w})$ where $x$ is an instance of the language and $\qrv{w}$ is a $\poly(|x|)$-sized quantum witness. If $x \in L_{yes}$, then $V_L(x,\qrv{w})$ accepts with probability at least $1 - \mu(|x|)$ and if $x \in L_{no}$, $V_L(x,\qrv{w})$ accepts with probability at most $ \mu(|x|)$ where $\mu(\cdot)$ is a negligible function. We let $\cal{R}_L(x)$ denote the set of witnesses $\qrv{w}$ on which $V_L$ accepts with probability at least $1-\mu(|x|)$.

\paragraph{A Quantum Randomized Encoding Scheme.} Let $\QRE$ denote an efficient, computationally-secure quantum randomized encoding scheme satisfying the properties of \Cref{thm:compdqre}. Since the $\QRE$ is decomposable and has classical encoding for classical inputs, we can assume that the scheme has the following structure: given a circuit $F$ and input $(\qrv{q},c)$ where $\qrv{q}$ is an $n_1$-qubit quantum state and $c$ is an $n_2$-bit classical string, the encoding $\hat{F}(\qrv{q},c)$ can be written as
\[
	\Big( \hF_\off, \hF_1, \ldots, \hF_{n_1 + n_2} \Big) (\qrv{q},c , r, \qrv{e})
\]
where $r$ is a uniformly random string, $\qrv{e}$ is a collection of EPR pairs, $\hF_\off$ acts on $(r,\qrv{e})$ only, and $\hF_{n_1+1}(c_1 ; r),\ldots,\hF_{n_1+n_2}(c_{n_2}; r)$ are classical circuits that encode each bit of $c$ separately. Thus for each $i \in [n_2]$ and $r$, we can define classical labels for each $c_i$:
\[
	\lab_{i,b}(r) = \hF_{n_1 + i}(b ; r)\;.
\]
Furthermore, we assume that for a fixed value of $r$, the operations $\hF_\off$ and $\hF_i$ can be implemented via polynomial-size unitary circuits, possibly using some additional zero ancillas.\footnote{This is true for our $\QGC$ scheme and can be assumed without loss of generality. The reason, briefly, is that we can always consider purified versions of the $\hF$ functions, and then instead of tracing out a part of the output, just encrypt it with a quantum one-time pad, using classical bits from a (possibly extended) $r$.}

We now consider the quantum functionality that, for a fixed value of $r$, takes as input the quantum input $\qrv{q}$ and a sequence of zero ancilla bits, and outputs the quantum state
\[
\Big (\hat{\qrv{F}}_0, \{ \lab_{i,b} \}_{i \in [n_2], b \in \binset}, 0 \Big) ~,
\] 
where $\hat{\qrv{F}}_0$ denotes the part output by $(\hF_\off,\hF_1,\ldots,\hF_{n_1})$, and where $0$ represents a sequence of zero qubits. 

Since the encoding is unitary as we explained above, this functionality can be implemented by a unitary circuit $U_r$ that first creates a number of EPR pairs $\qrv{e}$ from the zero qubits, and then applies the encoding functions to compute $\hat{\qrv{F}}_0$ and $\{\lab_{i,b} \}_{i,b}$. 
In other words, $U_r(\qrv{q},0)$ is almost the same as the QRE encoding of $\qrv{F}(\qrv{q},c)$, except for the classical input it outputs \emph{all} labels, and uses zero ancillas for scratch space. We note that for a fixed $c \in \binset^{n_2}$, the state $(\hat{\qrv{F}}_0, \{ \lab_{i,c_i} \}_{i \in [n_2]})$ constitutes the encoding of $F(\qrv{q}, c)$.

\paragraph{A Quantum-Secure Classical Commitment Scheme.} We require a perfectly-binding non-interactive commitment scheme secure against quantum adversaries. Such a commitment scheme is defined as a polynomial-time function $\com$ that takes as input a security parameter $\secp$, an input message $m$ and randomness $s$ and outputs a string $c = \com(1^\secp, x; s)$ with the following properties: 
\begin{itemize}
	\item (\emph{Perfect Binding}). For all $\secp$ there does not exist $x_1 \neq x_2$ and $s_1, s_2$ such that $\com(1^\secp, x_1; s_1)=\com(1^\secp, x_2; s_2)$.
	
	\item (\emph{Computational Hiding Against Quantum Adversaries}). 
	For any sequence of $\{x_{1,\secp}, x_{2,\secp}\}_{\secp}$, where $|x_{1,\secp}| = |x_{2,\secp}| = \poly(\secp)$, it holds that the distributions $\com(1^\secp, x_{1,\secp}; s_1)$ and $\com(1^\secp, x_{2,\secp}; s_2)$ where $s_1, s_2$ are sampled uniformly cannot be distinguished by any polynomial-time quantum circuit with non-negligible advantage. 
\end{itemize}
Such a commitment scheme follows from the existence of quantum-secure injective one-way functions, or in the common-random-string model from any one-way function. There are also explicit constructions from post-quantum assumptions (see, e.g., ~\cite{jain2012commitments}). 

Commitment schemes like this can be used in the following manner: to commit to a message $x$, the sender samples a randomness $s$ and computes $c = \com(1^\secp,x; s)$, and sends $c$ to the receiver. To reveal the message $x$, the sender sends $(x,s)$ to the receiver. The receiver can verify that this is a valid commitment by checking that $c = \com(1^\secp,x; s)$. 

For more background on commitment schemes see, e.g.,~\cite{GoldreichBook1}.

\paragraph{Quantum Teleportation.} We recall the functionality of quantum teleportation. Let $\qrv{e} = (\qrv{e}_1,\qrv{e}_2)$ denote $m$ EPR pairs, where $\qrv{e}_1$ denotes all ``first-halves'' of the EPR pairs, and $\qrv{e}_2$ denotes all the ``second halves''. Suppose that Alice has $\qrv{e}_1$ along with an $m$-qubit quantum state $\qrv{w}$, and Bob has $\qrv{e}_2$. Alice can teleport $\qrv{w}$ to Bob by performing a measurement on $(\qrv{w},\qrv{e}_1)$ to obtain measurement outcomes $(u,v) \in (\binset^m)^2$. The post-measurement outcome on Bob's side is $X^{u_1} Z^{v_1} \otimes \cdots \otimes X^{u_m} Z^{v_m}(\qrv{w})$. Alice can then send $(u,v)$ to Bob so he can undo the Pauli corrections.

\subsection{Delayed-Input Zero-Knowledge $\Sigma$-Protocols} 

We now recall the definition of zero-knowledge $\Sigma$-protocols with the delayed-input property. We will use the following notation. Let $P,V$ be a pair of interactive machines (interpreted as prover and verifier respectively). We let $\langle P(y_p), V(y_v) \rangle (x)$ denote the output of the verifier $V$ after completing an interaction with $P$, in which $P$ has private input $y_p$, $V$ has private input $y_v$ and they also have a common input $x$ (the inputs $y_p, y_v$ can be classical or quantum).

A pair $(P, V)$ of a quantum polynomial-time ``honest'' prover $P$ and an ``honest'' verifier $V$ is a \emph{quantum interactive proof system} for $L$ if there exist numbers $\alpha$ (called the \emph{completeness}) and $\beta$ (called the \emph{soundness}) such that $\alpha > \beta$ such that
\begin{itemize}
	\item (\emph{Completeness}.) If $x \in L_{yes}$ and $\qrv{w} \in \cal{R}_L(x)$, then 
	$\Pr[\langle P(\qrv{w}),V \rangle(x) \text{ accepts}] \geq \alpha$. 	
	
	\item (\emph{(Statistical) Soundness}.) If $x \in L_{no}$, then for any prover $\malP$ (possibly computationally-unbounded) it holds that $\Pr[\langle \malP,V \rangle(x) \text{ accepts}] \leq \beta$.
\end{itemize}

\paragraph{Adaptive Soundness.}
For public-coin protocols (and in particular in our protocol) we can consider a stronger notion of \emph{adaptive} soundness, in which the instance $x$ is not specified ahead of time, but rather is produced by $\malP$ in the end of the interaction. In this case it will be convenient for us to specify that the output of the verifier includes the accept/reject bit also the instance $x$ produced by $\malP$. Under this convention, adaptive soundness is the requirement that for any adaptive adversary $\malP$ it holds that 
\[
\Pr[\langle \malP,V \rangle = (\text{accept},x) \land x \in L_{no}] \leq \beta~.
\]
Note that the winning event in this case is not necessarily efficiently recognizable.

\paragraph{Zero-Knowledge}
The proof system is furthermore \emph{(computational) zero-knowledge} if there exists a a quantum polynomial-time simulator $\ZKSim$ such that for any malicious verifier $\malV$ and for any asymptotic sequence of instances $x \in L_{yes}$, $\qrv{w} \in \cal{R}_L(x)$ and quantum $\poly(|x|)$-qubit auxiliary input $\qrv{y}$ it holds that the output state of $\ZKSim(\malV, x, \qrv{y})$ and $\langle P(\qrv{w}), \malV(\qrv{y}) \rangle(x)$ are computationally indistinguishable. We recall that the latter expression refers to the quantum output produced by $\malV$ at the end of the interaction.

An interactive protocol is a \emph{$\Sigma$-protocol} if it consists of a prover sending the first message $\qrv{m}_1$, the verifier sending a uniformly random (classical) challenge $m_2$, the prover sending a response $\qrv{m}_3$, and the verifier decides whether to accept or reject based on the transcript. We note that for an honest verifier the transcript is well defined even if the messages are quantum, since $m_2$ is a classical random string independent of $\qrv{m}_1$.

Finally, we say that a zero-knowledge $\Sigma$-protocol has the \emph{delayed-input property} if the prover only receives the instance $x$ and a quantum witness $\qrv{w}$ after it has sent the first message. In other words, the prover's first message $\qrv{m}_1$ is computed independently of the instance $x$ and witness $\qrv{w}$.

\subsection{Our Proof System}

We present a zero-knowledge $\Sigma$-protocol $\langle P, V \rangle$ for $L$ in Protocol~\ref{fig:zk-qma}. Note that the prover only gets the instance $x$ and witness $\qrv{w}$ right before generating the third message. 

\paragraph{Parameters and Definitions.}
We assume that at the beginning all parties know the instance length $n$. To avoid clutter we set the security parameter $\secp$ to be equal to $n$.
We let $m=m(n)$ denote the length of the witness required for instances of length $n$ of $L$. We let $F$ denote the following quantum circuit, that takes $n+2m$ classical bits (partitioned into strings $x,u,v$ of length $n,m,m$ respectively) and $m$ quantum bits as inputs. On input $(x,u,v,\qrv{\tilde{w}})$ the circuit does the following. It first computes $\qrv{w} = (X^{u_1} Z^{v_1} \otimes \cdots \otimes X^{u_m} Z^{v_m}) (\qrv{\tilde{w}})$, i.e.\ applies a quantum one-time pad, indicated by the vectors $u,v$ to the quantum input. It then executes $V_L(x,\qrv{w})$ and outputs the single-bit outcome of this computation.

\vspace{10pt}
\IncMargin{1em}
\begin{algorithm}
\DontPrintSemicolon

\textbf{Global Parameters:} Instance length $n$, witness length $m=m(n)$.
\BlankLine
\textbf{Prover}: \;
\Indp Generate $m$ EPR pairs $\qrv{e} = (\qrv{e}_1, \qrv{e}_2)$ for the purpose of quantum teleportation. \;

Sample a random string $r$ and execute the circuit $U_r$ on input $(\qrv{e}_2,0)$, where the $0$ represents a sufficient number of ancilla zeroes. Let $\Big (\hat{\qrv{F}}_0,  \{\lab_{i,b}\}_{i \in [n+2m], b \in \binset} \Big)$ denote the outcome of this computation. \;

 Sample random strings $s_{i,b}$ for all $i \in [n+2m]$, $b \in \binset$ and compute the commitment $c_{i,b} = \com(1^\secp, \lab_{i,b}; s_{i,b})$. Then, sample a random string $s_0$ and compute the commitment $c_0 = \com(1^\secp, r; s_0)$. \; \label{i:aftergc}

Send $\Big( \hat{\qrv{F}}_0,c_0,(c_{i,b})_{i \in [n+2m], b \in \binset} \Big)$ to the verifier.\;
\BlankLine
\Indm \textbf{Verifier}: \;
\Indp Send random bit $b$.
\BlankLine
\Indm \textbf{Prover} (given $x, \qrv{w})$: \;

\Indp If $b = 0$: 

\Indp Open all commitments, i.e.\ send $(r, s_0)$ and $(\lab_{i,b}, s_{i,b})_{i \in [n+2m], b \in \binset}$ to the verifier.
\;
\Indm 
If $b = 1$: 

\Indp Teleport the witness $\qrv{w}$ into the ``first halves'' $\qrv{e}_1$ of the EPR pairs $\qrv{e}$ to obtain classical strings $(u,v) \in \binset^m$. \; 

Consider the concatenated string $z = (x,u,v)$. Open the commitments corresponding to $z$, i.e.\ send $(z_i, \lab_{i,z_i}, s_{i,z_i})_{i \in [n+2m]}$ to the verifier.\label{i:sendopenings}
\;
\BlankLine
\Indm \Indm \textbf{Verifier} (given $x$): \;
\Indp If $b=0$:

\Indp Check that all commitments are valid. If any of them are invalid, then reject. \;

Apply the inverse circuit $U_r^{-1}$ to $(\hat{\qrv{F}}_0, \{\lab_{i,b}\}_{i \in [n+2m], b \in \binset})$, and let $(\qrv{e}_2',\qrv{q})$ denote the output. Check that $\qrv{q}$ is the all zeroes state. If so, then accept. Otherwise, reject. \;

\Indm If $b=1$:

\Indp  Check that $z=(x,u,v)$ for the instance $x$ and some $u,v$. If not then reject. \label{zk:check-x} \;

Check that all commitment openings are valid. If any of them are invalid, then reject. \label{zk:check-com-2} \; 

Decode the QRE $(\hat{\qrv{F}}_0, \{\lab_{i,z_i}\}_{i \in [n+2m]})$ to obtain a single-qubit output; return the output of this evaluation. \label{zk:eval-qre}

\caption{A zero-knowledge $\Sigma$-protocol for $\qma$ problem $L$}\label{fig:zk-qma}
\end{algorithm}\DecMargin{1em}
\vspace{10pt}

\begin{lemma}[Completeness] 
\label{lem:zk-completeness}
There exists a negligible function $\mu$ such that for all $x \in L_{yes}$ and $\qrv{w} \in \cal{R}_L(x)$, the honest prover described in Protocol~\ref{fig:zk-qma} runs in polynomial time given the input $(x,\qrv{w})$, and is accepted by the verifier with probability at least $1 - \mu(|x|)$.
\end{lemma}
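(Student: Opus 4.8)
The plan is to verify the two assertions of the completeness lemma separately: (i) the honest prover runs in polynomial time, and (ii) the honest verifier accepts with probability $1-\mu(|x|)$. Both follow by tracing through Protocol~\ref{fig:zk-qma} and invoking the efficiency and correctness properties of the $\QRE$ scheme (\Cref{thm:compdqre}) and the correctness of the commitment scheme, together with the soundness/completeness parameters of the underlying $\qma$ verifier $V_L$.

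\medskip

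\textbf{Polynomial-time prover.} I would observe that every step the prover performs is polynomial-time: generating $m$ EPR pairs, sampling $r$, running the unitary circuit $U_r$ on $(\qrv{e}_2,0)$ (which by construction first creates the EPR pairs $\qrv{e}$ internally and then applies the offline encoding $\hF_\off$ and the input-encoding operations $\hF_1,\dots,\hF_{n_1}$, all of polynomial size by \Cref{thm:compdqre} since $V_L$ is a polynomial-size circuit), sampling the $\poly(n)$-many randomness strings $s_{i,b},s_0$, computing the commitments $c_{i,b},c_0$, and — upon receiving the challenge $b$ — either opening all commitments (just sending strings), or performing the teleportation measurement on $(\qrv{w},\qrv{e}_1)$ and opening the $n+2m$ commitments indexed by $z=(x,u,v)$. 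Since $m=m(n)=\poly(n)$ and $\secp=n$, the total running time is $\poly(n)$.

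\medskip

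\textbf{Verifier accepts with high probability.} Here I split on the challenge bit. For $b=0$: the prover opens all commitments honestly, so all the validity checks pass (perfect correctness of the commitment openings); then the verifier applies $U_r^{-1}$ to $(\hat{\qrv F}_0,\{\lab_{i,b}\}_{i,b})$. Since the prover produced exactly this state as $U_r(\qrv e_2,0)$ with the \emph{same} $r$ (which it just opened), applying $U_r^{-1}$ returns $(\qrv e_2, 0)$, so the ancilla register $\qrv q$ is the all-zero state and the verifier accepts with probability $1$. For $b=1$: again all commitment openings are valid (perfect binding/correctness), and the check that $z$ has the form $(x,u,v)$ for the announced instance $x$ passes by construction. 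The verifier then decodes the QRE $(\hat{\qrv F}_0,\{\lab_{i,z_i}\}_i)$. By the structure of the scheme recalled in the building-blocks section, $(\hat{\qrv F}_0,\{\lab_{i,z_i}\}_i)$ is precisely a valid encoding $\hat F(\qrv w',\, z)$ where $\qrv w' = (X^{u_1}Z^{v_1}\otimes\cdots\otimes X^{u_m}Z^{v_m})(\qrv w_{\text{teleported}})$ is the state on $\qrv e_2$ after the teleportation measurement with outcomes $(u,v)$ — which by the teleportation identity equals $(X^{u_1}Z^{v_1}\otimes\cdots)(X^{u_1}Z^{v_1}\otimes\cdots)(\qrv w) = \qrv w$. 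By correctness of $\QRE$ ($\Dec\circ\hat F = F$ exactly), decoding yields $F(x,u,v,\qrv w')$; and by definition of the circuit $F$, this first undoes the one-time pad indicated by $(u,v)$ — recovering $\qrv w$ — and then runs $V_L(x,\qrv w)$, outputting its accept bit. Since $x\in L_{yes}$ and $\qrv w\in\cal R_L(x)$, this accepts with probability $\geq 1-\mu(|x|)$ where $\mu$ is the negligible soundness-gap function of $V_L$.

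\medskip

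\textbf{Main obstacle.} The only genuinely delicate point is the $b=1$ analysis: one must carefully check that the teleportation correctly "feeds" $\qrv w$ into the circuit $F$, i.e.\ that applying the one-time pad $(u,v)$ inside $F$ exactly cancels the Pauli error $X^{u}Z^{v}$ left on $\qrv e_2$ by the teleportation measurement, and that this interacts correctly with the decomposable structure (the classical inputs $z=(x,u,v)$ being encoded bit-by-bit via the labels, while the quantum input $\qrv e_2$ was encoded by the offline part $\hF_\off$ together with $\hF_1,\dots,\hF_m$ — note here $n_1 = m$ and $n_2 = n+2m$). One also needs that the joint state across the quantum-side-information register is preserved, but since there is no external side information in the completeness experiment, the plain correctness statement of \Cref{thm:compdqre} suffices. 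Everything else is bookkeeping; I would present the $b=0$ case in two sentences and devote the bulk of the proof to making the $b=1$ teleportation-into-$F$ step explicit.
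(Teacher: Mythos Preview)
Your proposal is correct and follows essentially the same approach as the paper, which also splits on the challenge bit and for $b=1$ invokes perfect correctness of the $\QRE$ decoding to recover the output of $V_L(x,\qrv{w})$. One clean-up: the parenthetical where you conclude $\qrv{w}' = \qrv{w}$ is a slip---the post-teleportation state on $\qrv{e}_2$ is $X^u Z^v(\qrv{w})$, and it is the one-time-pad step inside $F$ that cancels this Pauli error to recover $\qrv{w}$ (exactly as you correctly describe in your ``Main obstacle'' paragraph).
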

\begin{proof}
If the prover behaves honestly, then with challenge $b = 0$, the verifier will be able to verify that the state $(\hat{\qrv{F}}_0, \{\lab_{i,b}\}_{i \in [n+2m], b \in \binset})$ is indeed equal to $U(\qrv{e}_2,0)$; and with challenge $b = 1$, the verifier obtains the output qubit of the circuit $V_L(x,\qrv{w})$ by applying the QRE decoding procedure (which we assume has perfect correctness). Thus if the maximum acceptance probability of $V_L(x,\qrv{w})$ over all choices of $\qrv{w}$ is $1 - \mu(|x|)$, then the acceptance probability of the verifier is $1 - \mu(|x|)$. This establishes the completeness property.
\end{proof}

We now prove statistical adaptive soundness for our protocol as stated in the next Lemma.

\begin{lemma}[Statistical Adaptive Soundness] 
	\label{lem:zk-soundness-half}
	There exists a negligible function $\mu'$ such that for any adaptive prover $\malP$, it holds that 
	\[
	\Pr[W] \leq 1/2 + \mu'(|x|)~,
	\]
	where $W$ is the event where $\langle \malP,V \rangle = (\text{accept},x) \land x \in L_{no}$.
\end{lemma}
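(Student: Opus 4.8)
The plan is a cut-and-choose soundness argument that exploits the single-bit challenge, the perfect binding of $\com$, the perfect correctness of $\QRE$ (crucially, its robustness to quantum side information), and the almost-perfect soundness of $V_L$. First I would reduce an arbitrary (unbounded) $\malP$ to a canonical form. Conditioning on the classical commitments $c_0$ and $(c_{i,b})_{i\in[n+2m],\,b\in\binset}$ sent in the first message, perfect binding implies each of them can be opened to at most one value; call these $r^{*}$ and $\lab^{*}_{i,b}$ (some may be ``unopenable''). Purifying $\malP$'s workspace, the remainder of the first message is a pure state $\ket{\Psi}$ on a register $\reg{A}$ holding $\hat{\qrv{F}}_0$ (which goes to the verifier; WLOG of the correct shape, else the verifier rejects) together with a private register $\reg{P}$ that $\malP$ keeps. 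For challenge $b=0$, $\malP$ can win only if every commitment is openable and it reveals exactly $r^{*}$ and $(\lab^{*}_{i,b})$ — so WLOG it does — in which case the verifier's test is a projective measurement $\{\Pi_0,\mathbb{I}-\Pi_0\}$ on $\reg{A}$, where the image of $\Pi_0$ consists exactly of the states $\hat{\qrv{F}}_0$ of the form $U_{r^{*}}(\qrv{\tilde{w}},0)$ for some $m$-qubit quantum input $\qrv{\tilde{w}}$ (``well-formed'' encodings). Set $p_0:=\bra{\Psi}(\Pi_0\otimes\mathbb{I}_{\reg{P}})\ket{\Psi}$ if all commitments are openable, and $p_0:=0$ otherwise (noting the $b=0$ branch is then rejected with certainty). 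For $b=1$, $\malP$ declares a string $z=(x,u,v)$ (by measuring $\reg{P}$ after some unitary) and, by perfect binding, can supply only the openings $\lab^{*}_{i,z_i}$; thus the $b=1$ branch together with $\malP$'s response is a POVM $\{F_{\mathrm{acc}},\mathbb{I}-F_{\mathrm{acc}}\}$ on $\reg{A}\reg{P}$, with $F_{\mathrm{acc}}$ the event ``verifier accepts and the declared $x\in L_{no}$''. Put $p_1:=\bra{\Psi}F_{\mathrm{acc}}\ket{\Psi}$. Since $b$ is a uniform bit and the $b=0$ winning event is contained in ``the $b=0$ test accepts'', we get $\Pr[W\mid\text{first msg}]\le\tfrac12 p_0+\tfrac12 p_1$, so it suffices to prove $p_0+p_1\le 1+\negl(n)$ for every first message — trivial when some commitment is unopenable, and the real content otherwise.

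The heart of the argument is the well-formed case. Suppose $\ket{\psi_0}\in\mathrm{im}(\Pi_0)\otimes\reg{P}$, so the state of $\reg{A}$ is $U_{r^{*}}$ applied to some $m$-qubit input $\qrv{\tilde{w}}$ (entangled with $\reg{P}$) and zero ancillas; equivalently, for every $z=(x,u,v)$ the pair $(\hat{\qrv{F}}_0,\{\lab^{*}_{i,z_i}\})$ is exactly the quantum randomized encoding $\hat{F}(z,\qrv{\tilde{w}})$ of the circuit $F$. Running the $b=1$ branch on $\ket{\psi_0}$: $\malP$ measures $\reg{P}$, obtaining $z=(x,u,v)$ and collapsing the encoded input to some state $\qrv{\tilde{w}}^{(z)}$; then the verifier applies the $\QRE$ decoder $\Dec$ and measures the output bit. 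By perfect correctness of $\QRE$ with side information, $(\Dec(\hat{F}(z,\qrv{\tilde{w}}^{(z)})),\reg{P})=(F(z,\qrv{\tilde{w}}^{(z)}),\reg{P})$, so the measured bit is distributed exactly as the outcome of $F(x,u,v,\qrv{\tilde{w}}^{(z)})=V_L\big(x,(X^{u_1}Z^{v_1}\otimes\cdots\otimes X^{u_m}Z^{v_m})(\qrv{\tilde{w}}^{(z)})\big)$. If $x\in L_{no}$, soundness of $V_L$ — which holds for every, possibly entangled, witness state — bounds the acceptance probability by $\mu(n)$. Averaging over $z$ gives $\bra{\psi_0}F_{\mathrm{acc}}\ket{\psi_0}\le\mu(n)$ for all $\ket{\psi_0}\in\mathrm{im}(\Pi_0)\otimes\reg{P}$.

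Finally I would combine the two pieces. Decompose $\ket{\Psi}=\sqrt{p_0}\,\ket{\Psi_0}+\sqrt{1-p_0}\,\ket{\Psi_1}$ with $\ket{\Psi_0}\in\mathrm{im}(\Pi_0)\otimes\reg{P}$ and $\ket{\Psi_1}\in\ker(\Pi_0)\otimes\reg{P}$. Using $0\preceq F_{\mathrm{acc}}\preceq\mathbb{I}$ and the triangle inequality, $\sqrt{p_1}=\norm{F_{\mathrm{acc}}^{1/2}\ket{\Psi}}\le\sqrt{p_0}\,\norm{F_{\mathrm{acc}}^{1/2}\ket{\Psi_0}}+\sqrt{1-p_0}\,\norm{F_{\mathrm{acc}}^{1/2}\ket{\Psi_1}}\le\sqrt{p_0\,\mu(n)}+\sqrt{1-p_0}$, hence $p_1\le (1-p_0)+\mu(n)+2\sqrt{p_0(1-p_0)\,\mu(n)}\le (1-p_0)+2\sqrt{\mu(n)}$, so $p_0+p_1\le 1+2\sqrt{\mu(n)}$. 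Averaging over the first message and the challenge yields $\Pr[W]\le\tfrac12+\sqrt{\mu(n)}$, and the lemma holds with $\mu'(n):=\sqrt{\mu(n)}$, which is negligible. I expect the main obstacle to be the middle step: one must carefully track the entanglement between the round-one quantum message $\hat{\qrv{F}}_0$ and $\malP$'s private register $\reg{P}$, and it is essential both that $\QRE$ correctness is robust to arbitrary side information and that $\Dec$ is a fixed channel (so that the $b=1$ branch is a genuine measurement) — without these the honestly decoded output need not coincide with $V_L$'s verdict and the $\qma$ soundness bound cannot be invoked. A secondary, routine obstacle is the bookkeeping imposed by perfect binding, in particular ruling out that $\malP$ profits from malformed or inconsistent commitments.
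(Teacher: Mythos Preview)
Your proposal is correct and follows essentially the same route as the paper: define the projection corresponding to the $b=0$ well-formedness test, decompose the (purified) first-message state into its image/kernel components, invoke QRE perfect correctness plus $\qma$ soundness to bound the $b=1$ success on the image by $\mu(n)$, and combine via the triangle inequality $\sqrt{p_1}\le\sqrt{p_0\mu}+\sqrt{1-p_0}$. The paper packages the last two steps as separate claims (Claim~\ref{claim:perfect0} and Claim~\ref{claim:triangle}), but the content and the resulting bound $\Pr[W]\le\tfrac12+O(\sqrt{\mu(n)})$ are the same.
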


In order to prove the lemma, we require the following two claims.

\begin{claim}\label{claim:triangle}
	Let $\ket{\phi}, \ket{\phi_0}, \ket{\phi_1}$ be unit vectors over some Hilbert space representing states of a quantum system. Assume there exist non-negative real values $\alpha_1, \alpha_2$ so that $\ket{\phi} = \alpha_1 \ket{\phi_0} + \alpha_2 \ket{\phi_1}$ (note that $\ket{\phi_0}, \ket{\phi_1}$ are not necessarily orthogonal, so $\alpha_1^2 +  \alpha_2^2$ is not necessarily $1$).

	Let $M$ be some measurement operator defined over this Hilbert space. Let $p, p_1, p_2$ be the probability that the measurement $M$ succeeds when the system is in state $\ket{\phi}, \ket{\phi_0}, \ket{\phi_1}$ respectively. Then $p \le (\alpha_1\sqrt{p_1} + \alpha_2\sqrt{p_2})^2$.
\end{claim}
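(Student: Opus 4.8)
The plan is to reduce everything to the geometry of a single complex amplitude. Write $M$ as a (projective) measurement with ``success'' projector $\Pi$, so that for any unit vector $\ket{\psi}$ the success probability is $\norm{\Pi \ket{\psi}}^2$. Then by definition $\sqrt{p} = \norm{\Pi\ket{\phi}}$, $\sqrt{p_1} = \norm{\Pi\ket{\phi_0}}$, and $\sqrt{p_2} = \norm{\Pi\ket{\phi_1}}$. The key observation is simply that $\Pi$ is linear, so $\Pi\ket{\phi} = \alpha_1 \Pi\ket{\phi_0} + \alpha_2 \Pi\ket{\phi_1}$, and then the triangle inequality for the Hilbert-space norm gives
\[
	\sqrt{p} \;=\; \norm{\Pi\ket{\phi}} \;=\; \norm{\alpha_1\,\Pi\ket{\phi_0} + \alpha_2\,\Pi\ket{\phi_1}} \;\le\; \alpha_1\norm{\Pi\ket{\phi_0}} + \alpha_2\norm{\Pi\ket{\phi_1}} \;=\; \alpha_1\sqrt{p_1} + \alpha_2\sqrt{p_2},
\]
where we used that $\alpha_1,\alpha_2 \ge 0$ to pull them outside the norms. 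Squaring both sides (legitimate since both sides are non-negative) yields $p \le (\alpha_1\sqrt{p_1} + \alpha_2\sqrt{p_2})^2$, as claimed.

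The main thing to be careful about is the reduction of a general ``measurement operator'' $M$ to a single projector $\Pi$. If $M$ is a general POVM with a success element $E$ (a positive operator with $0 \preceq E \preceq I$), then the success probability on state $\ket{\psi}$ is $\bra{\psi} E \ket{\psi} = \norm{\sqrt{E}\ket{\psi}}^2$, and one runs the identical argument with the positive operator square root $\sqrt{E}$ in place of $\Pi$: $\sqrt{E}$ is still linear, so $\sqrt{E}\ket{\phi} = \alpha_1 \sqrt{E}\ket{\phi_0} + \alpha_2 \sqrt{E}\ket{\phi_1}$, and the triangle inequality applies verbatim. So the argument is robust to the exact measurement formalism; I would state it for a POVM element $E$ and note the projective case as a special instance. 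No other subtlety arises — in particular we never need $\ket{\phi_0}, \ket{\phi_1}$ to be orthogonal, and we never need $\alpha_1^2 + \alpha_2^2 = 1$; non-negativity of the $\alpha_i$ is the only place the hypothesis is used, and it is exactly what makes the triangle inequality tight enough to be useful. I expect no real obstacle here; the ``hard part'' is merely bookkeeping the operator-versus-POVM convention so the one-line inequality is fully justified.
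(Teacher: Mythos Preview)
Your proof is correct and is essentially the same as the paper's: the paper also writes $p = \bra{\phi} M^\dagger M \ket{\phi} = \|M\ket{\phi}\|^2$, expands via the decomposition, and bounds the cross term by Cauchy--Schwarz, which is exactly the triangle inequality you invoke directly. Your remark on handling a general POVM element via $\sqrt{E}$ is a reasonable robustness note; the paper simply takes $M$ to be an operator with success probability $\|M\ket{\psi}\|^2$, which already covers that case.
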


\begin{proof}
	The proof follows by triangle inequality. We define $\ket{\tilde{\phi}_i} = M \ket{\phi_i}$ for $i\in\{0,1\}$ and note that by definition $p_i = \langle \tilde{\phi}_i | \tilde{\phi}_i \rangle$.
	
	\begin{align*}
		p & = \bra{\phi} M^\dagger M \ket{\phi}\\
		& = \alpha_1^2  \langle \tilde{\phi}_0 | \tilde{\phi}_0 \rangle + \alpha_2^2  \langle \tilde{\phi}_1 | \tilde{\phi}_1 \rangle + 
		\alpha_1 \alpha_2 (\langle \tilde{\phi}_0 | \tilde{\phi}_1 \rangle + \langle \tilde{\phi}_1 | \tilde{\phi}_0 \rangle)\\
		& \le \alpha_1^2 p_1 + \alpha_2^2 p_2 + 2\alpha_1\alpha_2 \sqrt{p_1 p_2}\\
		& = (\alpha_1\sqrt{p_1} + \alpha_2\sqrt{p_2})^2~.
	\end{align*}
The claim thus follows.
\end{proof}

The following claim establishes a very weak form of soundness, namely it asserts that soundness $1/2+\negl(n)$ holds when the $b=0$ test is guaranteed to pass.

\begin{claim}\label{claim:perfect0}
	For any adversarial strategy $\malP$ for which $\Pr[W | b=0]=1$ it holds that $\Pr[W | b=1]=\negl(n)$.
\end{claim}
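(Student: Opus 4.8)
The plan is to analyze what the condition $\Pr[W \mid b=0]=1$ forces about the prover's first message, and then argue that this severely constrains the prover's ability to cheat when $b=1$. First I would unpack the structure of the prover's first message: it consists of a quantum register $\hat{\qrv{F}}_0$ together with commitments $c_0$ to some string $r^*$ and commitments $c_{i,b}$ to some strings $\ell^*_{i,b}$. Since the commitment scheme is perfectly binding, these committed values are uniquely determined by the first message (even though the prover is computationally unbounded, perfect binding is information-theoretic, so this holds unconditionally). The event $W \mid b=0$ requires (i) all commitment openings to be valid — which, by perfect binding, means the prover must reveal exactly $r^*$ and $\ell^*_{i,b}$ — and (ii) the verifier's check to pass: applying $U_{r^*}^{-1}$ to $(\hat{\qrv{F}}_0, \{\ell^*_{i,b}\}_{i,b})$ yields $(\qrv{e}_2', \qrv{q})$ with $\qrv{q}$ the all-zeroes state. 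The assumption $\Pr[W\mid b=0]=1$ means this check passes with probability $1$, which (since $U_{r^*}$ is unitary) forces $(\hat{\qrv{F}}_0, \{\ell^*_{i,b}\}_{i,b})$ to be \emph{exactly} the state $U_{r^*}(\qrv{e}_2', 0)$ for some $m$-qubit state $\qrv{e}_2'$ held by the prover (possibly entangled with the prover's residual workspace). In other words, the prover's first message is an honestly-generated QRE offline part and honestly-generated label pairs, using randomness $r^*$, with respect to some state $\qrv{e}_2'$ in place of the honest second-halves of EPR pairs.

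Next I would invoke the correctness and privacy guarantees of $\QRE$. Because the first message is an honest encoding-in-progress, the labels $\ell^*_{i,z_i}$ that the prover opens in the $b=1$ branch are exactly the honest classical labels $\lab_{i,z_i}(r^*)$ for whatever string $z=(x,u,v)$ the prover chooses. The verifier then checks $z$ has the form $(x,u,v)$ with $x$ the claimed instance (step~\ref{zk:check-x}), checks the openings are valid (which again pins them to the committed honest labels), and then \emph{decodes} the QRE $(\hat{\qrv{F}}_0,\{\lab_{i,z_i}(r^*)\}_i)$ to obtain a single output qubit whose measurement determines acceptance. By perfect correctness of the QRE, decoding produces exactly $F(x,u,v,\qrv{\rho})$ where $\qrv{\rho}$ is the $m$-qubit state obtained by applying $U_{r^*}^{-1}$'s inverse-of-EPR-creation to $\qrv{e}_2'$ — concretely, $\qrv{\rho}$ is whatever state is ``teleported through'' the $u,v$ Pauli corrections inside $F$. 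The key point: no matter what $\qrv{e}_2'$, what residual entanglement, and what $(u,v)$ the prover picks, the decoded output is the honest output of the circuit $F$ on classical input $(x,u,v)$ and \emph{some} $m$-qubit quantum state $\qrv{w}^* = (X^{u_1}Z^{v_1}\otimes\cdots)(\ldots)$; that is, it equals $V_L(x,\qrv{w}^*)$ for some legitimate $m$-qubit witness $\qrv{w}^*$. Hence the acceptance probability in the $b=1$ branch is at most $\max_{\qrv{w}} \Pr[V_L(x,\qrv{w})\text{ accepts}]$, which for $x\in L_{no}$ is at most $\mu(|x|)=\negl(n)$ by the soundness of the $\qma$ verifier. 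This gives $\Pr[W\mid b=1]\le\mu(|x|)=\negl(n)$, as claimed.

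I would carry out the steps in this order: (1) perfect binding $\Rightarrow$ committed values are fixed by the first message; (2) $\Pr[W\mid b=0]=1$ together with unitarity of $U_{r^*}$ $\Rightarrow$ the first message is a genuine honest offline-encoding of some state in place of $\qrv{e}_2$; (3) therefore in the $b=1$ branch the opened labels are the true honest labels, so the verifier's decoding step computes (by perfect QRE correctness) the honest value of $F$ on a classical input $(x,u,v)$ and an actual quantum state; (4) $F$'s output is $V_L(x,\qrv{w}^*)$ for a legitimate witness $\qrv{w}^*$, and $\qma$-soundness for $x\in L_{no}$ bounds the acceptance probability by $\negl(n)$.

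The main obstacle I anticipate is making step~(2) fully rigorous — carefully arguing that passing the $b=0$ check \emph{with probability exactly $1$} (rather than merely with high probability) forces the first-message quantum register to coincide with $U_{r^*}$ applied to a zero-padded state, including the subtlety that $\qrv{e}_2'$ may be entangled with the prover's private register and that the ``ancilla zeroes'' emerge cleanly. This requires the observation that if $U_{r^*}^{-1}$ applied to the (purified) first message yields the all-zero ancilla state with certainty, then the pre-image state must lie in the image of $U_{r^*}(\mathord{\cdot}\otimes\ket{0\cdots0})$, so we may \emph{define} $\qrv{e}_2'$ as this pre-image and proceed. A secondary subtlety is that the honest QRE decoder in step~\ref{zk:eval-qre} is applied to a state that is entangled with the prover's side information; but the QRE correctness guarantee in \Cref{def:cre}'s quantum analogue explicitly preserves correlations with an auxiliary register $\qrv{q}$, so decoding still yields $F$ applied to the relevant input together with untouched side-information, and the single-qubit output is exactly $V_L$'s verdict on a genuine witness.
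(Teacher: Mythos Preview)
Your proposal is correct and follows essentially the same approach as the paper's proof, which is extremely terse (two sentences: the $b=0$ condition forces $\qrv{m}_1$ to be a properly generated QRE, so by QRE correctness and $\qma$ soundness the $b=1$ branch accepts only negligibly). You have simply fleshed out the details the paper omits---perfect binding pinning down the committed values, unitarity of $U_{r^*}$ forcing the first-message state into the honest image, and the auxiliary-register form of QRE correctness handling entanglement with the prover's workspace---all of which are implicit in the paper's one-line argument.
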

\begin{proof}
	Assume that $\malP$ is such that $\Pr[W | b=0]=1$. Then the message $\qrv{m}_1$ is guaranteed to be a properly generated QRE of the correct functionality $F$. It follows by the correctness of the QRE that such a circuit cannot accept an instance $x \in L_{no}$ except with negligible probability. The claim thus follows. 
\end{proof}

We can now prove the soundness lemma.

\begin{proof}[Proof of Lemma~\ref{lem:zk-soundness-half}]
	Let $\malP$ be an adaptive possibly-cheating prover. Consider the point in time after $\malP$ sent its first message $\qrv{m}_1$ and let us consider the joint quantum state of $\malP$ and $\qrv{m}_1$, denote this state by $\ket{\phi}$. Assume without loss of generality that this state is pure (we can always add the purification of the state into the internal state of $\malP$. Assume without loss of generality that if $\malP$ sends valid commitments (i.e.\ ones that can be opened), then it indeed opens them correctly upon challenge $b=0$ (this can only increase the advantage of $\malP$ and since it is computationally unbounded it can always find the openings).
	
	We consider the (inefficient) measurement operator defined on this joint state in the following way.
	\begin{enumerate}
		\item Check (via brute force search) that all commitments in $\qrv{m}_1$ are valid. If they are valid, let $r, \lab_{i,b}$ be their openings. If not then reject. Note that this is a projection since it simply accepts a subset of the possible commitments (in the computational basis).
		
		\item Apply $U_r^{-1}$ on $\Big ( \hat{\qrv{F}}_0, \{\lab_{i,b}\}_{i,b} \Big )$, where $\hat{\qrv{F}}_0$ comes from $\qrv{m}_1$ and $r, \lab_{i,b}$ are the openings of the commitments (which at this point are well defined). The outcome is a pair $(\qrv{q}, \qrv{e_2})$. Accept if $\qrv{q}=0$ and reject otherwise. Note that this part is a projection as well since it only involves applying a unitary followed by accepting a value in the computational basis.
	\end{enumerate}
We conclude that this measurement operator is a projection $\Pi$, and notice that this measurement corresponds to the event $W | b=0$. Therefore
\[
\Pr[W | b=0] = \bra{\phi} \Pi \ket{\phi}~,
\]
and denote this value by $\epsilon$. If $\epsilon=0$ then the proof is complete because the overall acceptance probability is at most $1/2$. From now on assume $\epsilon >0$. Note that since $\Pi$ is a projection it holds that
\[
\bra{\phi} (I - \Pi) \ket{\phi} = 1- \epsilon~.
\]

Let $M$ denote the (not necessarily projective and inefficient) measurement that acts on the joint state of $\malP$ and $\qrv{m}_1$ as follows. It acts on the state of $\malP$ to generate the third message $m_3$ and instance $x$, and then checks whether $x \in L_{no}$ and if so whether the verifier $V$ accepts. Again by definition it holds that
\[
\Pr[W | b=1] = \bra{\phi} M^\dagger M \ket{\phi}~.
\]

Let us now define $\ket{\phi_0} = \Pi \ket{\phi} / \sqrt{\epsilon}$ and $\ket{\phi_1} = (I- \Pi) \ket{\phi} / \sqrt{1-\epsilon}$, so we can write $\ket{\phi} = \sqrt{\epsilon} \ket{\phi_0} + \sqrt{1-\epsilon} \ket{\phi_1}$. By Claim~\ref{claim:triangle} (triangle inequality) we have
\begin{align}
\Pr[W | b=1] & \le \left( \sqrt{\epsilon \bra{\phi_0} M^\dagger M \ket{\phi_0}} +  \sqrt{(1-\epsilon) \bra{\phi_1} M^\dagger M \ket{\phi_1}} \right)^2\\
& \le \left( \sqrt{\epsilon \bra{\phi_0} M^\dagger M \ket{\phi_0}} +  \sqrt{1-\epsilon}\right)^2~.\label{eq:aftertriangle}
\end{align}

However, suppose that instead of starting with the joint prover-message state $\ket{\phi}$, we consider a different prover $\tilde{\malP}$ which starts with the state $\ket{\phi_0}$ instead. By construction this prover will pass the challenge $b = 0$ with probability $1$, because $\Pi \ket{\phi_0} = \ket{\phi_0}$. 
Therefore, by Claim~\ref{claim:perfect0} it follows that
\[
\Pr[W_{\tilde{\malP}} | b=1]= \bra{\phi_0} M^\dagger M \ket{\phi_0} \leq \mu(n)~,
\]
for some negligible function $\mu$, because $M$ corresponds to the test performed on challenge $b = 1$. 

Finally we can plug $\bra{\phi_0} M^\dagger M \ket{\phi_0} \leq \mu(n)$ into Eq.~\eqref{eq:aftertriangle} to obtain
\begin{align*}
	\Pr[W | b=1] & \le \left( \sqrt{\epsilon \bra{\phi_0} M^\dagger M \ket{\phi_0}} +  \sqrt{1-\epsilon}\right)^2 \\
	& \le \left( \sqrt{\epsilon \cdot \mu(n)}  +  \sqrt{1-\epsilon}\right)^2\\
		& = 1-\epsilon + 2\mu'(n)~,
\end{align*}
for some $\mu'(n) = O(\sqrt{\mu(n)})$.

Finally we can conclude that 
\[
	\Pr[W] \le \frac{1}{2} \epsilon  + \frac{1}{2} \left ( 1-\epsilon + 2\mu'(n) \right ) \leq \frac{1}{2} +\mu'(n)
\]
which completes the proof of the Lemma.
\end{proof}

\begin{lemma}[Computational zero-knowledge] 
\label{lem:zk-zk}
There exists a quantum polynomial-time simulator $\ZKSim$ satisfying the following: for all cheating verifiers $\malV$, for all asymptotic sequences of $(x,\qrv{y},\qrv{w})$ where $x \in L_{yes}$, the state $\qrv{y}$ is an arbitrary quantum state, and $\qrv{w} \in \cal{R}_L(x)$ is a witness for $x$, we have that the output of $\ZKSim(\malV,x,\qrv{y})$ is computationally indistinguishable from $\langle P(\qrv{w}), \malV(\qrv{y}) \rangle(x)$.

\end{lemma}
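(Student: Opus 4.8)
The plan is a guess-and-rewind simulator in the style of Watrous's quantum rewinding lemma, exploiting that the challenge is a single bit: the $b=0$ branch of Protocol~\ref{fig:zk-qma} can be carried out honestly with no witness, while the $b=1$ branch can be faked using the $\QRE$ simulator $\Sim$. Define a one-shot subroutine $\cal{Q}$ that coherently samples a guess $b'\in\binset$ and prepares a first message: for $b'=0$ it runs exactly the honest prover's first-message procedure (which uses no witness), storing all openings; for $b'=1$ it samples $u,v$ uniformly, sets $z=(x,u,v)$, runs $\Sim(\ket{1}\!\bra{1})$ with $\ket{1}$ the accepting single-qubit output, parses the result as a fake offline state $\hat{\qrv{F}}_0$ together with $n+2m$ fake labels $\tilde{\lab}_i$, and commits $\tilde{\lab}_i$ into slot $c_{i,z_i}$ and $0$ into $c_{i,1-z_i}$ and $c_0$. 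It then runs $\malV$ on this message and its auxiliary state $\qrv{y}$ to obtain a challenge $b$; if $b=b'$ it opens the appropriate commitments (all of them if $b'=0$, those indexed by $z$ if $b'=1$), feeds the resulting third message to $\malV$, and outputs $(\textsf{success},\malV\text{'s residual state})$; otherwise it outputs $\textsf{fail}$. The simulator $\ZKSim(\malV,x,\qrv{y})$ runs Watrous's rewinding procedure on $\cal{Q}$, simulating $\malV$ coherently so failed attempts can be uncomputed, to recover (up to negligible error) the state $\cal{Q}$ produces conditioned on success.

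For the analysis, introduce the honest one-shot experiment $\cal{Q}^{\mathrm{hon}}$, identical to $\cal{Q}$ except that it always produces the honest first message and, on a correct guess, the honest third message -- in particular for $b'=b=1$ it teleports the real witness $\qrv{w}$ into the teleportation EPR halves to get $(u,v)$ and opens the labels for $z=(x,u,v)$. Two properties of $\cal{Q}^{\mathrm{hon}}$ are immediate: its guess succeeds with probability exactly $1/2$, since the honest first message is independent of $b'$; and conditioned on success its output is distributed exactly as $\langle P(\qrv{w}),\malV(\qrv{y})\rangle(x)$, since a correct guess reproduces an honest execution with the challenge retaining its original distribution. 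It therefore suffices to prove that $\cal{Q}$ and $\cal{Q}^{\mathrm{hon}}$ produce computationally indistinguishable $(\text{success bit},\text{residual state})$ pairs: this forces $\cal{Q}$'s success probability to within $\negl(n)$ of the fixed constant $1/2$ (so the rewinding lemma applies), and forces $\cal{Q}$'s conditional-on-success state to be indistinguishable from the real view, whence $\ZKSim$'s output is indistinguishable from the real view.

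Indistinguishability of $\cal{Q}$ and $\cal{Q}^{\mathrm{hon}}$ follows from a short hybrid chain on the $b'=1$ branch (the $b'=0$ branch is literally identical in the two). First, since the honest encoding circuit $U_r$ acts only on the EPR halves $\qrv{e}_2$ and not on $(\qrv{w},\qrv{e}_1)$, teleporting $\qrv{w}$ may be replaced -- with no change in distribution -- by sampling $u,v$ uniformly up front and placing $X^{u}Z^{v}(\qrv{w})$ into the register that would have been $\qrv{e}_2$; this makes $z$ known at commitment time. Next, by the hiding property of $\com$ (a standard hybrid over the polynomially many commitments) we replace the never-opened commitments $c_0$ and $\{c_{i,1-z_i}\}_i$ by commitments to $0$. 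Then, observing that by the structure of the scheme described in \Cref{sec:zk} the opened data $(\hat{\qrv{F}}_0,\{\lab_{i,z_i}\}_i)$ is precisely the $\QRE$ encoding of $F$ on input $(X^{u}Z^{v}(\qrv{w}),(x,u,v))$, the privacy of $\QRE$ lets us replace it -- together with the fixed efficient wrapping that commits and opens its labels -- by $\Sim$ applied to the output of $F$ on that input, i.e.\ to the output of $V_L(x,\qrv{w})$. Finally, since $x\in L_{yes}$ and $\qrv{w}\in\cal{R}_L(x)$, the output of $V_L(x,\qrv{w})$ is within negligible trace distance of $\ket{1}\!\bra{1}$, so $\Sim$ applied to it is indistinguishable from $\Sim(\ket{1}\!\bra{1})$; the resulting experiment is exactly $\cal{Q}$. (A minor point: because $\QRE$ has classical encoding for classical inputs we may assume $\Sim$ outputs its label registers in the computational basis, so the commitments to them are well defined.) Composing, and treating $\malV$ together with all subsequent protocol steps as the polynomial-size distinguisher in each step, gives $\cal{Q}\approx\cal{Q}^{\mathrm{hon}}$.

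The main obstacle is the quantum rewinding step: one cannot simply re-execute $\malV$ on fresh first messages, so one must (i) establish that the guess $b'$ is correct with probability negligibly close to the fixed value $1/2$ independent of $\malV$ and of its quantum auxiliary input $\qrv{y}$ -- which is exactly what the indistinguishability of the honest and simulated first messages provides -- and (ii) invoke Watrous's quantum rewinding lemma with this (slightly perturbed, input-independent) success probability to extract the conditional-on-success state coherently with only negligible error, using $O(\log(1/\gamma))$ coherent iterations to drive the rewinding error $\gamma$ below any negligible threshold while keeping $\ZKSim$ polynomial time. A secondary subtlety, handled by the reordering step above, is that in an honest $b=1$ execution the teleportation of $\qrv{w}$ occurs only after the challenge is revealed, so one must first pass to an equivalent experiment in which $z$ is fixed before the commitments are generated.
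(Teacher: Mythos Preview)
Your proposal is correct and follows essentially the same approach as the paper: a one-shot guess-and-abort simulator combined with Watrous's quantum rewinding lemma, and a hybrid argument on the $b'=1$ branch that threads through teleportation reordering, commitment hiding, $\QRE$ privacy, and $\qma$ completeness. The only cosmetic differences are that the paper runs the hybrids in the reverse direction (from the simulator toward the honest execution) and has the simulator commit the same fake label into both slots rather than a zero string into the unused slot; neither affects the argument.
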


Let $\malv$ denote the malicious verifier. In order to prove \Cref{lem:zk-zk} we first analyze a ``conditional'' simulator $\ZKSim^0$ that takes as input $(\malV,x,\qrv{y})$, and outputs a quantum state as well as a flag indicating whether the simulator aborted. We show that the output of $\ZKSim^0$ is, conditioned on not aborting, computationally indistinguishable from the view of the interaction $\langle P(\qrv{w}), \malV(\qrv{y}) \rangle(x)$. The probability of aborting in the conditional simulator $\ZKSim^0$ is (negligibly close to) $1/2$, but we can use Watrous's Rewinding Lemma~\cite{watrous2009zero} to argue the existence of a quantum polynomial-time algorithm $\ZKSim$ that satisfies the conclusions of \Cref{lem:zk-zk}. In particular, we use the formulation of the Rewinding Lemma as presented in~\cite[Lemma 2.1]{bitansky2020post}; the resulting algorithm $\ZKSim$ queries $\ZKSim^0$ as a blackbox polynomially many times to amplify the success probability. The reason we need to use the Rewinding Lemma instead of simply just repeating $\ZKSim^0$ until it doesn't abort is because of the quantum auxiliary input $\qrv{y}$; running $\ZKSim^0$ once and aborting may alter the state $\qrv{y}$ significantly. Watrous's Rewinding Lemma gets around this issue.

\SetAlgorithmName{Algorithm}{Algorithm} 

\vspace{10pt}
\IncMargin{1em}
\begin{algorithm}
\DontPrintSemicolon
\textbf{Input}: cheating verifier $\malV$, instance $x \in \binset^n$, auxiliary quantum state $\qrv{y}$ \;
\BlankLine
Sample $t \in \{0,1\}$ uniformly at random. \;
\If{$t = 0$}{

Execute the honest prover $P$ to generate the first message $\qrv{m}_1$ (note $P$ needs no input for this). \; 

Run the cheating verifier $\malV(x,\qrv{y})$ on the first message $\qrv{m}_1$ to generate the challenge bit $b$. If $b \neq 0$, abort (i.e., output $(a,0)$ where $a = 1$). \;

Otherwise, continue simulating the honest prover $P$ on challenge $b = 0$ to generate the third message $\qrv{m}_3$. \;

}
\Else{

	Generate $m$ EPR pairs $\qrv{e} = (\qrv{e}_1,\qrv{e}_2)$. 
	
	Sample uniformly random $u,v\in\binset^m$.\label{simstep:uv}
	
	Run the simulator $\Sim$ of the $\QRE$ on input $\ketbra{1}{1}$, with respect to the same circuit topology as $F$, to obtain an output $\Big ( \hat{\qrv{F}}_0, \{\lab_i\}_{i \in [n+2m]} \Big )$. \label{simstep:qresim}
	
	For all $i \in [n+2m]$,$b\in\binset$, define $\lab_{i,b} = \lab_i$. Set $r=0$.\label{simstep:fakecom}

	Execute the honest prover $P$, starting at Step \ref{i:aftergc} of Protocol~\ref{fig:zk-qma}, where the prover computes the commitments to $r$ and $(\lab_{i,b})_{i,b}$. Let $\qrv{m}_1$ denote the prover's first message. \;
	
	Run the cheating verifier $\malV(x,\qrv{y})$ on the first message $\qrv{m}_1$ to generate the challenge bit $b$. If $b \neq 1$, abort (i.e., output $(a,0)$ where $a = 1$). \;

	Otherwise, execute Step~\ref{i:sendopenings} of the honest prover $P$ in Protocol~\ref{fig:zk-qma} to open the relevant commitments, which forms message $\qrv{m}_3$. Note that $\qrv{w}$ is note used by $P$ in this step, therefore the simulator can perform it. \label{simstep:continue}

}

Continue simulating the cheating verifier $\malV$ on the third message $\qrv{m}_3$ to obtain a state $\qrv{o}$, and output $(a,\qrv{o})$ where $a = 0$.

\caption{The simulator $\ZKSim^0$ for the zero-knowledge protocol}\label{fig:zk-sim}
\end{algorithm}\DecMargin{1em}
\vspace{10pt}

\begin{lemma}
\label{lem:zk-zk-0}
There exists a quantum polynomial-time conditional simulator $\ZKSim^0$ satisfying the following: for all cheating verifiers $\malV$, for all asymptotic sequences of $(x,\qrv{y},\qrv{w})$ where $x \in L_{yes}$, the state $\qrv{y}$ is an arbitrary quantum state, and $\qrv{w} \in \cal{R}_L(x)$ is a witness for $x$, we have that the output of $\ZKSim^0(\malV,x,\qrv{y})$, conditioned on not aborting, is computationally indistinguishable from $\langle P(\qrv{w}), \malV(\qrv{y}) \rangle(x)$.
\end{lemma}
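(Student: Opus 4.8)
The plan is to prove \Cref{lem:zk-zk-0} by handling the two branches of the conditional simulator $\ZKSim^0$ (Algorithm~\ref{fig:zk-sim}) separately and then reassembling. The $t=0$ branch runs the honest prover $P$ verbatim and aborts exactly when $\malV$ answers $b\neq 0$ to an honestly generated first message; hence, conditioned on not aborting in this branch, its output is \emph{identically} distributed to the real view $\langle P(\qrv w),\malV(\qrv y)\rangle(x)$ conditioned on $b=0$ (the first message and the $b=0$ response of $P$ use neither $x$ nor $\qrv w$). No assumption is needed here. All the work is in the $t=1$ branch: I must show (B1) conditioned on $\malV$ answering $b=1$, the $t=1$ output is computationally indistinguishable from the real view conditioned on $b=1$, and (B2) $p_1'$, the probability $\malV$ answers $b=1$ to the \emph{simulated} first message, satisfies $|p_1'-(1-p_0)|\le\negl(n)$, where $p_0$ is the probability $\malV$ answers $b=0$ to an honest first message. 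Granting (B1)–(B2): the abort probability is $\tfrac12 p_0+\tfrac12 p_1'=\tfrac12\pm\negl$, bounded away from $0$, and $\ZKSim^0$ conditioned on not aborting equals $\tfrac{p_0}{p_0+p_1'}\sigma_0+\tfrac{p_1'}{p_0+p_1'}\tilde\sigma_1$, where $\sigma_0$ is the real $b=0$ view and $\tilde\sigma_1$ the simulated $b=1$ view; since $p_0+p_1'=1\pm\negl$ the mixture weights are within $\negl$ of $p_0$ and $1-p_0$, and by (B1) $\tilde\sigma_1\approx_c\sigma_1$ (the real $b=1$ view), so the whole thing is computationally indistinguishable from $p_0\sigma_0+(1-p_0)\sigma_1=\langle P(\qrv w),\malV(\qrv y)\rangle(x)$, proving the lemma. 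Passing to $\ZKSim$ via Watrous rewinding (as the surrounding text describes) then yields \Cref{lem:zk-zk}.

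For (B1) I would run a hybrid chain starting from the real view conditioned on $b=1$. \textbf{(H1)} Defer the teleportation of $\qrv w$ to the very start: the Bell measurement on $(\qrv w,\qrv e_1)$ acts on registers disjoint from everything $U_r$ touches, so it commutes with $U_r$; thus we may instead first sample $(u,v)\in\binset^m$ uniformly, initialize the teleported register to $X^{u}Z^{v}(\qrv w)$ (understood as $X^{u_1}Z^{v_1}\otimes\cdots\otimes X^{u_m}Z^{v_m}$ applied to $\qrv w$), and continue — an exact equality of distributions. Crucially, by the defining property of $U_r$, the pair $(\hat{\qrv F}_0,\{\lab_{i,z_i}\}_i)$ for $z=(x,u,v)$ is now exactly the QRE encoding $\hat F\big(x,u,v,X^{u}Z^{v}(\qrv w)\big)$. \textbf{(H2)} Replace the commitments that are \emph{not} opened on $b=1$ — namely $c_0$ (to $r$) and $c_{i,1-z_i}$ (to $\lab_{i,1-z_i}$) — by commitments to $0$; this is indistinguishable by computational hiding of $\com$ via a standard hybrid over the $1+(n+2m)$ changed commitments (the reduction knows $z$ since $(u,v)$ was fixed in H1, and can efficiently do everything else, including running $\malV$). \textbf{(H3)} Replace $\hat F\big(x,u,v,X^{u}Z^{v}(\qrv w)\big)$ by $\Sim\!\big(c,F(x,u,v,X^{u}Z^{v}(\qrv w))\big)$, where $c$ is the topology of $F$ (the one used by $\Sim$ in $\ZKSim^0$), committing to and opening the simulated labels instead; indistinguishability is precisely QRE privacy (\Cref{thm:compdqre}), applied with quantum side information $\qrv q$ consisting of $\qrv y$, classical copies of $(x,u,v)$, and fresh commitment randomness — \Cref{def:universal-qre} permits $\qrv q$ to be correlated with the encoded input, and the ``distinguisher'' (assemble $\qrv m_1$, run $\malV$, assemble $\qrv m_3$, run $\malV$, then any test) is polynomial-size so the loss is negligible. \textbf{(H4)} Note $F\big(x,u,v,X^{u}Z^{v}(\qrv w)\big)=V_L(x,\qrv w)$ for \emph{every} $(u,v)$, since $F$ re-applies $X^{u}Z^{v}$ and $X^2=Z^2=I$; as $\qrv w\in\cal R_L(x)$, the output $V_L(x,\qrv w)$ has overlap $\ge 1-\mu(n)$ with $\ket 1$, hence lies within trace distance $2\sqrt{\mu(n)}$ of $\ketbra 11$, so by data processing we may replace the input of $\Sim$ by $\ketbra 11$ at negligible statistical cost. \textbf{(H5)} Put the fake labels back into the previously junked commitments $c_{i,1-z_i}$ (again by hiding of $\com$); the resulting experiment is \emph{verbatim} the $t=1$ branch of $\ZKSim^0$ restricted to $b=1$. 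Chaining H1–H5 gives (B1). For (B2), the same chain truncated after the first message $\qrv m_1$ (before the challenge, so the H2/H5 commitment-junking steps apply unconditionally there) shows the simulated $\qrv m_1$ is computationally indistinguishable from the honest one; since $\malV$ is polynomial-time, $|p_1'-(1-p_0)|\le\negl(n)$ follows.

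The main obstacle is the bookkeeping around conditioning on the challenge bit. The hybrid argument genuinely requires conditioning on $b=1$ (once the unopened commitments are junked in H2, the experiment can no longer correctly answer $b=0$), whereas the lemma is about $\ZKSim^0$ conditioned only on \emph{not aborting}, whose branch choice $t$ is correlated with the \emph{differently distributed} simulated challenge. The reweighting computation above resolves this, but it relies on the auxiliary fact $p_1'=1-p_0\pm\negl$ — which is exactly why one must also track indistinguishability of the first message alone. Two more points need care: step H1, making precise that deferring the Bell measurement is an exact identity and that the post-deferral object is a bona fide decomposable QRE encoding of $F$ on the teleportation-corrected input; and step H3's use of the fact that QRE privacy tolerates side information correlated with the encoded input. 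The rest is routine hybrid/triangle-inequality accounting.
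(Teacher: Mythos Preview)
Your proposal is correct and uses the same ingredients as the paper (teleportation commuting, commitment hiding, QRE privacy, $\qma$ completeness), but the overall architecture differs. You treat the $t=0$ and $t=1$ branches separately: the $t=0$ branch is exact, and you run a hybrid chain (H1)--(H5) only on the $t=1$ branch, then reassemble via a reweighting argument that also requires the auxiliary claim (B2) about first-message indistinguishability. The paper instead runs a single chain starting from $\ZKSim^0$ and moving \emph{toward} the real interaction; the key structural trick is that in the final hybrid both branches become identical up through the point where $\malV$ outputs $b$, so the guess bit $t$ can be sampled \emph{after} $b$ is fixed, the abort probability is then \emph{exactly} $1/2$, and conditioned on $t=b$ the output is \emph{identically} the real view. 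This completely sidesteps the conditioning bookkeeping you flag as ``the main obstacle'': there is no reweighting, no need for a separate (B2), and no division by a possibly-small $1-p_0$.

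Both arguments are valid. Yours is more modular (each branch is analyzed on its own), at the cost of the extra probability accounting; the paper's is slicker because the ``unify the branches, then sample $t$'' maneuver collapses the conditional analysis to a trivial equality. One small slip in your write-up: the quantity $\tfrac12 p_0+\tfrac12 p_1'$ you compute is the \emph{non}-abort probability (success in branch $t=0$ is $b=0$, which has probability $p_0$; success in branch $t=1$ is $b=1$, which has probability $p_1'$), not the abort probability as stated; the conclusion that it equals $\tfrac12\pm\negl$ and is bounded away from $0$ is correct and is indeed what Watrous rewinding needs.
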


\begin{proof}
The proof proceeds by a sequence of experiments (or hybrids). We keep track of the output distribution of the simulator and the rejection probability across experiments. Fix $\malV$, $\qrv{y}$, $x$, $\qrv{w}$ as in the Lemma statement.

\begin{itemize}
	\item\textbf{Experiment $0$.} This is the experiment of running the conditional simulator $\ZKSim^0$ on input $(\malV,x,\qrv{y})$ as presented in Protocol~\ref{fig:zk-sim}.
	
\item\textbf{Experiment $1$.} Modify the previous experiment by replacing Step~\ref{simstep:uv} with the following:
teleport $\qrv{w}$ into $\qrv{e_1}$ and let $u,v$ be the (classical) outcome of the teleportation. By the properties of the teleportation the strings $u,v$ are uniformly random and therefore this experiment produces an identical distribution to the previous one.

\item\textbf{Experiment $2$.} Now, consider Experiment $1$ except we modify Step~\ref{simstep:qresim} of $\ZKSim^0$. Instead of using the QRE simulator $\Sim$, do the following: sample randomness $r^*$ and evaluate the unitary $U_{r^*}$ on input $(\qrv{e_2},0)$ to generate the state $\Big( \qrv{\hat{F}}_0, (\lab^*_{i,b})_{i \in [n+2m], b \in \binset} \Big)$. 
Set $\lab_i$ to be $\lab^*_{i,z_i}$ for $z=(x,u,v)$. By definition $F(x,u,v,\qrv{e_2})$ computes $V_L(x, \qrv{w})$, which by the completeness of the $\qma$ verifier outputs $\ketbra{1}{1}$ with probability $1-\negl(n)$. 

By the privacy of $\QRE$, the QRE encoding of circuit $F(x,u,v,\qrv{e_2})$ is computationally indistinguishable from $\Sim(\ketbra{1}{1})$ when we marginalize over the randomness $r^*$ and the labels $(\lab_{i,b} : b \neq z_i)$. Since the randomness $r^*$ and the unused labels are not used anywhere else in the experiment, the output distribution of this experiment is computationally indistinguishable from that of the previous one. 

\item\textbf{Experiment $3$.} Change Step~\ref{simstep:fakecom} of $\ZKSim^0$ so that $r=r^*$ and $\lab_{i,b} = \lab^*_{i,b}$. Since this only changes locations of the commitment that are never opened by the prover $P$, the hiding property of the commitment scheme guarantees that the views of $\malV$ between Experiment 2 and Experiment 3 remain computationally indistinguishable. 

\item\textbf{Experiment $4$.} Move the modified Step~\ref{simstep:uv} (i.e. the teleportation of $\qrv{w}$ into the EPR pairs) to be right before Step~\ref{simstep:continue}. This does not change anything in the simulation because none of the steps until Step~\ref{simstep:continue} in Experiment $3$ depend on teleportation outcomes $(u,v)$.

\item\textbf{Experiment $5$.} Note that in Experiment~$4$, the steps up to and including receiving the bit $b$ from $\malV$ are identical between $t=0$ and $t=1$. Thus we can move these steps outside of the ``if'' statement, and before the sampling of $t$. The output of the Experiment is unchanged from Experiment $4$.
\end{itemize}

We see that the output in Experiment~$5$ is computationally indistinguishable from that of $\ZKSim^0(x)$. Furthermore, in Experiment $5$ the bit $t$ is sampled after receiving the bit $b$ and abort occurs if and only if $t \neq b$. It follows that the abort probability is $1/2$ in Experiment~$5$. Furthermore, conditioned on not aborting, the experiment is identical to that of the execution of $\malV$ with $P(x, \qrv{w})$. 

It follows that Experiment~$0$ (which is to run $\ZKSim^0$ on input $(V^*,x,\qrv{y})$), the probability of abort is negligibly close to $1/2$ and the output conditioned on not aborting is indistinguishable from $\langle P(\qrv{w}), \malV(\qrv{y}) \rangle(x)$.
\end{proof}

 \section{Quantum Garbled Circuits -- Construction}
\label{sec:construction}

\newcommand{\topol}{\mathfrak{T}}

In this section we present topologically-universal QRE scheme of \Cref{lem:main}, called the Quantum Garbled Circuits scheme and denoted by $\QGC$.
In particular, given a circuit that computes a quantum operation $F$, we show how to compute the encoding $\hF$ in \Cref{sec:circuit-encoding}, how to decode the encoded value in \Cref{sec:ckt-eval}, and how to simulate the randomized encoding in \Cref{sec:simulator}. We then prove the correctness and security of the scheme in \Cref{sec:analysis}.

In this paper we assume that quantum circuits $F$ being encoded use the universal gate set $\cal{C}_2 \cup \{T\}$. The only property of this gate set we use (other that the arity of the gates being bounded by a global constant) is the following: each $p$-qubit gate $U_g$ of circuit $F$ has the property that for any single-qubit Pauli unitaries $P_1,\ldots,P_p$, there exist single-qubit gates $R_1,\ldots,R_p$ from the PX group such that
\begin{equation}
\label{eq:pauli-to-clifford}
	U_g (P_1 \otimes P_2 \otimes \cdots \otimes P_p) = (R_1 \otimes R_2 \otimes \cdots \otimes R_p) U_g.
\end{equation}
This property indeed holds for the $\cal{C}_2 \cup \{T\}$ universal set as described in Sections~\ref{sec:prelim:pauliclifford} and~\ref{sec:prelim:universal}.

\paragraph{A Building Block: Topologically-Universal Decomposable RE for Classical Circuits.}
As stated in \Cref{lem:main}, we assume the existence of an efficient topologically-universal and label-universal DCRE for classical circuits. We refer to this DCRE scheme as $\CRE = (\CEnc,\CDec,\CSim)$. In our construction we use $\CRE$ as a generic building block, and different instantiations of $\CRE$ will result in quantum encoding scheme with different properties. As in the Lemma statement, we let $\kappa(\cdot,\cdot)$ and $c(\cdot, \cdot)$ be such that for functions $f$ computable by size-$s$ and depth-$d$ classical circuits, the complexity of encoding $f$ is $c(d,s)$ and the length of the labels is $\kappa(d,s)$. We let $\CSim_{\topol}$ and $\CDec_{\topol}$ denote the polynomial-time simulator and decoding procedures of $\CRE$, respectively, for classical circuits with topology ${\topol}$.

\subsection{Gadgets}
\label{sec:gadgets}

In this section we introduce various gadgets that are used in our $\QGC$ scheme.

\subsubsection{Teleportation Gadget}
\label{sec:tp-gadget}
Let $\ell = (\ell_{b,a})_{a \in \binset, b \in \xz}$ be a vector of strings of length $\kappa$, and let 
$s = (s_z,s_x), t = (t_z,t_x) \in \binset^2$.  Define $\sf{TP}_{\ell,s,t}$ to be the unitary computed by the following circuit:
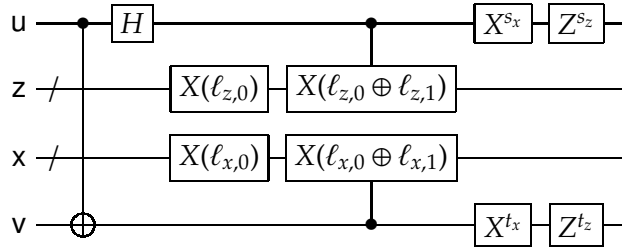
\begin{figure}[H]
  \centering
  $
        \Qcircuit @C=0.6em @R=0.8em {
          \reg{u}  & & \qw & \ctrl{3} & \gate{H}  & \qw & \ctrl{1} & \gate{X^{s_x}} & \gate{Z^{s_z}} & \qw \\
          \reg{z} & & {/} \qw  & \qw & \qw & \gate{X(\ell_{z,0})} & \gate{X(\ell_{z,0} \oplus \ell_{z,1})} & \qw & \qw & \qw  \\
          \reg{x}  & & {/} \qw & \qw & \qw     & \gate{X(\ell_{x,0})} & \gate{X(\ell_{x,0} \oplus \ell_{x,1})} & \qw & \qw & \qw  \\
          \reg{v} & & \qw & \targ & \qw  & \qw & \ctrl{-1} & \gate{X^{t_x}} & \gate{Z^{t_z}} & \qw \\          
        }
        $
    \caption{Teleportation gadget}
    \label{fig:teleportation-gadget}
\end{figure}
Here, for a string $r \in \{0,1\}^\kappa$ the notation $X(r)$ denotes applying the tensor product of $X$ gates acting on the $i$-th qubit whenever $r_i = 1$, and identity otherwise. The controlled $X(\ell_{z,0} \oplus \ell_{z,1})$ and $X(\ell_{x,0} \oplus \ell_{x,1})$ gates can also be seen as \emph{fan-out} gates that are applied to the qubits indexed by $\ell_{z,0} \oplus \ell_{z,1}$ and $\ell_{x,0} \oplus \ell_{x,1}$, respectively, because it copies the control qubit into the target qubits simultaneously. Thus the teleportation gadget is a $\qnczf$ circuit. We refer to $\ell$ as the \emph{teleportation labels} and $s,t$ as the \emph{randomization bits} of the \emph{teleportation gadget} $\sf{TP}_{\ell,s,t}$.

\begin{lemma}
\label{lem:tp-gadget}
Let $\reg{u},\reg{v},\reg{u}'$ denote qubit registers, and let $\reg{z},\reg{x}$ denote ancilla registers. For all $\ell, s,t$ and for all qubit states $\ket{\psi}$ we have 
\[
	\sf{TP}_{\ell,s,t} \, \ubket{\psi}{\reg{u}} \otimes \ubket{0,0}{\reg{z} \reg{x}} \otimes \ubket{\epr}{\reg{v} \reg{u}'}  = \frac{1}{2} \sum_{d,e \in \binset} \ub{Z^{s_z} X^{s_x} \ket{d}}{\reg{u}} \otimes \ub{Z^{t_z} X^{t_x} \ket{e}}{\reg{v}} \otimes \ubket{\ell_{z,d}, \ell_{x,e}}{\reg{z} \reg{x}} \otimes \ub{X^e Z^d \ket{\psi}}{\reg{u}'}
\]
\end{lemma}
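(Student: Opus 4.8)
The plan is to verify the circuit identity in Figure~\ref{fig:teleportation-gadget} directly by tracking the state through each layer of gates. I would label the four registers as $\reg{u},\reg{z},\reg{x},\reg{v}$ (top to bottom), with $\reg{u}'$ being the second half of the EPR pair that gets identified with the ``$\reg{v}$ output line''---more precisely, the circuit acts on $\reg{u}\reg{z}\reg{x}$ and one half of the EPR pair in $\reg{v}$, while $\reg{u}'$ holds the other half and is untouched. The computation is essentially the standard gate-teleportation identity dressed up with the label-selection gates $X(\ell_{b,0})$ then $X(\ell_{b,0}\oplus\ell_{b,1})$ controlled on a qubit.

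First I would write the input as $\ket{\psi}_{\reg{u}} \otimes \ket{00}_{\reg{z}\reg{x}} \otimes \frac{1}{\sqrt 2}\sum_f \ket{f}_{\reg{v}}\ket{f}_{\reg{u}'}$, expanding $\ket{\psi} = \alpha\ket{0} + \beta\ket{1}$. The CNOT from $\reg{u}$ (control) to $\reg{v}$ (target) produces $\frac{1}{\sqrt 2}\sum_{c,f}\psi_c \ket{c}_{\reg{u}}\ket{f\oplus c}_{\reg{v}}\ket{f}_{\reg{u}'}$; reindexing $g = f\oplus c$ this is the familiar entangled form. Next the Hadamard on $\reg{u}$ turns $\ket{c}$ into $\frac{1}{\sqrt 2}\sum_d (-1)^{cd}\ket{d}$. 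At this stage, before the label gates, the amplitude structure should collapse---after summing over $c$---into $\frac{1}{2}\sum_{d,g}\ket{d}_{\reg{u}}\ket{g}_{\reg{v}} (X^g Z^d \ket{\psi})_{\reg{u}'}$ (using that $\sum_c \psi_c (-1)^{cd}\ket{c\oplus g}$-type sums reassemble into $X^g Z^d\ket\psi$ up to the correct identification). Then I would apply the two controlled gates on register $\reg{z}$: the $X(\ell_{z,0})$ gate is \emph{not} controlled, so it unconditionally sets $\reg{z}$ from $\ket 0$ to $\ket{\ell_{z,0}}$; the $X(\ell_{z,0}\oplus\ell_{z,1})$ gate is controlled on $\reg{u}$ (the $\ket d$ qubit), so when $d=1$ it maps $\ket{\ell_{z,0}}\mapsto\ket{\ell_{z,0}\oplus(\ell_{z,0}\oplus\ell_{z,1})}=\ket{\ell_{z,1}}$, i.e. register $\reg{z}$ ends up in $\ket{\ell_{z,d}}$. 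Identically, register $\reg{x}$ ends up in $\ket{\ell_{x,e}}$ where $e = g$ is the control bit from the $\reg{v}$ line. Finally, the closing $X^{s_x}Z^{s_z}$ on $\reg{u}$ and $X^{t_x}Z^{t_z}$ on $\reg{v}$ give the stated prefactors $Z^{s_z}X^{s_x}\ket d$ and $Z^{t_z}X^{t_x}\ket e$ (up to a global phase from commuting $X$ past $Z$, which is irrelevant).

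The one point requiring care is the ordering/commutation in the middle: the controls of the $X(\ell_{z,0}\oplus\ell_{z,1})$ and $X(\ell_{x,0}\oplus\ell_{x,1})$ gates are on $\reg{u}$ and $\reg{v}$ respectively, and these commute with the $X^{s},Z^{s},X^{t},Z^{t}$ gates only up to the fact that $X^s$/$X^t$ act \emph{after} the controlled gates fire---so the control values $d,e$ are read off \emph{before} the Pauli masking, which is exactly what we want and matches the figure's left-to-right order. I should double check that the $\ket{\ell_{z,0}\oplus\ell_{z,1}}$ in the figure is an uncontrolled offset versus a controlled fan-out: reading the figure, the $X(\ell_{z,0})$ box is a plain gate and $X(\ell_{z,0}\oplus\ell_{z,1})$ is the target of a control from the $\ket d$ wire, so the composition indeed yields $\ket{\ell_{z,d}}$. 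The main obstacle, such as it is, is purely bookkeeping: correctly resumming the $c$-sum after the Hadamard into the compact $X^e Z^d\ket\psi$ form with the right identification of which index becomes $d$ (phase error, from $\reg{u}$) and which becomes $e$ (bit error, from $\reg{v}$), and being careful that global phases from $XZ = -ZX$ reorderings are dropped consistently. I would present this as a short explicit chain of equalities, one line per gate layer, with a remark that global phases are suppressed throughout.
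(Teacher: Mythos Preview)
Your proposal is correct and follows essentially the same approach as the paper: a direct gate-by-gate trace through the circuit, identifying the post-Hadamard amplitude $(-1)^{cd}\ket{c\oplus e}$ with $X^eZ^d\ket{c}$ and observing that the uncontrolled $X(\ell_{b,0})$ followed by the controlled $X(\ell_{b,0}\oplus\ell_{b,1})$ produces $\ket{\ell_{b,d}}$ (resp.\ $\ket{\ell_{x,e}}$). The only cosmetic difference is that the paper invokes linearity to reduce to a standard basis input $\ket{\psi}=\ket{c}$ before tracing, whereas you carry the superposition throughout; also, your worry about global phases from $XZ$ reorderings is unnecessary here, since the circuit applies $X^{s_x}$ then $Z^{s_z}$ and the statement records the result as $Z^{s_z}X^{s_x}\ket{d}$ in exactly that order.
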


\begin{proof}
	Since $\sf{TP}_{\ell,s,t}$ is unitary, it suffices to prove the Lemma when $\ket{\psi} = \ket{c}$ is a standard basis state.
	After the first CNOT, Hadamard, and $X(\ell_{z,0})$ and $X(\ell_{x,0})$ we have
	\[
		\frac{1}{2} \sum_{d,a \in \binset} (-1)^{dc} \ubket{d}{\reg{u}} \otimes \ubket{\ell_{z,0}}{\reg{z}} \otimes \ubket{\ell_{x,0}}{\reg{x}} \otimes \ubket{ a\oplus c, a}{\reg{v} \reg{u}'} \;.
	\]
	After the controlled $X$'s we have
	\[
		\frac{1}{2} \sum_{d,a \in \binset} (-1)^{dc} \ubket{d}{\reg{u}} \otimes \ubket{\ell_{z,d}}{\reg{z}} \otimes \ubket{\ell_{x,a \oplus c}}{\reg{x}} \otimes \ubket{ a\oplus c, a}{\reg{v} \reg{u}'}  \;.
	\]	
	Relabeling $e = a \oplus c$ and re-arranging, we get
	\[
		\frac{1}{2} \sum_{d,e \in \binset} \ub{\ket{d}}{\reg{u}} \otimes \ub{\ket{e}}{\reg{v}} \otimes \ubket{\ell_{z,d}, \ell_{x,e}}{\reg{z} \reg{x}} \otimes \ub{X^e Z^d \ket{c}}{\reg{u}'} \;.
	\]
	Applying the $X^{s_x}, X^{t_x}, Z^{s_z}, Z^{t_z}$ gates, we obtain the desired Lemma statement.
	
\end{proof}

\subsubsection{Correction Gadget}
\label{sec:corr-gadget}

\newcommand{\regu}{\reg{u}}
\newcommand{\regv}{\reg{v}}
\newcommand{\rega}{\reg{a}}
\newcommand{\regz}{\reg{z}}
\newcommand{\regx}{\reg{x}}
\newcommand{\regb}{\reg{b}}

In our $\QGC$ scheme, the encoding consists of a collection of EPR pairs that are connected via gates of the circuit $F$ and teleportation gadgets as described above. However, the teleportation gadget induces a \emph{correction} on the output (as demonstrated by \Cref{lem:tp-gadget}) that, ostensibly, needs to be fixed before applying the next gate and teleportation operation -- but since the encoding is performed in parallel, it is up to the evaluator to perform the corrections \emph{after} the gates and teleportation gadgets are appled. The next Lemma demonstrates that the desired operation (correction, then teleportation) is equivalent a sequence of circuits $\Lambda_1, \Lambda_2, \Lambda_3$ where the encoder can apply $\Lambda_1$, and the decoder can apply $\Lambda_2$ and $\Lambda_3$. Before stating the Lemma, we first have to describe the \emph{randomization group}.

\paragraph{Randomization Group.} In the following Lemma, the circuits $\Lambda_1,\Lambda_2,\Lambda_3$ will act on the same registers that the teleportation gadget acts on (single-qubit registers $\reg{u}, \reg{v}$ and $\kappa$-qubit registers $\reg{z}, \reg{x}$), as well as ancilla registers $\regb$ that is $(\kappa+1)^2$ qubits wide. The individual qubits of registers $\regb$ are indexed by $(i,j) \in \{0,1,\ldots,\kappa\}^2$. An important conclusion of the Lemma is that the circuit $\Lambda_2$ computes a unitary belonging to the \emph{randomization group} $\scr{R}_\kappa$, which consists of depth-one circuits that are tensor products of
\begin{itemize}
	\item Two-qubit Clifford gates acting on the pair of qubits $(\regb_{ij},\regb_{ji})$ for all $i < j$, and
	\item Single-qubit Clifford gates acting on all other qubits.
\end{itemize}
The set $\scr{R}_\kappa$ indeed forms a group under the natural gate multiplication operation. It has finite order with $\exp(O(\kappa^2))$ elements, and a uniformly random element of $\scr{R}_\kappa$ can be sampled via an $\ncz$ circuit that is given a uniformly random bitstring as input (essentially, the single- and two-qubit Clifford gates are chosen independently in parallel).

\begin{lemma}
\label{lem:commute-correction}
Let $\ell$ be $\kappa$-bit teleportation labels, and let $s,t \in \binset^2$ be randomization bits. Let $R$ be an element from the single-qubit PX group. Then there exist
\begin{itemize}
	\item $\qnczf$ circuits $\Lambda_1(\ell)$, $\Lambda_3$ and 
	\item A depth-one Clifford circuit $\Lambda_2(R,\ell,s,t)$ that computes a unitary in $\scr{R}_\kappa$
\end{itemize}
all acting on registers $\regu, \regz, \regx, \regv, \regb$ such that the following circuit identity holds:
\begin{equation}
\label{eq:commute-correction}
        \Qcircuit @C=0.6em @R=0.8em {
          \reg{u}  & & & & \gate{R} & \multigate{3}{\sf{TP}_{\ell,s,t}} & \qw &  	& && &	 & & \reg{u} & &  & & \qw & \multigate{4}{\Lambda_1(\ell)}  & \multigate{4}{\Lambda_2(R,\ell,s,t)} &\multigate{4}{\Lambda_3} & \qw \\
          \regz &  & & & {/} \qw & \ghost{\sf{TP}_{\ell,s,t}} & \qw  			   & && 	& &		& & \regz & &   & & {/}\qw  & \ghost{\Lambda_1(\ell)} 		& \ghost{\Lambda_2(R,\ell,s,t)} & \ghost{\Lambda_3} & \qw \\
          \regx  &  & & & {/} \qw & \ghost{\sf{TP}_{\ell,s,t}}  & \qw &  && = & && & \regx & & & & {/} \qw & \ghost{\Lambda_1(\ell)}	& \ghost{\Lambda_2(R,\ell,s,t)} &   \ghost{\Lambda_3} & \qw \\
          \regv & &  & &  \qw & \ghost{\sf{TP}_{\ell,s,t}} & \qw  & &&& &	&& \regv & & & & \qw & \ghost{\Lambda_1(\ell)} & \ghost{\Lambda_2(R,\ell,s,t)} & \ghost{\Lambda_3} & \qw & \\
		 \regb & & \ket{0} & & {/} \qw & \qw & \qw  & 		& &&&		&& \regb & & \ket{0} & & {/} \qw & \ghost{\Lambda_1(\ell)} & \ghost{\Lambda_2(R,\ell,s,t)} & \ghost{\Lambda_3} & \qw & 
        }
\end{equation}
Furthermore, the description of $\Lambda_2(R,\ell,s,t)$ can be computed by a $\ncz$ circuit of size $O(\kappa^2)$.
\end{lemma}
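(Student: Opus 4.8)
The plan is to prove \Cref{lem:commute-correction} constructively: exhibit explicit circuits $\Lambda_1(\ell)$, $\Lambda_2(R,\ell,s,t)$, $\Lambda_3$, and then verify the circuit identity \eqref{eq:commute-correction} by pushing the PX-unitary $R$ through the teleportation gadget gate by gate, using only Pauli/Clifford commutation relations in the spirit of \Cref{fact:tpauli}. Concretely, I would first decompose $\sf{TP}_{\ell,s,t} = B_{s,t}\,L(\ell)\,H_{\regu}\,\mathrm{CNOT}_{\regu\to\regv}$, where $\mathrm{CNOT}_{\regu\to\regv}$ and $H_{\regu}$ form the parameter-free ``Bell-basis change'', $L(\ell)$ is the $\qnczf$ label-writing layer (the unconditional $X(\ell_{\cdot,0})$ gates together with the $\regu$- and $\regv$-controlled fan-outs $X(\ell_{z,0}\oplus\ell_{z,1})$ and $X(\ell_{x,0}\oplus\ell_{x,1})$, which depend only on $\ell$), and $B_{s,t}$ is the final randomizing layer $X^{s_x}Z^{s_z}$ on $\regu$ and $X^{t_x}Z^{t_z}$ on $\regv$. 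Writing $R = i^{\gamma}X^{\alpha}P^{\beta}$ in canonical PX form, I would conjugate $R$ through this decomposition: $\mathrm{CNOT}_{\regu\to\regv}$ turns $X^{\alpha}_{\regu}$ into $X^{\alpha}_{\regu}X^{\alpha}_{\regv}$ and leaves $P^{\beta}_{\regu}$ fixed (it is diagonal on the control), $H_{\regu}$ turns $X^{\alpha}_{\regu}$ into $Z^{\alpha}_{\regu}$ and $P^{\beta}_{\regu}$ into $(HPH)^{\beta}_{\regu} = \sqrt{X}^{\,\beta}_{\regu}$, and a Pauli $X$ on the control of a fan-out $X(m)$ commutes past it at the cost of a fixed Pauli $X$-string $X(m)$ on its targets.

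The residue of this conjugation splits into an ``easy'' part and a ``hard'' part. The easy part is a collection of single-qubit Paulis on $\regu,\regv$ (absorbable into $B_{s,t}$, hence into $\Lambda_2$, since single-qubit Cliffords lie in $\scr{R}_\kappa$) together with $\ell$-dependent Pauli $X$-strings on $\regz,\regx$ (again single-qubit Cliffords in $\scr{R}_\kappa$, and depending only on $R$ and $\ell$, as $\Lambda_2$ is allowed to). The hard part is the factor $\sqrt{X}^{\,\beta}_{\regu}$ when $\beta$ is odd: unlike a Pauli, $\sqrt{X}$ does not commute past the $\regu$-controlled fan-out onto $\regz$, so it cannot be absorbed into a single-qubit correction — this is precisely the phase-error obstruction, now appearing inside the gadget, and it is what forces the ancilla register $\regb$. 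The idea is that $\Lambda_1(\ell)$ does not literally apply the $\regu$-controlled fan-out; instead it ``spreads out'' the $\regu\to\regz$ label-writing onto $\regb$, using the symmetric reflection pairs $(i,j),(j,i)$ to hold the two label options redundantly and entangled with copies of $\regu$, in a form robust to a coherent $\sqrt{X}$ correction. Then $\Lambda_2(R,\ell,s,t)\in\scr{R}_\kappa$ supplies that correction as a \emph{depth-one} layer — single-qubit Cliffords on the $O(\kappa)$ relevant qubits of $\regu,\regv,\regz,\regx$ and the diagonal of $\regb$, and two-qubit Cliffords on the $\regb$ reflection pairs — and $\Lambda_3$, which depends on no parameters, performs the $\mathrm{CNOT}_{\regu\to\regv}$, $H_{\regu}$ and a Bell-measurement-style uncomputation that routes the labels into $\regz,\regx$ and restores $\regb$ to $\ket{0}$. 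Since we are flattening a constant-depth Clifford sub-computation on $\Theta(\kappa)$ qubits down to depth one, the ancilla register needs $\Theta(\kappa^{2})$ qubits, which is the source of the $(\kappa+1)^{2}$ width and of the $O(\kappa^{2})$ bound on the $\ncz$ circuit computing the description of $\Lambda_2$.

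The main obstacle I expect is making the $\beta$-odd (phase) case fit the rigid form of $\scr{R}_\kappa$ while keeping $\Lambda_1$ blind to $R,s,t$ and $\Lambda_3$ blind to everything: one must design the $\regb$-gadget so that (i) after $\Lambda_1$ the register $\regb$ is in an $\ell$-dependent but $R$-independent state, (ii) the needed $\sqrt{X}^{\,\beta}$-correction (plus the Pauli bookkeeping above) really is a product of independent single- and reflection-pair two-qubit Cliffords, and (iii) $\Lambda_3$, without knowledge of $R,\ell,s,t$, nevertheless uncomputes $\regb$ to $\ket{0}$ and yields the correct state on $\regu,\regz,\regx,\regv$. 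Once these gadgets are pinned down, verifying \eqref{eq:commute-correction} is a mechanical computation: expand both sides using the single-qubit and CNOT commutation rules and read off the action of $\sf{TP}_{\ell,s,t}$ via \Cref{lem:tp-gadget}. Finally, the claimed complexity of computing a description of $\Lambda_2$ follows because, by construction, $\Lambda_2$ is specified by $O(\kappa^{2})$ independent constant-size Cliffords, each determined from $(R,\ell,s,t)$ by a constant-size computation, all performable in parallel.
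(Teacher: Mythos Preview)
Your diagnosis is right at the high level: the obstruction is the phase component of $R$ (since $HPH$ does not commute cleanly with a controlled fan-out), this is what forces the $(\kappa+1)^2$-qubit ancilla with its reflection-pair structure, and $\Lambda_2$ should indeed be a tensor product of single-qubit Cliffords together with two-qubit Cliffords on the pairs $(\regb_{ij},\regb_{ji})$. But your architecture places $\mathrm{CNOT}_{\regu\to\regv}$ and $H_{\regu}$ in $\Lambda_3$, and this does not work: these are the \emph{first} two gates of $\sf{TP}_{\ell,s,t}$, and in the paper they belong to $\Lambda_1$. If $\Lambda_3$ applied them, then either the overall circuit would end in $H_{\regu}\cdot\mathrm{CNOT}$ rather than in the Pauli layer $B_{s,t}$, or the remaining parameter-free tail of $\Lambda_3$ would have to supply the $\ell$-dependent label layer and the $(s,t)$-dependent $B_{s,t}$ --- impossible. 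In the paper one first pushes $R$ through $\mathrm{CNOT}$ to get PX residues $R_1$ on $\regu$ and $R_2$ on $\regv$; the $R_2$ branch propagates through the $\regv$-controlled fan-out into single-qubit PX corrections on $\regx,\regv$ (absorbed directly into $\Lambda_2$), while the $R_1$ branch --- which must still cross $H_{\regu}$ and then the $\regu$-controlled fan-out $X(r)$ with $r=\ell_{z,0}\oplus\ell_{z,1}$ --- is isolated as a separate sub-lemma (\Cref{prop:p-h}).

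The concrete mechanism you are missing for that sub-problem is this: conjugate the $\regu$-controlled fan-out $X(r)$ by Hadamards on $\regu$ \emph{and on every qubit of} $\regz$, turning it into the parity gate $\regu\mapsto\regu\oplus(r\cdot z)$; then sandwich the parity gate between ``descending'' fan-outs (copying $\regu$ into $\regb_{0,*}$ and each $\regz_j$ into $\regb_{j,*}$) and their inverses. After the descending copies, $\regu$ holds the parity and a copy of every participating bit sits in $\regb$; a phase gate $P$ on that parity then rewrites --- via the identity $c_1\oplus\cdots\oplus c_n \equiv \sum_i c_i - 2\sum_{i<j}c_ic_j \pmod{4}$ --- as a product of $P$'s on the individual bits and $CZ$'s on pairs, and those pairs can be taken to be exactly the reflection pairs $(\regb_{ij},\regb_{ji})$. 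This is precisely what makes $\Lambda_2\in\scr{R}_\kappa$. With this in hand, $\Lambda_1$ is ``$\mathrm{CNOT}$, Hadamard basis change, parity, descending copies (plus the easy $\regx$-side label writing)'', $\Lambda_2$ is the resulting depth-one Clifford layer, and $\Lambda_3$ is ``ascending uncopy, undo the Hadamard basis change'' --- parameter-free, as required.
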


\begin{proof}

We can write the left-hand side of~\eqref{eq:commute-correction} (omitting the ancilla registers $\regb$) as
\[
        \Qcircuit @C=0.6em @R=0.8em {
          \reg{u}  & & \gate{R} & \ctrl{3} & \gate{H}  & \qw & \ctrl{1} & \gate{X^{s_x}}  & \gate{Z^{s_z}} & \qw \\
          \reg{z} &  & {/} \qw  & \qw & \gate{X(\ell_{z,0})} & \qw  & \gate{X(\ell_{z,0} \oplus \ell_{z,1})} &  \qw  & \qw & \qw \\
          \reg{x}  & & {/} \qw & \qw   & \gate{X(\ell_{x,0})} & \qw   & \gate{X(\ell_{x,0} \oplus \ell_{x,1})} & \qw  & \qw & \qw \\
          \reg{v} & & \qw & \targ & \qw  & \qw & \ctrl{-1} & \gate{X^{t_x}}  & \gate{Z^{t_z}}  & \qw \\          
        }
\]
Since $R$ is a PX group element, we can propagate it past the first CNOT gate to get the equivalent circuit
\[
        \Qcircuit @C=0.6em @R=0.8em {
          \reg{u}  & & \qw & \ctrl{3} & \qw & \gate{R_1}  & \gate{H} & \ctrl{1} & \gate{X^{s_z}}  & \gate{Z^{s_z}} & \qw \\
          \reg{z} &  & {/} \qw  & \qw & \gate{X(\ell_{z,0})} & \qw & \qw  & \gate{X(\ell_{z,0} \oplus \ell_{z,1})} &  \qw  & \qw & \qw \\
          \reg{x}  & & {/} \qw & \qw   & \gate{X(\ell_{x,0})} & \qw & \qw   & \gate{X(\ell_{x,0} \oplus \ell_{x,1})} & \qw  & \qw & \qw \\
          \reg{v} & & \qw & \targ & \gate{R_2} & \qw  & \qw & \ctrl{-1} & \gate{X^{t_x}}  & \gate{Z^{t_z}}  & \qw \\          
        }
\]
for some single-qubit gates $R_1$ and $R_2$ that are PX group elements. Propagating $R_2$ past the bottom fan-out operation yields another equivalent circuit
\begin{equation}
\label{eq:commute-correction-2}
        \Qcircuit @C=0.6em @R=0.8em {
          \reg{u}  & & \qw & \ctrl{3} & \qw & \gate{R_1}  & \gate{H} & \ctrl{1} & \qw & \gate{X^{s_x}}  & \gate{Z^{s_z}} & \qw \\
          \reg{z} &  & {/} \qw  & \qw & \gate{X(\ell_{z,0})} & \qw & \qw  & \gate{X(\ell_{z,0} \oplus \ell_{z,1})} &  \qw  & \qw & \qw & \qw \\
          \reg{x}  & & {/} \qw & \qw   & \gate{X(\ell_{x,0})} & \qw & \qw   & \gate{X(\ell_{x,0} \oplus \ell_{x,1})} & \gate{R_3}  & \qw & \qw & \qw  \\
          \reg{v} & & \qw & \targ & \qw & \qw  & \qw & \ctrl{-1} & \gate{R_4} & \gate{X^{t_x}}  & \gate{Z^{t_z}}  & \qw \\          
        }
\end{equation}
where $R_3$ is a single-qubit PX group element, and $R_4$ is a tensor product of single-qubit PX group elements. 

Next, we use the following Proposition in order to ``push'' the $R_1$ gate through the circuit as far to the right as possible:

\begin{restatable}{proposition}{ph}
\label{prop:p-h}
For all single-qubit PX gates $R$, $s = (s_z,s_x) \in \{0,1\}^2$, and strings $r \in \binset^\kappa$ there exist $\qnczf$ circuits $C_1(r)$ and $C_3$ and a depth-one Clifford circuit $C_2(R,r,s)$ in the randomization group $\scr{R}_\kappa$ such that the following circuit identity holds:
\[
        \Qcircuit @C=0.6em @R=0.8em {
          \reg{u} & &  & & \gate{R}  & \gate{H} & \ctrl{1} & \gate{X^{s_x}}  & \gate{Z^{s_z}} & \qw &		& & & & \multigate{2}{C_1(r)} & \multigate{2}{C_2(R,r,s)} & \multigate{2}{C_3} & \qw \\
          \reg{z} & & &  & {/} \qw & \qw  & \gate{X(r)} &  \qw & \qw & \qw & =   & & & &  \ghost{C_1(r)} & \ghost{C_2(R,r,s)} & \ghost{C_3} & \qw \\
          \regb && \ket{0}  & & {/} \qw & \qw   & \qw & \qw  & \qw & \qw  &			& & \ket{0}& & 		\ghost{C_1(r)} & \ghost{C_2(R,r,s)} & \ghost{C_3} & \qw         
        }
\]
Here, the register $\reg{b}$ consists of $(\kappa+1)^2$ qubits. Furthermore, descriptions of the circuits $C_1(r)$ and $C_2(R,r,s)$ can be computed by $\ncz$ circuits of size $O(\kappa^2)$.
\end{restatable}
The proof of this Proposition is deferred to \Cref{sec:p-h-proof}. Letting $r = \ell_{z,0} \oplus \ell_{z,1}$, we thus get that Circuit~\eqref{eq:commute-correction-2} is equivalent to 
\begin{equation}
\label{eq:commute-correction-3}
        \Qcircuit @C=0.6em @R=0.8em {
          \reg{u}  & & & & \qw & \ctrl{4} & \qw & \multigate{2}{C_1(r)}  & \multigate{2}{C_2(R_1,r,s)} & \qw & \qw  & \multigate{2}{C_3} & \qw \\
          \reg{z} &  & & & {/} \qw  & \qw & \gate{X(\ell_{z,0})}  & \ghost{C_1(r)} & \ghost{C_2(R_1,r,s)}  & \qw &  \qw  & \ghost{C_3} & \qw \\
          \regb & & \ket{0}& & {/} \qw  & \qw & \qw & \ghost{C_1(r)} & \ghost{C_2(R_1,r,s)} &	\qw & \qw & \ghost{C_3} & \qw   \\
          \reg{x}  & & & & {/} \qw & \qw   & \gate{X(\ell_{x,0})} &  \gate{X(\ell_{x,0} \oplus \ell_{x,1})} & \gate{R_3}  & \qw & \qw & \qw & \qw  \\
          \reg{v} & & & & \qw & \targ & \qw & \ctrl{-1} & \gate{R_4} & \gate{X^{t_x}}  & \gate{Z^{t_z}}  & \qw         & \qw
        }
\end{equation}
Define $\Lambda_1(\ell)$ to be the circuit
\[
        \Qcircuit @C=0.6em @R=0.8em {
          \reg{u}  & & \qw & \ctrl{4} & \qw & \multigate{2}{C_1(r)}  & \qw \\
          \reg{z}  & & {/} \qw  & \qw & \gate{X(\ell_{z,0})}  & \ghost{C_1(r)} & \qw \\
          \regb  & & {/} \qw  & \qw & \qw & \ghost{C_1(r)} & \qw   \\
          \reg{x}   & & {/} \qw & \qw   & \gate{X(\ell_{x,0})} &  \gate{X(\ell_{x,0} \oplus \ell_{x,1})} & \qw  \\
          \reg{v}  & & \qw & \targ & \qw & \ctrl{-1} & \qw
        }
\]
Since $C_1(r)$ is a $\qnczf$ circuit, so is $\Lambda_1(\ell)$. 

Define $\Lambda_2(R,\ell,s,t)$ to be the circuit
\[
        \Qcircuit @C=0.6em @R=0.8em {
          \reg{u}   & & \qw & \multigate{2}{C_2(R_1,r,s)} & \qw & \qw   & \qw \\
          \reg{z} & & {/} \qw  & \ghost{C_2(R_1,r,s)}  & \qw &  \qw  & \qw \\
          \regb & & {/} \qw  & \ghost{C_2(R_1,r,s)} &	\qw & \qw & \qw   \\
          \reg{x}   & & {/} \qw & \gate{R_3}  & \qw & \qw & \qw  \\
          \reg{v}  & & \qw & \gate{R_4} & \gate{X^{t_x}}  & \gate{Z^{t_z}}  & \qw  
        }
\]
Since $C_2(R_1,r,s)$ is a depth-one Clifford circuit in the randomization group $\scr{R}_\kappa$, $R_3$ is a tensor product of single-qubit Clifford gates, and $R_4$ is a single-qubit Clifford gate, it follows that $\Lambda_2(R,\ell,s,t)$ is also a member of $\scr{R}_\kappa$. The fact that a description of $\Lambda_2(R,\ell,s,t)$ can be computed by an $\ncz$ circuit follows from the fact that $C_2(R_1,r,s)$ and $R_1,R_2,R_3,R_4$ can all be computed via $\ncz$ circuits (given $R, \ell, s, t$ as input). 

Finally, define $\Lambda_3$ to be the circuit $C_3$, which is a $\qnczf$ circuit. This establishes the Lemma.
\end{proof}

\begin{remark}
Henceforth we will abbreviate the tuple of registers $(\reg{z},\reg{x},\reg{b})$ as register $\reg{a}$. We call the qubits in this register ``ancilla qubits''. 
\end{remark}

\paragraph{The Correction Gadget.} Let $\kappa$ be a positive integer and let $A$ be a unitary from the randomization group $\scr{R}_\kappa$, let $R$ be a single-qubit PX unitary, let $\ell$ be $\kappa$-bit teleportation labels, and let $s = (s_x,s_z), t = (t_x,t_z) \in \binset^2$ be randomization bits. Define the \emph{$\kappa$-correction gadget unitary} $\sf{Corr}_{A,R,\ell,s,t}$ to be the unitary computed by the following circuit:
\[
        \Qcircuit @C=0.6em @R=0.8em {
			\reg{u}  & & \qw & \multigate{2}{A^\dagger}  & \multigate{2}{\Lambda_2(R,\ell,s,t)} & \qw \\
			\reg{a}   & & {/}\qw  & \ghost{A^\dagger} 		& \ghost{\Lambda_2(R,\ell,s,t)} &  \qw \\
          \reg{v} & & \qw & \ghost{A^\dagger}	& \ghost{\Lambda_2(R,\ell,s,t)} &   \qw
        }
\]
Note that since $A \in \scr{R}_\kappa$, it follows that the unitary $\sf{Corr}_{A,R,\ell,s,t} \in \scr{R}_\kappa$. Thus a canonical representation of a correction gadget $\sf{Corr}_{A,R,\ell,s,t}$, denoted by $\gadget{Corr}_{A,R,\ell,s,t}$, is an ordered list of single- and two-qubit Clifford gates on the respective qubits.

Furthermore, then for every $\kappa$-bit teleportation labels $\ell$, randomization bits $s,t \in \binset^2$, and single-qubit $R$ from the PX group, for a uniformly random $A$ drawn from $\scr{R}_{\kappa}$, the correction gadget is $\gadget{Corr}_{A,R,\ell,s,t}$ is uniformly distributed over the set of all $\kappa$-correction gadgets.

\subsection{Encoding a Single Gate}
\label{sec:gate-encoding}

We now present the \emph{gate encoding} unitary $\ggate_{g,r,A,\ell,s,t}$ in \Cref{fig:gate-encoding}. This is used by $\QGC$ to encode each gate of the circuit. In what follows, we let $g \in \cal{B}$ denote a ``placeholder gate'' in the circuit topology $\scr{T}$, and let $U_g$ denote a specific unitary for the gate.

The unitary $\ggate_{g,r,A,\ell,s,t}$ is a function of a $p$-qubit gate $U_g$, a string $r$ (which represents the randomness used by $\CRE$), randomization unitaries $A = (A_j)_{j \in [p]}$ with $A_j \in \scr{R}_{\kappa_j}$ for some integer $\kappa_j$, strings $\ell = (\ell_{j,b,a})_{j \in [p],b\in \xz,a \in \binset}$ with $\ell_{j,b,a} \in \binset^{\kappa_j}$ (which represents the teleportation labels), and randomization bits $s = (s_{j,b})_{j \in [p],b \in \binset}$, $t = (t_{j,b})_{j \in [p],b \in \binset}$.
As described at the beginning of this section, we assume that the gate $U_g$ satisfes the property described in \Cref{eq:pauli-to-clifford}.

At a high level, the gate encoding produces a quantum and a classical part. The quantum part is comprised of input qubits in registers $\reg{u}_1,\ldots,\reg{u}_p$, ancillas in registers $\reg{a}_1,\ldots,\reg{a}_p$, and target qubits in registers $\reg{v}_1,\ldots,\reg{v}_p$. These target qubits are each initialized as part of an EPR pair. The gate $U_g$ is first applied to the input registers $\reg{u}$. Then the the circuit $\Lambda_1(\ell_j)$ from \Cref{lem:commute-correction} are applied to registers $(\reg{u}_j,\reg{a}_j,\reg{v}_j)$ for $j \in [p]$, and then the randomization unitaries $A_j$ are applied to registers $(\reg{u}_j,\reg{a}_j,\reg{v}_j)$ for $j \in [p]$. 

Since we think of the input qubits as having been previously teleported, the input qubits will have $X$ and $Z$ corrections on them, and applying the gate $U_g$ will incur further corrections. The evaluator will have to fix these corrections; they will use the classical part of the gate encoding to do this, which is a classical randomized encoding of a \emph{correction function}, which we describe next.

\subsubsection{Correction Functions and Their Randomized Encoding} 
\label{sec:corfunc}

Consider a vector $\kappa = (\kappa_j)_{j \in [p]}$ of label lengths. Define the classical \emph{correction function} $f_{\kappa}$, which on input \\ $(z_1,x_1,\ldots,z_p,x_p,U_g,A,\ell,s,t)$ outputs a tuple of quantum circuits $(\gadget{Corr}_1,\gadget{Corr}_2,\ldots,\gadget{Corr}_p)$ where $\gadget{Corr}_j = \gadget{Corr}_{A_{j},R_j,\ell_j,s_j,t_j}$ for $j \in [p]$ are the correction gadgets defined in \Cref{sec:corr-gadget}. The unitaries $R_1, R_2,\ldots,R_p$ are the single-qubit PX unitaries satisfying
\[
U_g (Z^{z_1} X^{x_1} \otimes \cdots \otimes Z^{z_p} X^{x_p}) = (R_1^\dagger \otimes\cdots \otimes R_p^\dagger) U_g~.
\]
The unitaries $R_1,\ldots,R_p$ are well-defined because $U_g$ comes from our universal set of gates (two-qubit Clifford and $T$ gates). The correction gadgets $\gadget{Corr}_j$ are specified using their canonical representation (i.e., a tensor product of single- and two-qubit Clifford gates). 

We now describe a classical circuit to compute $f_{\kappa}$, by first describing the circuit to compute $\gadget{Corr}_j$ for a single $j$. As discused in \Cref{sec:corr-gadget}, there is a constant-depth circuit of size $O(\kappa_j^2)$ to compute $\gadget{Corr}_j$, and this circuit is a function of $R_j, A_j, \ell_j, s_j, t_j$. The unitary $R_j$ is itself a function of $U_g$ and the tuple $(z_1,x_1,\ldots,z_p,x_p)$. Since the universal gate set used in this paper has arity bounded by $2$ and has a constant number of elements, there is a constant-sized circuit that computes $R_j$. Composing this circuit with the circuit for $\gadget{Corr}_j$, we get a constant-depth circuit of size $O(\kappa_j^2)$ that takes input $g, z_1,x_1,\ldots,z_p,x_p, A_j, \ell_j, s_j, t_j$. 

Putting everything together, we have a classical $\ncz$ circuit, whose depth will be denoted a universal constant $d_{\corr}$ 
and whose size is $c_\kappa = O(\sum_j \kappa_j^2)$ that computes $f_{\kappa}$. Note that the topology of this circuit only depends on the vector $\kappa$; call this topology $\topol_\kappa$. 

Now, consider the encoding $\hf_\kappa$ of $f_\kappa$ with respect to $\CRE$, the DCRE scheme for classical circuits that we use as a blackbox. Since the scheme is decomposable, we have that $\hf_\kappa(z_1,x_1,\ldots,z_p,x_p,U_g, A, \ell, s,t; r)$ consists of an offline part $\hf_{\kappa,\off}(r)$ that only depends on $f_\kappa$, and an online part which are labels for each input $z_1,x_1,\ldots,z_p,x_p,U_g, A, \ell, s,t$. Let $\lab_\kappa(U_g,A,\ell,s,t; r)$ denote the set of labels encoding the inputs $U_g, A, \ell, s,t$. 

Thus one can consider, for a fixed $g, A, \ell, s,t$ the correction function $f_{g,A,\ell,s,t}(z_1,x_1,\ldots,z_p,x_p) = f_\kappa(z_1,x_1,\ldots,z_p,x_p,U_g,A,\ell,s,t)$. The randomized encoding $\hf_\kappa$ is also a randomized encoding of $\hf_{g,A,\ell,s,t}$, where now the offline part $\hf_{g,A,\ell,s,t,\off}(r)$ consists of both $\hf_{\kappa,\off}(r)$ and $\lab_\kappa(U_g,A,\ell,s,t; r)$. The online part are labels for $z_1,x_1,\ldots,z_p,x_p$; for each $j \in [p]$, $b\in \xz$, and $a \in \binset$, let $\lab_{\kappa}(j,b,a; r)$ denote the label of the input variable $b_j$ when it takes value $a$, when the DCRE randomness is $r$ and the topology of the circuit is $\topol_\kappa$.

\SetAlgorithmName{Protocol}{Protocol} 

\subsubsection{The Gate Encoding Unitary}
\label{sec:gate-encoding-unitary}

We now present the gate encoding unitary $\ggate_{g,r,A,\ell,s,t}$. It acts on registers $\reg{u} = (\reg{u}_1,\ldots,\reg{u}_p)$ (which represents the input qubits to the gate $U_g$), $\reg{a} = (\reg{a}_1,\ldots,\reg{a}_p)$ (which represent the ancillas qubits for the teleportation and correction gadgets), $\reg{v} = (\reg{v}_1,\ldots,\reg{v}_p)$ (which represents the entrance of connecting EPR pairs), and $\reg{c}$ (which represents a register to hold the classical randomized encoding of the correction gadget). In the description of the protocol, the unitary $\Lambda_1(\ell_j)$ is given by \Cref{lem:commute-correction}. 

\vspace{10pt}
\IncMargin{1em}
\begin{algorithm}[H]
\DontPrintSemicolon
\tcp{Compute the quantum part of the QRE} 
\Indp Apply $U_g$ to registers $(\regu_1,\ldots,\regu_p)$ \; 
Apply $\Lambda_1(\ell_j)$ to registers $(\regu_j, \rega_j,\regv_j)$ for all $j \in [p]$ \;
Apply $A_j$ to registers $(\regu_j, \rega_j,\regv_j)$ for all $j \in [p]$ \;
\BlankLine
\Indm \tcp{Compute the classical part of the QRE} 
\Indp Compute classical randomized encoding of the correction function $f_{g,A,\ell,s,t}$ as defined in Section~\ref{sec:corfunc}, and let  $\hf_{g,A,\ell,s,t,\off}(r)$ denote the offline part of the randomized encoding using randomness $r$. Store the string $\hf_{g,A,\ell,s,t,\off}(r)$ in the register $\reg{c}$.
\caption{Gate encoding operation $\ggate_{g,r,A,\ell,s,t}$}\label{fig:gate-encoding}
\end{algorithm}\DecMargin{1em}
\vspace{10pt}

The quantum part of the gate encoding is illustrated in \Cref{fig:mtp}. There, $\Lambda_1$ denotes the tensor product of $\Lambda_1(\ell_1),\ldots,\Lambda_p(\ell_p)$, and $A$ denotes the tensor product of $A_1,\ldots,A_p$.

\begin{figure}[H]
\[
        \Qcircuit @C=0.6em @R=0.8em {
			\reg{u}_1,\ldots,\reg{u}_p   & & & & & {/} \qw & \gate{U_g} & \multigate{2}{\Lambda_1} & \multigate{2}{A}   & \qw \\
			\reg{a}_1,\ldots,\reg{a}_p   & & & & & {/}\qw & \qw  & \ghost{\Lambda_1} & \ghost{A} 		 &  \qw \\
          \reg{v}_1,\ldots,\reg{v}_p & & & & & {/} \qw & \qw & \ghost{\Lambda_1} & \ghost{A}	&    \qw
        }
\]
\caption{The quantum part of the gate encoding $\ggate$} \label{fig:mtp}
\end{figure}
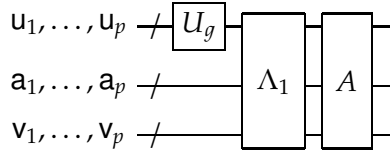

\paragraph{Complexity of the Gate Encoding.} The gate encoding consists of a quantum part and a classical part. The quantum part is applying the gate $U_g$, applying the $\Lambda_1$ circuits, and then applying the randomizers $A_j$. Since the $\Lambda_1$ are $\qnczf$ circuits, and the randomizers is just a single layer of single- and two-qubit Clifford gates, the quantum encoding can be computed in $\qnczf$. 

The classical part is computing the classical randomized encoding of the correction function $f_{g,A,\ell,s,t}$, which has topology $\topol_{\kappa_g}$. The complexity of encoding $f_{g,A,\ell,s,t}$ is inherited from the classical RE used -- if the encoding of $\CRE$ can be computed via a $\ncz$ circuit, then the entire gate encoding procedure can be computed in $\qnczf$.

\subsection{Encoding a Circuit and Input}
\label{sec:circuit-encoding}

We now describe the encoding algorithm $\Enc$ of the $\QGC$ scheme. It takes as input a quantum circuit $F = (\scr{T},\cal{G})$ and quantum input $\qrv{y}$, and outputs the encoding $\hF(\qrv{y}; J)$ where $J$ is a uniform random string.\footnote{We henceforth use $\qrv{y}$ instead of $\qrv{x}$ to denote the input to the function $F$ being encoded; this is to disambiguate it from the qubits in register $\reg{x}$ used by the teleportation and correction gadgets.} We let $n$ denote the number of qubits of $\qrv{y}$.
The encoding is decomposable, so the encoding consists of
\[
	\hat{F}(\qrv{y}; J) = (\hat{F}_\off,\hat{\qrv{y}}_1,\ldots,\hat{\qrv{y}}_n)
\]
where $\hat{F}_\off$ is the offline encoding of $\hat{F}$ and $\hat{\qrv{y}}_i = \hat{F}_i(\qrv{y}_i)$ is the encoding of the $i$-th qubit of $\qrv{y}$.

Let $\cal{Z}$ and $\cal{T}$ be the zero input qubits and discarded output qubits of topology $\scr{T}$, respectively. For simplicity we assume that $\cal{Z}$ is empty (the zero inputs can be incorporated into $\qrv{y}$).

The offline encoding is presented in \Cref{fig:qre-offline} and the input encoding is presented in \Cref{fig:qre-input}. First, we discuss details about the label lengths, the ancillary randomness and quantum random variables used in the encoding.

\paragraph{Label Lengths.} Our quantum randomized encoding contains a classical encoding of a correction function for every gate $g$ in the circuit $F$. To keep track of the lengths of the labels and randomness required, for every wire $w \in \cal{W}$ we let $\kappa_w$ denote the label lengths for encoding each wire $w$, and for every gate $g \in \cal{G}$ we let $c_g$ denote the size of the $\CRE$ encoding of the correction function associated with $g$.

We recursively specify the label lengths $\kappa_w$ and encoding sizes $c_g$, starting from the end of the circuit. 
For all output wires $w \in \cal{O}$, define $\kappa_w = 1$. Then, for every gate $g \in \cal{G}$ whose output wires have label lengths defined so far, let $\kappa_g$ denote the vector $\kappa_g = (\kappa_{w'} : w' \in \woutwire(g))$, which has the label lengths for all the output wires of $g$. Recall from \Cref{sec:corfunc} that classical circuits with topology $\topol_{\kappa_g}$ have depth $d_\corr$ (which is a universal constant) and size $\sigma_g = O(\sum_j \kappa_j^2)$ where $\kappa_j$ is the label length of the $j$-th output wire. Thus the $\CRE$ encoding of a correction function $f_{g,A,\ell,s,t}$ has complexity $c_g = c(d_\corr,\sigma_g)$ and label lengths $\kappa_g = \kappa(d_\corr,\sigma_g)$, where $c(\cdot,\cdot)$ and $\kappa(\cdot,\cdot)$ are given in the statement of Lemma~\ref{lem:main}. For every incoming wire $w \in \winwire(g)$, let $\kappa_w = \kappa_g$. We then recurse on the next layer of gates.

The following observations will be useful for our analysis down the line. First, we note that the encoding complexities $c_g$ and the label lengths $\kappa_w$ only depend on the topology $\scr{T}$ of $F$, and not on the specific unitaries of the gates. Second, as a recursive argument shows, if $d$ is the depth of the quantum circuit $F$ then it holds that $c_g \le c_d$ and $\kappa_g \le \kappa_d$, where $\kappa_d$, $c_d$ are given in the statement of Lemma~\ref{lem:main}.

\paragraph{Quantum Registers.} We now specify the registers in the encoding. 
\begin{itemize}
	\item $\reg{y} = (\reg{y}_1,\ldots,\reg{y}_n)$ is an $n$-qubit register that initially stores the input state $\qrv{y}$.
	\item $\reg{e}^w = (\reg{e}_\win^w,\reg{e}^w_\wout)$ is a two-qubit register for every wire $w \in \cal{W}$, initialized with $\ket{\epr}$. 
	\item $\reg{a}^w = (\reg{z}^w,\reg{x}^w,\reg{b}^w)$ is a $2\kappa_w + (\kappa_w+1)^2$ qubit register for every wire $w \in \cal{W}$, initialized with zeroes. Some of these qubits are used to store teleportation labels, and others are used as part of the correction gadget (see \Cref{sec:corr-gadget} for details). 
	\item $\reg{c}^g$ is a $c_g$-qubit register for every gate $g \in \cal{G}$, initialized with zeroes. These are used to store the description of the $\CRE$ encoding of the correction functions (see \Cref{sec:gate-encoding} for details). 
	\item $\reg{d}^w$ is a $4\kappa_w$-qubit register for every non-traced-out wire $w \in \cal{O} \setminus \cal{T}$, initialized with zeroes. These qubits store the ``label dictionary'' for the output wires (i.e. they store all possible labels for the output wires), so the evaluator can decode the output.
\end{itemize}
Each part of the encoding $\hF_\off, \hF_1,\ldots,\hF_n$ access disjoint subsets of registers.

\paragraph{Classical Randomness.}  We now specify the classical randomness $J$ that is used by the encoding. It consists of the following random strings:
\begin{itemize}
	\item For every gate $g \in \cal{G}$, the random string $r^g$ is a uniformly random string of length $c_g$. The randomness is used for the $\CRE$ encodings of the correction functions. 
	\item For every gate $g \in \cal{G}$, the random string $A^g$ is a sequence $(A^{w})_{w \in \woutwire(g)}$ where for each output wire $w$ of $g$, the string $A^w$ is a uniformly random element of the randomization group $\scr{R}_{\kappa_w}$. These randomizers are used in the encoding of each gate of the circuit.
	\item For every wire $w \in \cal{W}$, $s^w$ is a random pair of bits $(s_z^w,s_x^w) \in \{0,1\}^2$, and $t^w$ is a random pair of bits $(t_z^w,t_x^w) \in \{0,1\}^2$. These values are used to randomize the teleportation measurements (see \Cref{sec:corr-gadget}).
	\item For every output wire $w \in \cal{O}$, $o^w$ is a random pair of bits $(o_z^w,o_x^w)$. These values are used as labels for the output wires.
\end{itemize}
 Because the randomness is classical, it can be copied and thus each part of the encoding $\hF_\off, \hF_1,\ldots,\hF_n$ has access to the entire randomness $J$.

\vspace{10pt}
\IncMargin{1em}
\begin{algorithm}[H]
\DontPrintSemicolon
\tcp{Set up the labels} 
\For{$w \in \cal{W}$}{ \label{item:label-setup}
	\If{$w \notin \cal{O}$}{ 
		Let $g$, $j$ be such that wire $w$ is the $j$-th input wire of gate $g$. \;
		\BlankLine
	Compute the labels $\ell^w = (\ell^w_{b,a})_{b \in \xz,a \in \binset}$ where $\ell^w_{b,a} = \lab_{\kappa_g}(j, b, a; r^g)$ and $\lab_{\kappa_g}$ is the label function corresponding to the $\CRE$ encoding of a circuit with topology $\topol_{\kappa_g}$. \; 
	}
	\ElseIf{$w \in \cal{O}$}{
		Let $\ell^w_{b,0} = o^w_b$ and $\ell^w_{b,1} = o^w_b \oplus 1$ for $b \in \xz$. \;
	}
}
\BlankLine
\tcp{Store the ``label dictionary'' for output wires (ones not traced out)}
\For{$w \in \cal{O} \setminus \cal{T}$}{
	Write the labels $(\ell^w_{z,0}, \ell^w_{z,1}, \ell^w_{x,0}, \ell^w_{x,1})$ into the register $\reg{d}^w$.
}
\BlankLine
\tcp{Encode each gate}
\For{$g \in \cal{G}$}{ \label{item:gate-encoding}
Let $\reg{u}^g = (\reg{e}^v_\wout)_{v \in \winwire(g)}$, $\reg{v}^g = (\reg{e}^w_\win)_{w \in \woutwire(g)}$, and $\reg{a}^g = (\reg{a}^w)_{w \in \woutwire(g)}$. \;
\BlankLine
Let $\ell^g = (\ell^w)_{w \in \woutwire(g)}$, 
$s^g = (s^w)_{w \in \woutwire(g)}$, and $t^g = (t^w)_{w \in \woutwire(g)}$.  \;
\BlankLine
Apply $\ggate_{g,r^g,A^g,\ell^g,s^g,t^g}$ to registers $(\reg{u}^{\, g},\reg{a}^{\, g},\reg{v}^{\, g},\reg{c}^{\, g})$.  \;
}

\caption{The encoding of the offline part $\hat{F}_\off$ of $\hat{F}$.}\label{fig:qre-offline}
\end{algorithm}\DecMargin{1em}
\vspace{10pt}

\vspace{10pt}
\IncMargin{1em}
\begin{algorithm}[H]
\DontPrintSemicolon
	\tcp{Set up the labels}
	Let $w$ denote the $i$-th input wire of $F$, and let $\reg{y}_i$ denote the $i$-th input qubit, and let $g,j$ be such that wire $w$ is the $j$-th input wire of gate $g$. \;
	\BlankLine
	Compute the labels $\ell^w = (\ell^w_{b,a})_{b \in \xz,a \in \binset}$ where $\ell^w_{b,a} = \lab_{\kappa_g}(j, b, a; r^g)$. 
		
	\BlankLine
	\tcp{Teleport the input qubit into the encoding}
	Apply $\sf{TP}_{\ell^w,s^w,t^w}$ to registers $(\reg{y}_i, \reg{z}^w, \reg{x}^w, \reg{e}^w_{\win})$. 
\caption{The encoding $\hat{F}_i(\qrv{y}_i)$ of the $i$-th input qubit $\qrv{y}_i$.}\label{fig:qre-input}
\end{algorithm}\DecMargin{1em}
\vspace{10pt}

\paragraph{Complexity and Locality of the Encoding.} We now argue that $\QGC$ has the claimed complexity properties. First, the QRE is decomposable. The offline part $\hat{F}_\off$ only depends on the circuit $F$, the classical randomness $J = \Big ( r, A, s,t ,o \Big)$, and the EPR pairs $(\qrv{e}^w)$ (except for the qubits $\qrv{e}^w_\win$ for the input wires $w$). The online part $\hat{\qrv{y}}_i = \hat{F}_i(\qrv{y}_i)$ only depends on the classical randomness $J$, the $i$-th input qubit $\qrv{y}_i$ and the qubit $\qrv{e}^w_\win$ for the $i$-th input wire $w$. Thus, the offline encoding and input encoding act on disjoint qubits of the EPR pairs $(\qrv{e}^w)$.

The offline encoding $\hat{F}_\off$ can be computed using a $\qnczf$ circuit. The label setup procedure can be parallelized over all wires, and its complexity is inherited from the complexity of computing labels of $\CRE$ (which we are assuming can be done using an $\ncz$ circuit). Since the complexity of computing the label function $\lab_{\kappa_g}$ is at most $c_d$ where $d$ is the depth of $F$, then the time complexity of the label setup is at most $O(c_d \cdot |\cal{W}|)$. 

The gate encoding can be parallelized over all gates, and as discussed in \Cref{sec:gate-encoding-unitary}, the complexity of encoding a single gate is $O(c_d)$, which includes the complexity of computing the classical encoding of the correction functions. Therefore encoding all gates has time complexity $O(c_d \cdot |\cal{G}|)$. The gate encoding can be implemented by a $\qnczf$ circuit.

Similarly, the encoding of the input qubits can be done in parallel. The complexity of the input encoding comes from setting up the labels and applying the teleportation gadget. This is $O(\kappa_d)$ complexity for encoding each input qubit, and can be done using a $\qnczf$ circuit.

\paragraph{The Case of Classical Inputs.} We observe here that when the input $\qrv{y}$ is classical, the input encoding process can also be taken to be entirely classical. Although applying the teleportation gadget $\sf{TP}$ on the input bits appears to be a fully quantum operation since it involves both a bit of the input $\qrv{y}$ as well as half of an EPR pair, we note that the EPR pair can in this case be ``pre-measured'' in the standard (computational) basis (so that it collapses to a pair of correlated bits), and then one can apply a ``classical'' teleportation gadget:

\begin{figure}[H]
  \centering
  $
        \Qcircuit @C=0.6em @R=0.8em {
          \reg{u}  & & \qw & \ctrl{3} & \qw  & \qw & \qw & \gate{X^{s_x}} & \qw \\
          \reg{z} & & {/} \qw  & \qw & \qw & \gate{X(\ell_{z,0})} & \qw & \qw & \qw  \\
          \reg{x}  & & {/} \qw & \qw & \qw     & \gate{X(\ell_{x,0})} & \gate{X(\ell_{x,0} \oplus \ell_{x,1})} & \qw & \qw \\
          \reg{v} & & \qw & \targ & \qw  & \qw & \ctrl{-1} & \gate{X^{t_x}} & \qw \\          
        }
        $
    \caption{``Classical'' teleportation gadget}
    \label{fig:classical-teleportation-gadget}
\end{figure}
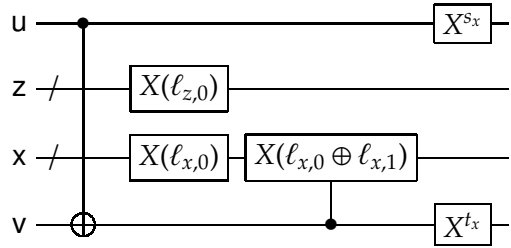
Note that this circuit simply consists of CNOTs and bit flips, which are classically implementable. If the input state at the beginning of the classical teleportation gagdet is $\ubket{y,0,0,r,r}{\reg{u}\reg{z}\reg{x}\reg{v}\reg{u}'}$, then the result of applying the gadget is $\ket{y \oplus s_x, \ell_{z,0}, \ell_{x,e}, e \oplus t_x, e \oplus y}$ where $e = r \oplus y$. Thus the effect of this circuit is to transfer the bit $y$ to register $\reg{u}'$, and then encrypting it using a random bit $e$ that is encoded in the label $\ell_{x,e}$.

\subsection{Circuit Evaluation}
\label{sec:ckt-eval}

We now describe the decoding procedure $\Dec$, which takes a quantum randomized encoding $\hat{F}(\qrv{y})$ and computes $F(\qrv{y})$. The decoding procedure depends on the topology $\scr{T}$ of the circuit $F$, but does not depend on the specific gates themselves. The decoder picks an evaluation order $\pi$ based on the topology $\scr{T}$, and sequentially evaluates each gate $g$ of the circuit $F$. Let $\cal{B}$ denote the set of ``gate placeholders'' in the topology $\scr{T}$. Let $\cal{T} \subseteq \cal{O}$ denote the set of output qubits to be traced out.

\vspace{10pt}
\IncMargin{1em}
\begin{algorithm}[H]
\DontPrintSemicolon
Compute an evaluation order $\pi$ for the topology $\scr{T}$. \;
\tcp{Evaluate the gates according to $\pi$}
\For{$g \in \cal{B}$ ordered according to $\pi$} {
	Apply the unitary $\evg(g)$. \;
}
\tcp{Decode the output}
\For{$w \in \cal{O} \setminus \cal{T}$} {
Given the set of labels $\ell^w = (\ell^w_{z,0},\ell^w_{z,1},\ell^w_{x,0},\ell^w_{x,1})$ in the register $\reg{d}^w$, coherently decode the labels $(z^w,x^w)$ in registers $(\reg{z}^w,\reg{x}^w)$ to bits $(d,e) \in \{0,1\}^2$, and apply $Z^d X^e$ to $\reg{e}^w_\wout$. 

}
\caption{The decoding procedure $\Dec$ of $\QGC$.}\label{fig:ckt-eval}
\end{algorithm}\DecMargin{1em}
\vspace{10pt}

For a $p$-qubit gate $g$, the gate evaluation unitary $\evg(g)$ is defined as follows. Let $(v_1,\ldots,v_p) = \winwire(g)$ and $(w_1,\ldots,w_{p}) = \woutwire(g)$ denote the input and output wires of $g$, respectively. Let
\begin{enumerate}
	\item $(\reg{a}^{v_1},\ldots,\reg{a}^{v_p})$ denote the ancilla registers for the input wires. Each register $\reg{a}^{v_i}$ is composed of subregisters $(\reg{z}^{v_i},\reg{x}^{v_i},\reg{b}^{v_i})$. 
	\item $(\reg{a}^{w_1},\ldots,\reg{a}^{w_{p}})$ denote the ancilla registers for the output wires. 
	\item $(\reg{e}^{v_1}_{\wout},\ldots,\reg{e}^{v_{p}}_{\wout})$ denote the input wire registers for $g$. 
	\item $(\reg{e}^{w_1}_{\win},\ldots,\reg{e}^{w_{p}}_{\win})$ denote the output wire registers for $g$. 
	\item $\reg{c}^g$ denote the register storing the (offline portion of the) classical randomized encoding of the function $f(z_1,x_1,\ldots,z_p,x_p)$ that computes a tuple of correction gadgets $(\gadget{Corr}_1,\ldots,\gadget{Corr}_p)$. 
\end{enumerate}
We note that it may seem strange that the \emph{input} registers are denoted by $e^{v_j}_\wout$, but this is because $e^{v_j}_\wout$ represents the output end of the EPR pair from a \emph{previous} gate teleportation. 

\vspace{10pt}
\IncMargin{1em}
\begin{algorithm}[H]
\DontPrintSemicolon
\tcp{Compute the correction circuits} 
Controlled on the values $(z^{v_1},x^{v_1},\ldots,z^{v_p},x^{v_p})$ of registers $(\regz^{v_1},\regx^{v_1},\ldots,\regz^{v_p},\regx^{v_p})$, and controlled on the value $\hat{f}_\off$ in register $\reg{c}^g$, coherently run $\CDec_{\topol_g}(\hat{f}_\off,z^{v_1},x^{v_1},\ldots,z^{v_p},x^{v_p})$ to obtain descriptions of correction circuits $(\gadget{Corr}_{1},\gadget{Corr}_2,\ldots,\gadget{Corr}_p)$. \;
\BlankLine
\tcp{Apply the correction circuits}
\For{$j \in [p]$}{
	Apply $\sf{Corr}_j$ to registers $(\reg{e}^{v_j}_\wout,\reg{a}^{w_j},\reg{e}^{w_j}_\win)$. \;
	Apply $\Lambda_3$ to registers $(\reg{e}^{v_j}_\wout,\reg{a}^{w_j},\reg{e}^{w_j}_\win)$. \;
}
\caption{The gate evaluation operation, $\evg(g)$, for a $p$-qubit gate $g$.}\label{fig:gate-eval}
\end{algorithm}\DecMargin{1em}
\vspace{10pt}

In the gate evaluation, the map $\Lambda_3$ is the $\qnczf$ circuit given by \Cref{lem:commute-correction}. Furthermore, the CRE decoding procedure $\CDec$ only depends on the topology $\topol_{g}$ of the correction function $f$, which only depends on the label lengths $(\kappa_{w_1},\ldots,\kappa_{w_p})$ of the output wires. The complexity of gate evaluation is dominated by the complexity of running the decoding procedure $\CDec$, which is polynomial in the size $c_g$ of $\hat{f}_\off$.
Thus the complexity of the decoding procedure $\Dec$ is $O(\sum_{g \in \cal{B}} \poly(c_g) + n)$, which is polynomial in the complexity of the encoding procedure.

\subsection{The Simulator}
\label{sec:simulator}

We now present the simulator $\Sim$ for $\QGC$. It depends on the topology $\scr{T}$ of the circuit being simulated, and takes as input a register $\reg{s}$ that is supposed to store the output of a quantum operation $F$ that has topology $\scr{T}$. We assume that $\scr{T}$ has $n$ input wires, and that the input register $\reg{s}$ is $|\cal{O} \setminus \cal{T}|$ qubits wide (i.e. the number of output qubits of $\scr{T}$ that aren't traced out).

Intuitively the simulator computes the encoding of the identity circuit $E$ with input $\reg{s}$ padded with zeroes (so the output should be the quantum state stored in register $\reg{s}$). However the topology $\scr{T}$ could permute the ordering of qubits, so placing identity gates for all the placeholder gates of $\scr{T}$ may still result in a nontrivial operation on the input qubits. Furthermore, the topology $\scr{T}$ may trace out some qubits. Thus the simulator pads the input with some zeroes, shuffles the qubits according to the inverse of the permutation effected by the topology $\scr{T}$, and then computes the randomized encoding of $E$ and the shuffled input.

\vspace{10pt}
\IncMargin{1em}
\begin{algorithm}
\DontPrintSemicolon
Let $E$ denote the general quantum circuit with topology $\scr{T}$ and gate set $\cal{G}$ consisting only of identity gates. \;
Let $n = |\cal{I}| = |\cal{O}|$ denote the number of input and output wires of topology $\scr{T}$. \;
Compute the bijection $\xi: \cal{I} \to \cal{O}$ that is the result of the (unitary part of) $E$. \;
Let $P_\xi$ denote the unitary that swaps $n$ qubits according to the bijection $\xi$. \;
Let $\reg{y}$ denote an $n$-qubit register consisting of $\reg{s}$, padded by zeroes. \;
Permute the qubits of $\reg{y}$ according to the permutation $P^{-1}_\xi$. \;
Compute the encoding $\hat{E}(\qrv{y})$ of circuit $E$ and input in register $\reg{y}$. \;
\caption{The simulator $\Sim$ for $\QGC$.}\label{fig:sim}
\end{algorithm}\DecMargin{1em}
\vspace{10pt}

Clearly the complexity of the simulation procedure is polynomial in the complexity of the encoding procedure.

\section{Correctness and Privacy Analysis}
\label{sec:analysis}

We now analyze the correctness and privacy of the $\QGC$ scheme. We argue that applying the decoding procedure $\Dec$ to a QRE $\hF(\qrv{y})$ of a circuit $F$ and input $\qrv{y}$ yields the state $F(\qrv{y}) \otimes \rho$, where $\rho$ is indistinguishable from a density matrix that only depends on the topology $\scr{T}$ of $F$, and nothing else about $F$. This clearly implies the correctness of the decoding procedure, but also shows the privacy of the $\QGC$ scheme: let $E$ denote the ``empty'' circuit with the same topology $\scr{T}$. As discussed in \Cref{sec:simulator}, the empty circuit $E$ may effect a permutation $P$ on the qubits of $F(\qrv{y})$, along with additional zero qubits. Applying the decoding procedure $\Dec$ to QRE $\hat{E}\left ( \, P \, (F(\qrv{y}) \, ) \, \right)$ yields a state that is indistinguishable from $F(\qrv{y}) \otimes \rho$. Since the decoding procedure $\Dec$ is unitary, this implies that 
\[
	\hat{E}\left ( \, P \, (F(\qrv{y}) \, ) \, \right) \approx \Dec^{-1} \Big( F(\qrv{y}) \otimes \rho \Big) \approx \hat{F}(\qrv{y}).
\]
where ``$\approx$'' denotes indistinguishability either in the computational or information-theoretic sense, depending on the security properties of the classical randomized encoding scheme $\CRE$ used. This shows that the simulator $\Sim$, on input $F(\qrv{y})$, produces a state indistinguishable from $\hF(\qrv{y})$. 

\paragraph{Some Notation.}  Let $\QGC$ denote the quantum randomized encoding scheme described in \Cref{sec:construction}, where we use a classical randomized encoding scheme $\CRE$ that has perfect correctness and has polynomial-time encoding, decoding and simulation procedures. We assume that $\CRE$ has $(t,\epsilon)$-privacy with respect to quantum adversaries. In the setting of computational security, $t$ and $\epsilon$ are implicitly functions of a security parameter $\secp$.  
For a topology $\topol$ of a classical circuit, let $\CDec_{\topol}$ and $\CSim_{\topol}$ denote the decoder and simulator for $\CRE$ for classical circuits with topology $\topol$, respfectively. 

Fix an $n$-qubit input $\qrv{y}$, quantum side information $\qrv{q}$ (which may be entangled with $\qrv{y}$), and a general quantum circuit $F = (\scr{T},\cal{G})$ with $m = \poly(n)$ gates. Let $\cal{B}$ denote the set of ``gate placeholders'' in the topology $\scr{T}$. As described in \Cref{sec:ckt-eval}, the circuit evaluation only depends on the topology $\scr{T}$; it sequentially evaluates each gate in $\cal{B}$ according to some evaluation order $\pi$, and then decodes the output wires using the labels in the set $\cal{O} \setminus \cal{T}$. Let $g_1,\ldots,g_m \in \cal{B}$ denote the placeholder gates of topology $\scr{T}$ ordered according to $\pi$, and let $U_1,\ldots,U_m$ denote the corresponding unitaries in circuit $F$. 
Let $\cal{W}$ denote the set of wires in the topology $\scr{T}$, with $\cal{I},\cal{O}$ denoting the input/output wires respectively.

Let $F_{\leq i}$ denote the part of circuit $F$ up to (and including) gate $g_i$. (We define $F_{\leq 0}$ to denote the identity circuit with no gates). 
Thus, $F_{\leq i}$ represents the unitary operation $U_{i} \cdots U_1$.

Similarly, define $F_{> i}$ to denote the part of circuit $F$ that starts after $g_i$. (We define $F_{> 0}$ to denote $F$). As a quantum operation it first applies the unitaries $U_{i+1}, \ldots, U_m$, and then traces out the qubits specified by set $\cal{T} \subseteq \cal{O}$. As we will be considering randomized encodings of the partial circuit $F_{ > i}$, we define the randomness used by the encodings. As described in \Cref{sec:circuit-encoding}, the randomness used to encode $F = F_{> 0}$ is the sequence $J = (r,A,s,t,o)$. The randomness used to encode $F_{ > i}$ is the sequence $J_{>i} = (r_{> i},A_{> i},s_{> i},t_{> i},o)$, where 
\begin{itemize}
	\item $r_{>i} = (r^g : g \text{ comes after $g_i$})$
	\item $A_{> i} = (A^g : g \text{ comes after $g_i$})$
	\item $s_{> i} = (s^w : \text{ wire $w$ comes after gates $g_1,\ldots,g_i$})$
	\item $t_{> i} = (t^w : \text{ wire $w$ comes after gates $g_1,\ldots,g_i$})$
	\item $o = (o^w)_{w \in \cal{O}}$ is the same as before.
\end{itemize}
We define $J_{> 0} = J$. We define $J_{\leq i} = J \setminus J_{> i}$. This is all the randomness that is ``used up'' in evaluating the first $i$ gates.

Given a classical value $c$, we write $\puretomixed{c}$ to denote the density matrix $\ketbra{c}{c}$.

\subsection{Analysis of the Decoding Procedure}

The decoding procedure $\Dec$ evaluates each of the gates of $F$ in sequence (according to some evaluation order), and then traces out a subset of qubits. The following Lemma gives a characterization of the result of each gate evaluation:

\begin{lemma}
\label{lem:single-eval}
	Let $i \in\{0,1,\ldots,m-1\}$. Let $\evg_{i+1}$ denote the gate evaluation procedure from \Cref{sec:ckt-eval} corresponding to the gate $g_{i+1}$ of topology $\scr{T}$. Let $\qrv{q}$ denote quantum side information that is possibly entangled with $\qrv{y}$, but uncorrelated with the classical randomness used by the encoding $\hF(\qrv{y})$. Then 
	\begin{equation}
	\label{eq:sec-0}
		\E_{J_{> i}} \, \Big( \evg_{i+1} \, ( \,\hF_{> i} \, ( \, F_{\leq i}(\qrv{y} \, ; \, J_{> i}) \, ) , \qrv{q} \Big) = \E_{J_{> i+1}} \, \Big( \hF_{> i+1} ( \, F_{\leq i+1}(\qrv{y} \, ; \, J_{> i+1}) \,), \qrv{q} \Big ) \otimes \qrv{t}
	\end{equation}
	where for $j \in \{i,i+1\}$,
	\begin{itemize}
		\item $\hF_{> j}(\hF_{\leq j}(\qrv{y} \, ; \, J_{> j}))$ denotes the quantum randomized encoding of the circuit $F_{> j}$ and input $\hF_{\leq j}(\qrv{y})$, where the randomness used for the encoding is $J_{ > j}$.
		\item The density matrix $F_{\leq j}(\qrv{y})$ is stored in registers
\[
	( \reg{e}^v_\wout : \text{$v$ is the predecessor wire to some $w \in \winwire(g_{j+1})$} )
\]
where $v$ is a predecessor wire to $w$ if $v,w$ are the $k$-th input and output wires respectively of the same gate $g$, for some $k$. If $w$ is the $k$-th input wire of topology $\scr{T}$, then there is no predecessor wire, but in that case we define $\reg{e}^v_\wout$ to be the register $\reg{y}_k$.
	\end{itemize}
Furthermore, the density matrix $\qrv{t}$ satisfies the following:
\begin{enumerate}
	\item It is on registers $( \reg{e}^v_\wout : \text{$v$ is the predecessor wire to $w \in \winwire(g_{i+1})$} )$ and $(\reg{a}^w, \reg{e}^w_\win)_{w \in \winwire(g_{i+1})}$.
	\item $\qrv{t}$ is $(t,\epsilon)$-indistinguishable from a density matrix $\qrv{\tilde{t}}$ that only depends on topology $\scr{T}$.
	\item $\qrv{t}$ is unentangled with $\hF_{> i+1} ( \, F_{\leq i+1}(\qrv{y}) \,)$ and $\qrv{q}$.
	\item The state  $\qrv{t}$ is independent of the randomness $J_{ > i+1}$. 
\end{enumerate}
\end{lemma}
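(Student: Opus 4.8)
The plan is to prove \Cref{lem:single-eval} by direct inspection: expand the left-hand side, isolate the parts of the encoding $\hF_{>i}$ and of the evaluation map $\evg_{i+1}$ that involve the gate $g_{i+1}$, and show that---via the circuit identities of \Cref{lem:tp-gadget,lem:commute-correction} and the perfect correctness of $\CRE$---the encoder's and decoder's operations telescope into exactly one ``correct-then-teleport'' step. This reproduces the structure of $\hF_{>i+1}(F_{\le i+1}(\qrv y))$ on the appropriate registers and leaves a residual register block $\qrv t$; the expectation over $J_{>i}$ is then used only to endow $\qrv t$ with the four stated properties.

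Write $(v_1,\dots,v_p)=\winwire(g_{i+1})$, $(w_1,\dots,w_p)=\woutwire(g_{i+1})$, and let $v_j'$ be the predecessor wire of $v_j$. In $\hF_{>i}(F_{\le i}(\qrv y;J_{>i}))$ the input encoding of $F_{>i}$ has applied $\sf{TP}_{\ell^{v_j},s^{v_j},t^{v_j}}$ to $(\reg e^{v_j'}_\wout,\reg z^{v_j},\reg x^{v_j},\reg e^{v_j}_\win)$, so by \Cref{lem:tp-gadget} the wire-$v_j$ qubit of $F_{\le i}(\qrv y)$ sits in $\reg e^{v_j}_\wout$ under a Pauli $X^{x^{v_j}}Z^{z^{v_j}}$ with $\reg z^{v_j},\reg x^{v_j}$ holding teleportation labels and junk in $\reg e^{v_j'}_\wout,\reg e^{v_j}_\win$; the offline encoding has applied $\ggate_{g_{i+1},\dots}$, i.e.\ $U_{g_{i+1}}$ to $(\reg e^{v_j}_\wout)_j$ then $\Lambda_1(\ell^{w_j})$ and $A^{w_j}$ to $(\reg e^{v_j}_\wout,\reg a^{w_j},\reg e^{w_j}_\win)$, and has written $\hf_{g_{i+1},\dots,\off}(r^{g_{i+1}})$ into $\reg c^{g_{i+1}}$; everything associated with $g_{i+2},\dots,g_m$ and the output-label dictionary is untouched. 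Now apply $\evg_{i+1}$: its first step coherently runs $\CDec_{\topol_{g_{i+1}}}$ on $\reg c^{g_{i+1}}$ and $(\reg z^{v_j},\reg x^{v_j})_j$, which by perfect correctness of $\CRE$ yields exactly the canonical descriptions $\gadget{Corr}_j=\gadget{Corr}_{A^{w_j},R_j,\ell^{w_j},s^{w_j},t^{w_j}}$, where the PX gates $R_1,\dots,R_p$ satisfy $U_{g_{i+1}}(Z^{z^{v_1}}X^{x^{v_1}}\otimes\cdots)=(R_1^\dagger\otimes\cdots)U_{g_{i+1}}$ and are thus tailored to cancel the Pauli corruption (including its phase convention); its second step applies $\sf{Corr}_j=\Lambda_2(R_j,\ell^{w_j},s^{w_j},t^{w_j})(A^{w_j})^\dagger$ then $\Lambda_3$ to $(\reg e^{v_j}_\wout,\reg a^{w_j},\reg e^{w_j}_\win)$. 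The $(A^{w_j})^\dagger$ cancels the encoder's $A^{w_j}$, and by \Cref{lem:commute-correction} the remaining $\Lambda_3\Lambda_2(R_j,\ell^{w_j},s^{w_j},t^{w_j})\Lambda_1(\ell^{w_j})$ equals $\sf{TP}_{\ell^{w_j},s^{w_j},t^{w_j}}R_j$ with $\reg b^{w_j}$ returned to $\ket 0$. Hence on $(\reg e^{v_j}_\wout,\reg a^{w_j},\reg e^{w_j}_\win)$ the net map is $\bigotimes_j\sf{TP}_{\ell^{w_j},s^{w_j},t^{w_j}}\cdot\bigotimes_jR_j\cdot U_{g_{i+1}}$ on the Pauli-corrupted inputs, which by the defining property of the $R_j$'s equals $\bigotimes_j\sf{TP}_{\ell^{w_j},s^{w_j},t^{w_j}}$ applied to $U_{g_{i+1}}$ of the clean wire states, i.e.\ to $F_{\le i+1}(\qrv y)$ restricted to $g_{i+1}$'s output wires, now in $(\reg e^{v_j}_\wout)_j$. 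Comparing with $\hF_{>i+1}$, where $w_j$ are input wires with predecessors $v_j$ and the input encoding applies precisely $\sf{TP}_{\ell^{w_j},s^{w_j},t^{w_j}}$ to $(\reg e^{v_j}_\wout,\reg z^{w_j},\reg x^{w_j},\reg e^{w_j}_\win)$ before the offline part acts on $\reg e^{w_j}_\wout$, the full state equals $(\hF_{>i+1}(F_{\le i+1}(\qrv y;J_{>i+1})),\qrv q)$ tensored with the leftover block $\qrv t$ on $(\reg e^{v_j'}_\wout)_j$, $(\reg a^{v_j},\reg e^{v_j}_\win)_j$ and $\reg c^{g_{i+1}}$ (and the $\gadget{Corr}_j$ ancilla, uncomputable by running $\CDec$ backward). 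This holds for each fixed value of the randomness, giving \eqref{eq:sec-0} with a randomness-dependent $\qrv t$.

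It remains to take $\E_{J_{>i}}$ and read off the properties of $\qrv t$. Since $J_{>i+1}$ enters only through the $\hF_{>i+1}$ factor it is averaged there, leaving $\qrv t=\E_{r^{g_{i+1}},(A^{w_j})_j,(s^{v_j},t^{v_j})_j}[\cdots]$, which is independent of $J_{>i+1}$ (property~4). Averaging over $s^{v_j},t^{v_j}$, the hidden $s^{v_j}_z,t^{v_j}_z$ bits attach phases $(-1)^{s^{v_j}_z(\cdot)},(-1)^{t^{v_j}_z(\cdot)}$ to $\reg e^{v_j'}_\wout,\reg e^{v_j}_\win$, so those registers decohere in the computational basis and, with the $s^{v_j}_x,t^{v_j}_x$ one-time pads, become maximally mixed and exactly uncorrelated with everything; this is also what turns the coherent outcomes $(z^{v_j},x^{v_j})$ into an honest uniform classical string, so that the contents of $(\reg c^{g_{i+1}},(\reg z^{v_j},\reg x^{v_j})_j)$ form a genuine $\CRE$-encoding $\hf_{g_{i+1},\dots}(z^{v_1},x^{v_1},\dots;r^{g_{i+1}})$ of the correction function at a uniform input---hence exactly unentangled from the $\hF_{>i+1}$ factor and from $\qrv q$ (property~3, using that $(z^{v_j},x^{v_j})$ and $r^{g_{i+1}}$ occur nowhere else once the $R_j$'s have cancelled the Pauli). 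For property~2, the $(t,\epsilon)$-privacy of $\CRE$ gives that this encoding is $(t,\epsilon)$-indistinguishable from $\CSim_{\topol_{g_{i+1}}}$ applied to the decoded value $(\gadget{Corr}_j)_j$, and since $\sf{Corr}_{A^{w_j},R_j,\ell^{w_j},s^{w_j},t^{w_j}}$ is uniformly distributed over all $\kappa_{w_j}$-correction gadgets when $A^{w_j}$ is uniform in $\scr R_{\kappa_{w_j}}$, this simulated value depends only on the lengths $(\kappa_{w_j})_j$, i.e.\ only on the topology $\scr T$; together with the (topology-only) maximally mixed junk registers this exhibits the required $\qrv{\tilde t}$.

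The step I expect to be the main obstacle is the last paragraph: making rigorous the claim that averaging over the hidden randomization bits $s^{v_j}_z,t^{v_j}_z$ legitimately ``classicalizes'' the coherent teleportation outcomes, so that both invoking perfect correctness of $\CDec$ and the \emph{exact} tensor-factorization of $\qrv t$ are sound even though $\Dec$ is unitary and could a priori keep the outcomes in superposition---and, relatedly, keeping the register accounting honest, in particular tracking which wires lie in $F_{>i}$ versus $F_{>i+1}$, checking that predecessor registers line up between $\hF_{>i}$ and $\hF_{>i+1}$, and that $\reg b^{w_j}$ is correctly reset by $\Lambda_3\Lambda_2\Lambda_1$. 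The two circuit identities and the $\CRE$ guarantees do the substantive work; the care is in the bookkeeping and in handling coherence.
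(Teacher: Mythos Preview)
Your proposal is correct and follows essentially the same approach as the paper's proof in \Cref{sec:single-gate-eval}: isolate the $\sf{TP}$ and $\ggate$ pieces associated with $g_{i+1}$, use perfect correctness of $\CRE$ so that $\evg_{i+1}$ recovers the intended $\gadget{Corr}_j$, cancel $A^{w_j}$ against $(A^{w_j})^\dagger$, invoke \Cref{lem:commute-correction} to collapse $\Lambda_3\Lambda_2\Lambda_1$ into $\sf{TP}\cdot R$, and let the PX corrections annihilate the teleportation Paulis. The only stylistic difference is the order of averaging: the paper takes the expectation over the randomization bits $(s^{v_j},t^{v_j})$ \emph{first} (packaged as \Cref{lem:tp-twirl}), which immediately classicalizes the teleportation outcomes and lets the rest of the argument proceed by conditioning on fixed $(d_v,e_v)$; you instead carry the coherent superposition through the unitary $\evg_{i+1}$ and defer the averaging to the end. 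Both are valid---your observation that the post-$\evg_{i+1}$ state on the ``output'' registers is branch-independent is exactly what makes the tensor factorization hold even before averaging---and you correctly identify this coherence bookkeeping as the delicate step. Your inclusion of $\reg{c}^{g_{i+1}}$ (and the $\gadget{Corr}_j$ scratch space) in $\qrv t$ is in fact more precise than the lemma statement's register list, and matches what the paper's own proof actually produces in its equation for $\qrv t$.
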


We first show how \Cref{lem:single-eval} implies the correctness and privacy of $\QGC$. Using \Cref{lem:single-eval} repeatedly, we get
\begin{equation}
\label{eq:sec-1}
\E_J \, \Big( \evg_m \cdot \evg_{m-1} \cdots \evg_1 ( \,\hF(\qrv{y}) \, ; \, J ) , \qrv{q} \Big)
\end{equation}
is equal to
\begin{equation}
\label{eq:sec-2}
	\E_{J_{> m}} \, \Big( \hF_{> m} ( \, F_{\leq m}(\qrv{y} \, ; \, J_{> m}) \,), \qrv{q} \Big ) \otimes \qrv{t}_1 \otimes \cdots \otimes \qrv{t}_m
\end{equation}
where each $\qrv{t}_i$ satisfies the conclusions of the statement of the Lemma. Note that $F_{\leq m}$ is the unitary part of the circuit $F$ (without the tracing out of the output wires $\cal{T}$), and $\hat{F}_{> m}$ is the quantum randomized encoding of the partial trace operation $\Tr_{\cal{T}}(\cdot)$. Let $\qrv{y}^{(m)} = F_{\leq m}(\qrv{y})$ denote the state of the circuit before tracing out. Since the circuit $F_{> m}$ has no gates, the randomized encoding $\hF_{> m}( \qrv{y}^{(m)} )$ is the state $\Big(\hat{\qrv{y}}^{(m)}_1,\ldots,\hat{\qrv{y}}^{(m)}_n, \, (\qrv{d}^w)_{w \in \cal{O} \setminus \cal{T}} \, \Big)$ where $\hat{\qrv{y}}^{(m)}_j$ is the encoding of the $j$-th qubit of $\qrv{y}^{(m)}$ and $\qrv{d}^w$ is the label dictionary $\ell^w = (\ell^w_{z,0},\ell^w_{z,1},\ell^w_{x,0},\ell^w_{x,1})$ for output wire $w \in \cal{O} \setminus \cal{T}$. 

First we note that the randomness $J_{> m}$ used in the encoding $\hat{F}_{> m}(F_{\leq m}(\qrv{y} \, ; \, J_{> m}))$ is the collection of random values $(o^w, s^w, t^w)_{w \in \cal{O}}$, where $o^w = (o^w_x,o^w_z)$ are bits used to generate the labels $(\ell^w)_{w \in \cal{O}}$ for the output wires, and $(s^w,t^w)$ are randomization bits used for the teleportation gadget (see \Cref{sec:circuit-encoding} for details on the randomness used in the encoding). 

Fix an index $j \in [n]$. The state $\hat{\qrv{y}}^{(m)}_j$ is on the following registers. Let $w \in \cal{O}$ denote the $j$-th output wire. Let $v \in \cal{W}$ denote the predecessor wire to $w$. 
\begin{itemize}
	\item $\reg{e}^v_\wout$. This register initially stores the $j$-th qubit of $\qrv{y}^{(m)}_j$. 
	\item $(\reg{e}^w_\win, \reg{e}^w_\wout)$. These initially store the EPR pair corresponding to wire $w$. 
	\item $\reg{a}^w = (\reg{z}^w,\reg{x}^w,\reg{b}^w)$. This is the ancilla register for wire $w$.
\end{itemize}
The joint state of $\hat{\qrv{y}}^{(m)}_j$ and the register $\reg{e}^w_\wout$ in \eqref{eq:sec-2}, when averaged over the randomness $(o^w, s^w, t^w)_{w \in \cal{O}}$, is equal to
\begin{equation}
\label{eq:sec-3}
\E_{o^w, s^w, t^w} \, \Big ( \sf{TP}_{\ell^{w},s^{w},t^{w}}( \ub{  \qrv{y}^{(m)}_j, \puretomixed{0} , \puretomixed{0} , \qrv{e}^{w}_\win }{ \reg{e}^{v}_\wout \, \reg{z}^w \, \reg{x}^w \, \reg{e}^w_\win} ) \, , \, \ub{\qrv{e}^w_\wout}{\reg{e}^w_\wout} \, , \, \ub{\puretomixed{0} }{\reg{b}^w}  \Big) = \E_{o^w} \, \ub{\tau \otimes \E_{d,e} \puretomixed{ \ell^w_{z,d}, \ell^w_{x,e}} \otimes \tau}{ \reg{e}^{v}_\wout \, \reg{z}^w \, \reg{x}^w \, \reg{e}^w_\win} \otimes \ub{X^e Z^d(\qrv{y}^{(m)}_j)}{\reg{e}^w_\wout} \otimes \ub{\puretomixed{0} }{\reg{b}^w} 
\end{equation}
where
the density matrix $\tau$ denotes the maximally mixed qubit, and $\puretomixed{0}$ denotes the pure state $\ketbra{0 \cdots 0}{0\cdots 0}$ for the appropriate number of zero bits. 
\Cref{eq:sec-3} follows from the following lemma.

\begin{lemma}
	\label{lem:tp-twirl}
	Let $\qrv{u}$ denote a qubit density matrix and let $\qrv{q}$ denote quantum side information that is possibly entangled with $\qrv{u}$. Let $s = (s_z,s_x)$ and $t = (t_z,t_x)$ denote randomization bits. Then for all labels $\ell = (\ell_{z,0},\ell_{z,1},\ell_{x,0},\ell_{x,1})$, we have
	\begin{equation}
	\label{eq:tp-twirl-1}
		\E_{s,t} \Big( \sf{TP}_{\ell,s,t} \Big( \qrv{u}, \puretomixed{0 \cdots 0}, \qrv{e}_\win \Big), \, \qrv{e}_\wout, \qrv{q} \Big) = \tau \otimes \E_{d,e} \puretomixed{\ell_{z,d}, \ell_{x,e} } \otimes \tau \otimes \Big( X^e Z^d(\qrv{u}), \, \qrv{q} \Big)
	\end{equation}
	where $(\qrv{e}_\win,\qrv{e}_\wout)$ is initialized in the state $\ket{\epr}$. 
\end{lemma}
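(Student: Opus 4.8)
The plan is to start from the explicit description of the action of $\sf{TP}_{\ell,s,t}$ given in \Cref{lem:tp-gadget}, apply it to a general input, and then average over the randomization bits $s=(s_z,s_x)$ and $t=(t_z,t_x)$. By linearity it suffices to take $\qrv{u} = \ketbra{\psi_1}{\psi_2}$ (with a possibly entangled purification $\qrv{q}$ kept as a bystander register that $\sf{TP}$ never touches), or even just to work at the level of pure states $\ket{\psi}$ in the register $\reg{u}$ together with a purifying system for $\qrv{q}$; the final claim then follows by taking appropriate partial traces and linear combinations. First I would invoke \Cref{lem:tp-gadget} to write, for fixed $s,t$,
\[
	\sf{TP}_{\ell,s,t}\,\bigket{\psi}_{\reg{u}}\otimes\bigket{0,0}_{\reg{z}\reg{x}}\otimes\bigket{\epr}_{\reg{e}_\win \reg{e}_\wout}
	= \tfrac12\sum_{d,e}\, (Z^{s_z}X^{s_x}\ket{d})_{\reg{u}}\otimes(Z^{t_z}X^{t_x}\ket{e})_{\reg{e}_\win}\otimes\ket{\ell_{z,d},\ell_{x,e}}_{\reg{z}\reg{x}}\otimes (X^eZ^d\ket{\psi})_{\reg{e}_\wout}.
\]
(Here I have matched the register names of the Lemma to those of \Cref{lem:tp-gadget} by identifying $\reg{v}\mapsto\reg{e}_\win$ and $\reg{u}'\mapsto\reg{e}_\wout$; I'd state this identification explicitly.)

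Next I would form the density matrix by taking the outer product of this vector with itself, which produces a double sum over $(d,e)$ and $(d',e')$, and then average over $s,t$. The key computation is the standard Pauli-twirl fact: $\E_{s_x,s_z} (Z^{s_z}X^{s_x}\ketbra{d}{d'}X^{s_x}Z^{s_z}) = \tfrac12\,\delta_{d,d'}\, I$ on a single qubit, i.e. the average kills all off-diagonal terms and leaves the maximally mixed state $\tau$ on the diagonal; likewise for $t$. Applying this to both the $\reg{u}$-register (twirled by $s$) and the $\reg{e}_\win$-register (twirled by $t$) forces $d'=d$ and $e'=e$, collapsing the double sum to a single sum over $(d,e)$ and replacing the $\reg{u}$ and $\reg{e}_\win$ factors by $\tau$ each. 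What survives is
\[
	\E_{s,t}\Big(\sf{TP}_{\ell,s,t}(\ldots)\Big)
	= \tau_{\reg{u}} \otimes \tfrac12\sum_{d,e}\ketbra{\ell_{z,d},\ell_{x,e}}{\ell_{z,d},\ell_{x,e}}_{\reg{z}\reg{x}} \otimes \tau_{\reg{e}_\win} \otimes \big(X^eZ^d \ketbra{\psi}{\psi}\big)_{\reg{e}_\wout},
\]
which is exactly the right-hand side of \eqref{eq:tp-twirl-1} once we rewrite $\tfrac12\sum_{d,e}(\cdot) = \E_{d,e}(\cdot)$, re-tensor the untouched side-information register $\qrv{q}$, and pass back from $\ket{\psi}$ to the general $\qrv{u}$ by linearity. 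The side information $\qrv{q}$ rides along untouched throughout since $\sf{TP}_{\ell,s,t}$ and the twirl act only on $\reg{u},\reg{z},\reg{x},\reg{e}_\win,\reg{e}_\wout$, so the entanglement between $\reg{e}_\wout$ (carrying the post-correction copy of $\qrv{u}$) and $\qrv{q}$ is preserved, giving the joint term $(X^eZ^d(\qrv{u}),\qrv{q})$ in the statement.

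I don't anticipate a genuine obstacle here — the content is entirely the Pauli-twirl identity combined with the already-proved \Cref{lem:tp-gadget}. The only points requiring a little care are (i) bookkeeping the register relabeling between the two lemmas, (ii) making sure the twirl over $s$ and over $t$ are applied to the correct registers ($s$ on $\reg{u}$, $t$ on $\reg{e}_\win$) so that both the $d$-index and the $e$-index get diagonalized, and (iii) handling the side information $\qrv{q}$ cleanly, which is just the observation that $\qrv{q}$ lies outside the support of all operators involved. So the main (mild) obstacle is purely notational: keeping the indices and registers straight while passing from the pure-state identity of \Cref{lem:tp-gadget} to the mixed-state, averaged statement with entangled side information.
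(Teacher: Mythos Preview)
Your proposal is correct and follows essentially the same route as the paper: invoke \Cref{lem:tp-gadget}, form the outer product to get a double sum over $(d,e,d',e')$, and use the Pauli twirl over $s$ and $t$ to kill the off-diagonal terms and leave $\tau$ on $\reg{u}$ and $\reg{e}_\win$. One small slip: in your final display you write $\tfrac12\sum_{d,e}$, but since $d,e$ each range over $\{0,1\}$ the correct normalization is $\tfrac14\sum_{d,e}=\E_{d,e}$; the factor tracking is otherwise fine.
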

\begin{proof}
We prove this by showing that \Cref{eq:tp-twirl-1} holds when $\qrv{u}$ is the outer product $\ketbra{a}{a'}$ and there is no quantum side information $\qrv{q}$; the Lemma follows by linearity of the $\sf{TP}$ operation.

By \Cref{lem:tp-gadget}, we have that
\[
	\sf{TP}_{\ell,s,t} \Big( \ket{a} \otimes \ket{0 \cdots 0} \otimes \ket{\epr} \Big) = \frac{1}{2} \sum_{d,e} Z^{s_z} X^{s_x} \ket{d} \otimes \ket{\ell_{z,d}, \ell_{x,e}} \otimes Z^{t_z} X^{t_x} \ket{e} \otimes X^e Z^d \ket{a}\;.
\]
Thus
\begin{align*}
&\E_{s,t} \, \Big(\sf{TP}_{\ell,s,t}  \Big( \ketbra{a}{a'}, \puretomixed{0 \cdots 0}, \qrv{e}_\win \Big), \qrv{e}_\wout \Big) \\
&= \E_{s,t} \, \frac{1}{4} \sum_{d,e,d',e'} (-1)^{s_z \cdot (d \oplus d')} \cdot (-1)^{t_z \cdot (e \oplus e')} \ketbra{d \oplus s_x}{d' \oplus s_x} \otimes \ketbra{\ell_{z,d}, \ell_{x,e}}{\ell_{z,d'}, \ell_{x,e'}} \otimes \ketbra{e \oplus t_x}{e' \oplus t_x} \otimes \ X^e Z^d (\ketbra{a}{a'}) \\
&= \E_{s_x,t_x} \frac{1}{4} \sum_{d,e} \puretomixed{d \oplus s_x} \otimes \puretomixed{\ell_{z,d}, \ell_{x,e}} \otimes \puretomixed{e \oplus t_x} \otimes  X^e Z^d (\ketbra{a}{a'}) \\
&= \E_{d,e} \tau \otimes \puretomixed{\ell_{z,d}, \ell_{x,e}} \otimes \tau \otimes  X^e Z^d (\ketbra{a}{a'}).
\end{align*}
\end{proof}

The first observation is that for each $w \in \cal{O}$, the maximally mixed qubits $\tau$ in registers $(\reg{e}^v_\wout, \reg{e}^w_\win)$ in \Cref{eq:sec-3} are completely unentangled from the rest of the state described in \eqref{eq:sec-2}, because the Pauli twirl bits $s^w,t^w$ are only used to twirl the registers $(\reg{e}^v_\wout, \reg{e}^w_\win)$, and is uncorrelated with the side information $\qrv{q}$.

The next observation is that for each $w \in \cal{T}$ (i.e. the traced-out wires), the registers $(\reg{z}^w,\reg{x}^w,\reg{e}^w_\wout)$ of \eqref{eq:sec-2} are in the state
\[
	\E_{o_z^w, o_x^w} \, \E_{d,e} \,\puretomixed{ o^w_z \oplus d, o^w_x \oplus e} \otimes X^e Z^d(\qrv{y}^{(m)}_j) = \E_{o^w_z, o^w_x} \, \puretomixed{ o^w_z, o^w_x} \otimes \E_{d,e} \, X^e Z^d(\qrv{y}^{(m)}_j) = \tau \otimes \tau \otimes \tau
\]
where $w$ is the $j$-th output wire. Furthermore, the randomness $(o^w_z,o^w_x)$ is uncorrelated with the rest of the state described in \eqref{eq:sec-2}, so the registers $(\reg{z}^w,\reg{x}^w,\reg{e}^w_\wout)$ are unentangled with the rest of the state.

Finally, for each $w \in \cal{O} \setminus \cal{T}$ (i.e. the non-traced-out wires), the registers $(\reg{d}^w,\reg{z}^w,\reg{x}^w,\reg{e}^w_\wout)$ are in the state
\[
	\E_{o_z^w, o_x^w} \, \puretomixed{o^w_z,o^w_z \oplus 1, o^w_x, o^w_x \oplus 1} \otimes \E_{d,e} \, \puretomixed{ o^w_z \oplus d, o^w_x \oplus e} \otimes X^e Z^d(\qrv{y}^{(m)}_j)
\]

Now consider the decoding procedure $\Dec$, when applied to the encoding $\hat{F}(\qrv{y})$. It evaluates each gate by applying $\evg_1,\evg_2,\ldots$, and then finally at the end decodes each wire $w \in \cal{O} \setminus \cal{T}$ by undoing the Pauli twirl. The resulting state in registers $(\reg{d}^w,\reg{z}^w,\reg{x}^w,\reg{e}^w_\wout)$ is then indistinguishable from
\[
		\E_{o_z^w, o_x^w} \, \puretomixed{o^w_z,o^w_z \oplus 1, o^w_x, o^w_x \oplus 1} \otimes \E_{d,e} \, \puretomixed{ o^w_z \oplus d, o^w_x \oplus e} \otimes \qrv{y}^{(m)}_j = \Big ( \E_a \puretomixed{a, a \oplus 1} \Big)^{\otimes 2} \otimes \tau^{\otimes 2} \otimes \qrv{y}^{(m)}_j\;.
\]
Let $\sigma$ denote the density matrix $\E_a \puretomixed{a, a \oplus 1}$. Putting everything together, the result of applying the decoding procedure $\Dec$ to $(\hat{F}(\qrv{y}), \qrv{q})$ results in the following state:
\[
	\Big( \Tr_{\cal{T}}(\qrv{y}^{(m)}), \qrv{q} \Big) \otimes \underbrace{\Big( \sigma^{\otimes 2} \otimes \tau \Big)^{\otimes |\cal{O} \setminus \cal{T}|} \otimes \Big( \tau^{\otimes 3} \Big)^{\otimes |\cal{T}|} \otimes \qrv{t}_1 \otimes \cdots \otimes \qrv{t}_m}_{\rho} \;.
\]
Since $F(\qrv{y}) = \Tr_{\cal{T}}(\qrv{y}^{(m)})$, this proves the correctness of the decoding procedure. Note that $\rho$ is $(t,\epsilon \cdot m)$-indistinguishable from the density matrix
\[
	\tilde{\rho} = \Big( \sigma^{\otimes 2} \otimes \tau \Big)^{\otimes |\cal{O} \setminus \cal{T}|} \otimes \Big( \tau^{\otimes 3} \Big)^{\otimes |\cal{T}|} \otimes \qrv{\tilde{t}}_1 \otimes \cdots \otimes \qrv{\tilde{t}}_m
\]
where each $\qrv{\tilde{t}}_i$ only depends on $\scr{T}$, as given by \Cref{lem:single-eval}. Thus $\tilde{\rho}$ only depends on the topology $\scr{T}$. 

Similarly, since $E(P(F(\qrv{y}))) = F(\qrv{y})$, it must be that
\[
\Big ( \, \hat{E}( P(F(\qrv{y}))), \qrv{q} \, \Big) = \Big ( \, F(\qrv{y}), \qrv{q} \, \Big) \otimes \rho'
\]
where $\rho'$ is $(t,\epsilon \cdot m)$-indistinguishable from $\tilde{\rho}$. 

This in turn implies that $(\hat{F}(\qrv{y}), \qrv{q})$ is $(t', \epsilon')$-indistinguishable from $(\hat{E}( P(F(\qrv{y}))), \qrv{q}) = (\Sim(F(\qrv{y})),\qrv{q})$, where $t' = t - \poly(s)$ where $\poly(s)$ is the complexity of $\Dec$, and $\epsilon' = \epsilon \cdot m$. This establishes the privacy property of $\QGC$.

\subsection{Proof of \Cref{lem:single-eval}}
\label{sec:single-gate-eval}

We prove \Cref{lem:single-eval} for the case that $i = 0$; the argument is identical for $i > 0$. In particular, we will show that
\begin{equation}
\label{eq:sec-3b}
	\E_J \, \Big ( \evg_1 ( \, \hat{F}( \qrv{y}) \, ; \, J ), \qrv{q} \Big) = \E_{J_{> 1}} \, \Big ( \hat{F}_{> 1} (\, U_1(\qrv{y}) \, ; \, J_{> 1}), \qrv{q} \Big) \otimes \qrv{t}
\end{equation}
where $\qrv{t}, \qrv{q}$ satisfy the conditions in the statement of \Cref{lem:single-eval}.

Let $g_1$ denote a $p$-qubit gate, with $U_1$ being the corresponding unitary operator. For simplicity we drop the subscript and just write $g, U$ from now on.  

In encoding procedure presented in \Cref{sec:circuit-encoding}, each gate of $F$ and each input qubit of $\qrv{y}$ is encoded independently. Thus we can treat $\hat{F}$ as the following process:
\begin{enumerate}
	\item Sample classical randomness $J = (r,A,s,t,o)$;
	\item Compute the labels $(\ell^w)_{w \in \cal{W}}$; 
	\item Apply $M \otimes \ggate_1 \otimes \sf{TP}_1$ where $\ggate_1$ is the gate encoding unitary for gate $g$; the map $\sf{TP}_1$ is the unitary to encode the $p$ input qubits of $\qrv{y}$ that are acted upon by $U$, and $M$ is the concatenation of all the other gate encoding and input encoding unitaries.
\end{enumerate}
In more detail: 
\begin{itemize}
	\item The map $\sf{TP}_1$ is equal to
\[
	\sf{TP}_1 = \bigotimes_{v \in \winwire(g)} \sf{TP}_{\ell^v, s^v, t^v}
\]
where $\sf{TP}_{\ell^v, s^v, t^v}$ acts on the registers $(\reg{y}^v, \reg{a}^v, \reg{e}^v_\win)$. Here, $\reg{y}^v$ denotes the input qubit that is encoded into input wire $v$. The register $\reg{a}^v$ can be decomposed as $(\reg{z}^v, \reg{x}^v, \reg{b}^v)$.
\item The map $\ggate_1$ is equal to
\[
	\ggate_1 = \ggate_{g,r^g,A^g,\ell^g,s^g,t^g}
\]
where $s^g = (s^w)_{w \in \woutwire(g)}$, $t^g = (t^w)_{w \in \woutwire(g)}$, and $\ell^g = (\ell^w)_{w \in \woutwire(g)}$. The map $\ggate_1$ acts on registers $(\reg{e}^v_\wout)_{v \in \winwire(g)}$, $(\reg{a}^w)_{w \in \woutwire(g)}$, $(\reg{e}^w_\win)_{w \in \woutwire(g)}$, and $\reg{c}^g$. 

\item The map $M$ acts on a disjoint set of registers from $\sf{TP}_1$ and $\ggate_1$.
\end{itemize}

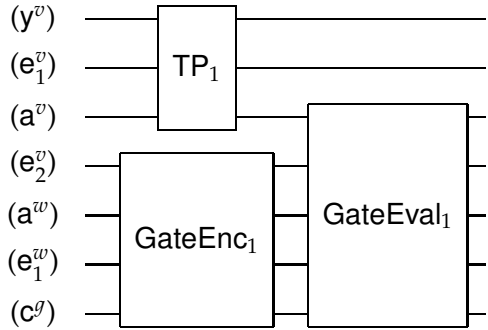
\begin{figure}[H]
\[
        \Qcircuit @C=0.6em @R=0.8em {
          (\reg{y}^v) 				& & & & \qw & \multigate{2}{\sf{TP}_1} & \qw & \qw & \qw \\
          (\reg{e}^v_\win) 			& & & & \qw & \ghost{\sf{TP}_1} & \qw & \qw & \qw \\
          (\reg{a}^v)   	& & & & \qw & \ghost{\sf{TP}_1} & \qw & \multigate{4}{\evg_1} & \qw \\
          (\reg{e}^v_\wout) 		& & & & \qw & \multigate{3}{\ggate_1} & \qw & \ghost{\evg_1} & \qw  \\
          (\reg{a}^w) 				& & & & \qw & \ghost{\ggate_1} & \qw & \ghost{\evg_1} & \qw  \\
          (\reg{e}^w_\win)			& & & & \qw & \ghost{\ggate_1} & \qw & \ghost{\evg_1} & \qw  \\
          (\reg{c}^g)				& & & & \qw & \ghost{\ggate_1} & \qw & \ghost{\evg_1} & \qw  
        }
\]
\caption{A circuit diagram of the encoding of gate $g$ and its inputs, proceeded by its evaluation. The indices $v,w$ range over $\winwire(g)$ and $\woutwire(g)$ respectively.} \label{fig:gate-eval-1}
\end{figure}

We compute the state $\Big ( \evg_1 ( \, \hat{F}( \qrv{y} \, ; \, J) \, ), \qrv{q} \Big)$, which is correlated with the randomness $J$ and labels $(\ell^w)_{w \in \cal{W}}$. We focus specifically on the registers depicted in \Cref{fig:gate-eval-1}, which are the ones acted upon by $\sf{TP}_1$, $\ggate_1$, and $\evg_1$. In what follows, all randomness $J$ and labels $(\ell^w)$ are fixed unless we explicitly average over certain random variables. Furthermore, the indices $v,w$ range over $\winwire(g)$ and $\woutwire(g)$ respectively.

\paragraph{Computing the Input Teleportation.} Fix $v \in \winwire(g)$. We compute the result of applying $\sf{TP}_{\ell^v,s^v,t^v}$ to registers $(\reg{y}^v, \reg{z}^v, \reg{x}^v, \reg{e}^v_\win,\reg{e}^v_\wout)$, after averaging over the randomization bits $(s^v, t^v)$:
\[
	\E_{s^v, t^v} \Big( \sf{TP}_{\ell^v,s^v,t^v} \, \Big( \ub{\qrv{y}^v, \puretomixed{0 \cdots 0}, \qrv{e}^v_\win}{\reg{y}^v \, \reg{z}^v \, \reg{x}^v \,  \reg{b}^v \, \reg{e}^v_\win} \Big), \, \ub{\qrv{e}^v_\wout}{\reg{e}^v_\wout} \Big) = \tau \otimes \E_{d_v,e_v} \, \puretomixed{\ell^v_{z,d_v}, \ell^v_{x,e_v}} \otimes \puretomixed{0} \otimes \tau \otimes X^{e_v} Z^{d_v}(\qrv{y}^v) 
\]
where on the right-hand side, the $\puretomixed{0}$ state is on the $\reg{b}^v$ register (the teleportation gadget acts as the identity on register $\reg{b}^v$). This follows from \Cref{lem:tp-twirl}.

\paragraph{Computing the Gate Encoding.} Now we compute the result of applying $\ggate_1$, where we've conditioned on specific values of $d_v, e_v$: 
\begin{align}
	 &\ggate_1 \Big( \bigotimes_v \ub{X^{e_v} Z^{d_v}(\qrv{y}^v)}{\reg{e}^v_\wout}, \ub{\puretomixed{0 \cdots 0}}{ (\reg{a}^w)_w}, \ub{(\qrv{e}^w_\win)_w}{(\reg{e}^w_\win)_w}, \ub{\puretomixed{0}}{\reg{c}^g} \Big) \notag \\
	 & \qquad \qquad =  \Big ( A^g \cdot \Lambda_1 \cdot \Big(\,  U \Big( \bigotimes_v X^{e_v} Z^{d_v}(\qrv{y}^v) \, \Big), \puretomixed{0 \cdots 0}, (\qrv{e}^w_\win)_w \Big) \Big ) \otimes \puretomixed{ \hf_{g,\off} } \label{eq:sec-4}
\end{align}
The unitary $\Lambda_1$ is the tensor product 
\[
	\Lambda_1 = \bigotimes_{w \in \woutwire(g)} \Lambda_1(\ell^w)
\]
where $\Lambda_1(\ell^w)$ are the $\qnczf$ circuits specified by \Cref{lem:commute-correction}. 
The value $\hf_{g,\off}$ is the (offline part of the) classical randomized encoding of the correction function $f(d_1,e_1,\ldots,d_p,e_p)$ (see \Cref{sec:corfunc} for the definition of correction functions and their encoding). 

By our assumption on the gate set of the circuit $F$, we use \Cref{eq:pauli-to-clifford} to argue that
\[
	 U \cdot \bigotimes_{v \in \winwire(g)} X^{e_v} Z^{d_v} = \Big( \bigotimes_{v \in \winwire(g)} R_{d_v,e_v}^\dagger \Big) \cdot U
\]
for single-qubit PX group elements $(R_{d_v,e_v})$. Thus, the right-hand side of \Cref{eq:sec-4} can be written as
\[
\Big ( A^g \cdot \Lambda_1 \cdot \Big( \bigotimes_v R^\dagger_{d_v,e_v} \Big) \,  \Big( U (\qrv{y}^v)_v , \puretomixed{0 \cdots 0}, (\qrv{e}^w_\win)_w \Big) \Big ) \otimes \puretomixed{ \hf_{g,\off}} 
\]

\paragraph{Computing the Gate Evaluation.} Next, we compute the effect of applying the gate evaluation unitary $\evg_1$, which is described in \Cref{sec:ckt-eval}. 
Controlled on the registers $(\reg{z}^v,\reg{x}^v)_v$, which store the labels $(\ell^v_{z,d_v},\ell^v_{x,e_v})$ encoding the Pauli twirl $X^{e_v} Z^{d_v}$, the map $\evg_1$ uses the encoding $\hf_{g,\off}$ to compute the correction function $f(d_1,e_1,\ldots,d_p,e_p)$. This yields (descriptions of) correction gadgets $(\gadget{Corr}_v)_{v}$ such that
\[
	\sf{Corr}_v = \Lambda_2(R_{d_v,e_v}, \ell^w, s^w, t^w) \cdot (A^w)^\dagger
\] 
where $w$ is the ``successor wire'' to $v$ (i.e. $v,w$ are the $k$-th input and output wires of $g$, respectively), and $\Lambda_2(R,\ell,s,t)$ are the unitaries specified by \Cref{lem:commute-correction}. Here, we assume that $\CRE$ has perfect correctness, which means that the correction gadgets are computed without error. Furthermore, for the rest of the section we will interchangeably index the input wires by either $v \in \winwire(g)$ or integers $1,\ldots,p$. 

The unitary $\evg_1$ then applies the unitaries $\sf{Corr}_v$ to registers $(\reg{e}^{v}_\wout,\reg{a}^{w},\reg{e}^{w}_\win)$, followed by applying the $\qnczf$ circuit $\Lambda_3$ (specified again by \Cref{lem:commute-correction}) to the same registers. Put together, the state of registers $(\qrv{e}^v_\wout,\qrv{a}^w, \qrv{e}^w_\win)_{v,w}$ looks like the following:

\begin{figure}[H]
\[
        \Qcircuit @C=1em @R=1.1em {
          (\reg{e}^v_\wout) &  & {/} \qw &  \gate{U} & \gate{R^\dagger} & \multigate{2}{\Lambda_1} & \multigate{2}{A^g} & \multigate{2}{(A^g)^\dagger} & \multigate{2}{\Lambda_2} & \multigate{2}{\Lambda_3} & \qw\\
          (\reg{a}^w) &    & {/} \qw & 	\qw	   &\qw &  \ghost{\Lambda_1} 	   & \ghost{A^g} & \ghost{(A^g)^\dagger} & \ghost{\Lambda_2} & \ghost{\Lambda_3} & \qw\\
           (\reg{e}^w_\win) & & {/} \qw &   \qw            &\qw & \ghost{\Lambda_1}       & \ghost{A^g} & \ghost{(A^g)^\dagger} & \ghost{\Lambda_2} & \ghost{\Lambda_3} & \qw
        } \\
 \]
\end{figure}
Here, $R$ denotes $\bigotimes_v R_{d_v,e_v}$, $\Lambda_2$ denotes $\bigotimes_v \Lambda_2(R_{d_v,e_v}, \ell^w, s^w, t^w)$, and $\Lambda_3$ is applied transversally in the circuit. The $A^g$ unitaries cancel out, and we are left with 
\[
\Lambda_3 \cdot \Lambda_2 \cdot \Lambda_1 \cdot \Big( \bigotimes_v R^\dagger_{d_v,e_v} \Big) \cdot  \Big( U (\qrv{y}^v)_v , \puretomixed{0 \cdots 0}, (\qrv{e}^w_\win)_w \Big) 
\]
By \Cref{lem:commute-correction}, this is equal to
\[
\Big( \bigotimes_w \sf{TP}_{\ell^w,s^w,t^w} \Big) \cdot \Big( \bigotimes_v R_{d_v,e_v} \Big) \cdot \Big( \bigotimes_v R^\dagger_{d_v,e_v} \Big) \,  \Big( U (\qrv{y}^v)_v , \puretomixed{0 \cdots 0}, (\qrv{e}^w_\win)_w \Big) 	
\]
The PX unitaries $(R_{d_v,e_v})$ cancel out, and we are left with
\[
\Big( \bigotimes_w \sf{TP}_{\ell^w,s^w,t^w} \Big) \,  \Big( U (\qrv{y}^v)_v , \puretomixed{0 \cdots 0}, (\qrv{e}^w_\win)_w \Big) 		
\]

\paragraph{Finishing the Proof.} Consider the registers $(\reg{y}^v, \reg{e}^v_\win,\reg{a}^v,  \reg{e}^v_\wout)_{v}$, $\reg{c}^g$, and $(\reg{a}^w, \reg{e}^w_\win)_{w}$ of the ``global'' state $\Big ( \evg_1 ( \, \hat{F}( \qrv{y} \, ; \, J) \, ), \qrv{q} \Big)$, where all randomness $J$ is fixed except for the randomization bits $(s^v,t^v)_{v \in \winwire(g)}$, which we've averaged over. The density matrix in this registers can be written as
\begin{equation}
\label{eq:sec-5}
\Big ( \bigotimes_v \ub{\tau^{\otimes 2}}{\reg{y}^v \reg{e}^v_\win} \otimes \E_{d_v,e_v} \, \ub{\puretomixed{\ell^v_{z,d_v}, \ell^v_{x,e_v}}}{\reg{z}^v \reg{x}^v} \otimes \ub{\puretomixed{0}}{\reg{b}^v} \Big)\otimes  \ub{\puretomixed{ \hf_{g,\off}}}{\reg{c}^g} \otimes \Big( \bigotimes_w \sf{TP}_{\ell^w,s^w,t^w} \Big) \,  \Big( \ub{U (\qrv{y}^v)_v}{(\reg{e}^v_\wout)_v} , \ub{\puretomixed{0 \cdots 0}}{(\reg{a}^w)_w}, \ub{(\qrv{e}^w_\win)_w}{(\reg{e}^w_\win)_w} \Big) 
\end{equation}
where as usual the indices $v,w$ range over $\winwire(g),\woutwire(g)$ respectively. 

We now average over $r^g$ and $A^g$. 
Consider the density matrix on registers $(\reg{y}^v ,\reg{e}^v_\win ,\reg{z}^v ,\reg{x}^v, \reg{b}^v)_{v}$ and $\reg{c}^g$, which can be written as
\begin{equation}
\label{eq:sec-5b}
	\qrv{t} = \E_{r^g, A^g} \, \bigotimes_v \ub{\tau^{\otimes 2}}{\reg{y}^v \reg{e}^v_\win} \otimes \E_{d_v,e_v} \, \ub{\puretomixed{\ell^v_{z,d_v}, \ell^v_{x,e_v}}}{\reg{z}^v \reg{x}^v} \otimes \ub{\puretomixed{0}}{\reg{b}^v} \otimes  \ub{\puretomixed{ \hf_{g,\off}}}{\reg{c}^g} 
\end{equation}
Observe that the density matrix $\qrv{t}$ does not depend on any other randomness of $J$.  The remainder of the global state can be seen to be equal to the encoding of the partial circuit $F_{> 1}$ with input $F_{\leq 1}(\qrv{y}) = U(\qrv{y})$, and furthermore does not depend on the randomness $J_{\leq 1} = (r^g, A^g, (s^v, t^v)_{v \in \winwire(g)} )$.

Thus, averaging the global state over $J$, we get
\[
	\E_J \, \Big ( \evg_1 ( \, \hat{F}( \qrv{y} \, ; \, J) \, ), \qrv{q} \Big) = \E_{J_{> 1}} \, (\hat{F}_{> 1}(F_{\leq 1}(\qrv{y} \, ; \, J_{>1}) \,,\, \qrv{q}) \otimes \qrv{t} \;.
\] 
We are almost done -- we just need to argue that $\qrv{t}$ is indistinguishable from a state $\qrv{\tilde{t}}$ that only depends on the topology $\scr{T}$. Note that for fixed $(d_v,e_v)_v$ and $A^g$, the density matrix
\begin{equation}
\label{eq:sec-6}
\E_{r^g} \puretomixed{\ell^v_{z,d_v}, \ell^v_{x,e_v}}\otimes  \puretomixed{ \hf_{g,\off}}
\end{equation}
is precisely the output distribution of the $\CRE$ encoding of the correction function $f(d_1,e_1,\ldots,d_p,e_p)$. The value $\hat{f}_{g,\off}$ corresponds to the offline part of the encoding, and the $\ell^v_{z,d_v}, \ell^v_{x,e_v}$ correspond to the labels of the input bits $d_v,e_v$. The privacy guarantee of $\CRE$ is that the density matrix in~\eqref{eq:sec-6} is $(t,\epsilon)$-indistinguishable from
\[
	\puretomixed{\CSim_\topol(f(d_1,e_1,\ldots,d_p,e_p))} = \puretomixed{\CSim_\topol \Big( \gadget{Corr}_1,\ldots,\gadget{Corr}_p \Big)}
\]
where $\topol$ is the topology of the circuit computing $f$, and the topology $\topol$ depends only on the topology $\scr{T}$ of $F$. Thus the density matrix $\qrv{t}$ is $(t,\epsilon)$-indistinguishable from a density matrix $\qrv{\tilde{t}}$, defined as follows:
\[
	\qrv{\tilde{t}} = \tau^{\otimes 2p} \otimes \E_{A^g, (d_v, e_v)_v} \puretomixed{\CSim_\topol \Big( \gadget{Corr}_1,\ldots,\gadget{Corr}_p \Big)} \otimes \puretomixed{0} \, . 
\]
Let $A^g = (A_1,\ldots,A_p)$, and note that $\gadget{Corr}_j$ only depends on $A_j$. Since the $A_j$'s are chosen independently from the randomization group $\scr{R}_{\kappa_j}$ for some label length $\kappa_j$, the distribution of the gadget $\gadget{Corr}_j$ when averaged over $A_j$ is going to be uniform over all correction gadgets corresponding to label length $\kappa_j$ (see the discussion at the end of \Cref{sec:corr-gadget} for more details). Thus, $\E_{A^g} \puretomixed{\CSim_\topol \Big( \gadget{Corr}_1,\ldots,\gadget{Corr}_p \Big)}$ is a density matrix that only depends on the topology $\scr{T}$ of $F$ (and is independent of all other randomness). Thus the density matrix in~\eqref{eq:sec-5b} is $(t,\epsilon)$-indistinguishable from $\qrv{t}$, which only depends on the topology $\scr{T}$.
 
 This 
completes the proof of \Cref{lem:single-eval}.

\appendix

\section{Comparison with Related Cryptographic Notions}
\label{sec:comparisonwithnotions}

We now compare quantum randomized encoding with other cryptographic primitives of a similar nature. While we focus on the quantum variants of these primitives, the distinctions are the same as in the classical versions.

\paragraph{Secure Multiparty Computation.} The goal of secure multiparty computation (MPC) is to allow a number of parties, each with their own private input $x_i$, to jointly compute a function $f(x_1,\ldots,x_n)$ such that no party can learn about others' inputs. There is a close connection between MPC and REs, in that REs can often be used to accomplish secure MPC. Indeed, Yao's garbled circuits scheme from the 80's was presented as a technique for achieving secure $2$-party computation, and its distillation into a separate primitive with concrete properties occurred later. Many protocols for secure quantum MPC have been constructed over the years (see, e.g.,~\cite{ben2006secure,unruh2010universally,DNS10,DNS12,DGJMS19}).

While RE is useful for constructing MPC (and sometimes the other way around), and while the security in both notions is defined using the simulation paradigm, inherently they are very different. While MPC is a communication protocol between multiple parties with inputs, RE is not a protocol and it only considers a single input (one could imagine RE as a single-message protocol where an encoder with an input sends a message to a decoder without an input). Since in the context of RE there is only a single party, there is always a trivial solution of computing the function locally. Therefore usually in RE we are concerned with other properties of the construction beyond its security, such as the complexity of encoding.

\paragraph{Homomorphic Encryption.} Fully homomorphic encryption (FHE) is a method to compute functions on inputs that are encrypted, without having to ever decrypt the information. FHE and RE also share some commonalities, and there are contexts where both techniques are used to accomplish secure computation and delegation of computation. However there are intrinsic differences between these two concepts. (See also \cite[Remark 1.4]{Applebaum17}.)

In FHE, a client encrypts an input $x$ and sends $\Enc(x)$ to a remote evaluator. The evaluator can then compute $\Enc(f(x))$ -- for \emph{any} function $f$ -- from the given ciphertext, without learning anything about $x$ or $f(x)$. 
It sends $\Enc(f(x))$ back to the client, who can use its secret decryption key to recover $f(x)$. 
It is important that the evaluator does not have access to the decryption key, because otherwise it will be able to learn the original input $x$. 

With REs, the client sends an encoding of both a function $f$ and an input $x$, from which the evaluator can compute the value $f(x)$ in the clear. The evaluator doesn't have to send any messages back to the client, and furthermore the evaluator cannot derive an encoding of $g(x)$ for some unrelated function $g$. 

If all we require is decomposable RE, then one can achieve this under minimal assumptions such as the existence of one-way functions (or even unconditionally in some cases, depending on the desired complexity properties). FHE (and homomorphic encryption in general) is only known under stronger assumptions (and cannot be achieved with unconditional security). In particular, candidates for classical and quantum FHE often rely on the hardness of the learning with errors problem (or related problems). Quantum FHE was considered recently in \cite{BJ15,DSS16,M18a,B18}.

\paragraph{Program Obfuscation.} 
In program obfuscation, an obfuscator encodes a function $f$ into an \emph{obfuscated program} $\Enc(f)$, which is sent to an evaluator. Using the obfuscated program, the evaluator can compute $f(x)$ for any choice of input $x$. The security requirement, intuitively, is that nothing about $f$ is revealed beyond its input-output functionality. The most commonly studied notions of obfuscation are \emph{virtual black-box (VBB) obfuscation} and \emph{indistinguishability obfuscation} (iO), which differ in how they hide the function $f$.

While an obfuscated program can be evaluated on multiple inputs (as mant as the user wishes), in RE the encoding fixes both the function and an input, so it can be thought of as a ``one-time obfuscation''. 
As explained in \cite[Section 4.4]{Applebaum17}, the obfuscation of the program which has $x$ hardwired and evaluates $f$ on it, constitutes a RE of $f(x)$. On the other hand, REs can be used to ``bootstrap'' obfuscators to have superior complexity properties.

There has been a formalization of quantum program obfuscation~\cite{alagic2016quantum}, but it is not yet known whether general quantum obfuscation can be achieved assuming only classical obfuscation. Broadbent and Kazmi have recently showed how to achieve indistinguishability obfuscation for quantum circuits with low $T$-gate count (at most logarithmically many)~\cite{broadbent2020indistinguishability}. As we mention in \Cref{sec:otherapps}, a classical RE scheme for quantum circuits can be combined with a classical obfuscator to imply a quantum obfuscator.

\section{Potential Applications of QRE}
\label{sec:otherapps}

Many of the applications of RE in the classical setting seem to carry over to the quantum setting, possibly with some necessary adjustments. As the variety of applications of RE is so vast, we view it as outside the scope of this paper to go over and attempt to re-prove them. We therefore highlight the most immediate ones here. 

\paragraph{PSM and Delegation.}
Two immediate applications that were mentioned above are private simultaneous messages (PSM) and delegation. Indeed, PSM is almost equivalent to decomposable RE, and therefore our results show how to achieve PSM in the quantum setting, using quantum messages and a quantum shared string (or even a classical shared string in the case where the input is completely classical). In terms of delegation, QRE implies that any quantum computation can be delegated (in $2$ messages) using, essentially, a $\qnczf$ verifier.\footnote{One should be careful since the final step of verification requires comparing two long classical strings, which can be done in $\qnczf$ with bounded error \cite{HS05}.}

\paragraph{Two-Party Secure Computation.}
Applications to MPC in the context of round reduction also seem to follow. In particular, one can use our construction to obtain an analogue of Yao's original $2$-message two-party MPC protocol using (classical) oblivious transfer (OT) as a building block. We recall that in OT, we have a \emph{receiver} with a bit $b$, and a \emph{sender} with two strings $r_0, r_1$, and in the end of the execution the receiver learns $r_b$ and the sender learns nothing about $b$.

We can consider two parties $A,B$, each of which holding a quantum input, $\qrv{x}, \qrv{y}$ respectively, and they wish to jointly compute a quantum operation $F$ on their inputs whose output is delivered to $A$.\footnote{Interestingly, contrary to the classical setting, this does not seem to immediately imply a protocol for the setting where both parties receive an output with the same round complexity (if we consider a general quantum operation). One additional round message seems to be needed in this case.} This can be done as follows. Party $A$ encrypts its input with a classical key using a quantum one-time pad (QOTP, \cite{AMTW00}), it sends the encrypted input to $B$ and conducts an OT protocol as a receiver, for each bit of the classical pad for the QOTP. Party $B$ considers a quantum functionality $F'$ that takes as input an encrypted $\qrv{x}$, (unencrypted) $\qrv{y}$, and classical QOTP key. On this input, $F'$ first decrypts $\qrv{x}$ and then applies the original $F$ on $\qrv{x}, \qrv{y}$. Party $B$ creates a decomposable QRE of $F'$, plugging in the encrypted $\qrv{x}$ that was received from $A$, and its own unencrypted input $\qrv{y}$. We recall that for classical input bits, our QRE is classical and decomposable, which means that for each classical input bit we can generate two labels $r_0, r_1$, such that if the value of the bit is $b$ then the part of the encoding that depends on this bit is $r_b$. This means that party $B$ can send the parts of the encoding that it can compute, and complete the OT protocol as a sender with values $r_0, r_1$ for each bit of classical input. This will allow party $A$ to obtain the encoding of $F$, apply the decoding procedure and learn the intended output.

It appears that one should be able to prove security of such a protocol in the \emph{specious} model \cite{DNS10}, which is the mildest model of security in the quantum setting, if the underlying classical OT primitive is also specious secure.\footnote{At a high level, we believe that a proof goes through by requiring all parties to run all functions of the protocol in a purified manner.} However, a formal proof is tangent to the scope of this work and one should only treat this as a candidate until a formal proof is presented.
We also note that protocols with comparable round complexity can be achieved using quantum fully homomorphic encryption \cite{BJ15,DSS16,M18a,B18}.

One could consider further applications in the context of MPC such as improved quantum MPC in the multi-party setting and in the malicious setting \cite{DNS12,DGJMS19}.

\def\sk{\mathsf{sk}}

\paragraph{Functional Encryption.} It was noticed in \cite{SahaiS10} that decomposable RE schemes imply a limited form of \emph{functional encryption} (FE). An encryption scheme where there are multiple secret keys, associated with functions, so that when $\sk_f$ decrypts $\Enc(x)$ the output is $f(x)$. In the classical setting RE implies FE without ``collusion resistance'' (i.e.\ an adversary should not be allowed to obtain more than a single functional key). It was then showed \cite{GorbunovVW12} how to extend this technique to FE with ``bounded collusion''. This construction seem to carry over to the quantum setting using our QRE scheme, under the appropriate definition of FE. However, some definitional work is required in order to formally substantiate the definition and show the connection in the quantum setting.

A more ambitious goal is to construct succinct FE schemes (even with bounded collusion) and so-called ``reusable'' garbled circuits which are function-private symmetric-key FE, analogously to the classical constructions in \cite{goldwasser2013reusable} (but possibly using different technique). One obstacle that seems to prevent direct application is the absence of a quantum attribute-based encryption schemes that are a central building block in that construction.

\paragraph{Classical Garbling and Quantum Obfuscation.} If it is possible to construct a QRE with classical encoding for classical inputs and function descriptions, then it would allow to construct quantum indistinguishability obfuscation (iO) from the classical variant, similarly to how classical RE is leveraged to obtain classical iO~\cite{Applebaum14obf,BitanskyCGHJLPT18}. Constructing iO for quantum circuits is one of the intriguing open problems in the context of quantum cryptography, and with the connections between iO and RE in the classical setting, one would hope that QRE could be a useful tool in establishing it.

\SetAlgoCaptionSeparator{}
\renewcommand{\figurename}{Circuit}

\section{Proof of \Cref{prop:p-h}}
\label{sec:p-h-proof}

Here we give a proof of \Cref{prop:p-h}, restated here for convenience:

\begin{proposition}
\label{prop:p-h-appendix}
For all single-qubit $PX$ gates $R$, $s = (s_z,s_x) \in \{0,1\}^2$, and strings $r \in \binset^\kappa$ there exist $\qnczf$ circuits $C_1(r)$ and $C_3$ and a depth-one Clifford circuit $C_2(R,r,s)$ in the randomization group $\scr{R}_\kappa$ such that the following circuit identity holds:
\begin{equation}
\label{eq:p-h-1}
        \Qcircuit @C=0.6em @R=0.8em {
          \reg{u} & &  & & \gate{R}  & \gate{H} & \ctrl{1} & \gate{X^{s_x}}  & \gate{Z^{s_z}} & \qw &		& & & & \multigate{2}{C_1(r)} & \multigate{2}{C_2(R,r,s)} & \multigate{2}{C_3} & \qw \\
          \reg{z} & & &  & {/} \qw & \qw  & \gate{X(r)} &  \qw & \qw & \qw & =   & & & &  \ghost{C_1(r)} & \ghost{C_2(R,r,s)} & \ghost{C_3} & \qw \\
          \regb && \ket{0}  & & {/} \qw & \qw   & \qw & \qw  & \qw & \qw  &			& & \ket{0}& & 		\ghost{C_1(r)} & \ghost{C_2(R,r,s)} & \ghost{C_3} & \qw         
        }
\end{equation}
Here, the register $\reg{b}$ consists of $(\kappa+1)^2$ qubits. Furthermore, descriptions of the circuits $C_1(r)$ and $C_2(R,r,s)$ can be computed by $\ncz$ circuits of size $O(\kappa^2)$.
\end{proposition}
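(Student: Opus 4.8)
The plan is to reduce the left-hand side to a normal form by Clifford/Pauli bookkeeping, and then to realize the single piece that genuinely spreads across many qubits — essentially a bare fan-out — using the $(\kappa+1)^2$ ancillas of $\regb$. These are indexed by the edges of the complete graph on the $\kappa+1$ ``primary'' qubits $\regu,\regz_1,\dots,\regz_\kappa$ (index $0$ playing the role of $\regu$, index $i$ that of $\regz_i$), and the two-qubit gates allowed in $\scr{R}_\kappa$ are exactly the ones acting on one such ancilla pair $(\regb_{ij},\regb_{ji})$ — which is what lets a single depth-one layer carry the correction while staying group-structured.

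\emph{Normal form.} Read the left circuit as the operator $Z^{s_z}_{\regu}X^{s_x}_{\regu}\cdot\mathsf{FO}_r\cdot H_{\regu}\cdot R_{\regu}$, where $\mathsf{FO}_r=\prod_{i\in\supp(r)}\mathsf{CNOT}_{\regu\to\regz_i}$. Using the $PX$ form $R=i^{c}X^{a}P^{b}$, the identity $HX^{a}=Z^{a}H$, and the control-side relations $Z_\regu\mathsf{CNOT}_{\regu\to\regz_i}=\mathsf{CNOT}_{\regu\to\regz_i}Z_\regu$ and $X_\regu\mathsf{CNOT}_{\regu\to\regz_i}=\mathsf{CNOT}_{\regu\to\regz_i}X_\regu X_{\regz_i}$, I will push all $X$- and $Z$-type Paulis through the fan-out. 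Apart from a global phase, what remains is the fan-out decorated by single-qubit Cliffords: $X^{s_x}$ on each active $\regz_i$ and one single-qubit Clifford $W=W(R,s)$ on $\regu$ (explicitly $W=X^{s_x}Z^{s_z\oplus a}HP^{b}$ up to phase). Rewriting the fan-out-with-Hadamarded-control via the standard identity (conjugation by Hadamards turns it into a parity gate onto $\regu$) and then pushing the residual $H$'s and the phase gate $P^{b}$ through — using that $i^{\oplus_k z_k}$ expands as $\prod_k P_{z_k}\prod_{k<l}\mathsf{CZ}_{z_k z_l}$, so conjugating $P^{b}$ through a parity yields only single-qubit gates plus $\mathsf{CZ}^{b}$'s along graph edges — this becomes: a parity subcircuit and Hadamards (both $\qnczf$), together with a Clifford supported only on the $\kappa+1$ primary qubits that is a tensor product of single-qubit Cliffords and $\mathsf{CZ}$'s along edges of the complete graph, with all of the dependence on $(R,r,s)$ now confined to that primary-qubit Clifford and the single-qubit decorations.

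\emph{Making the middle layer depth-one and fixing $C_3$.} I will take $C_1(r)$ to do the $r$-dependent ``plumbing'': perform the Hadamards and parity subcircuit for the active indices, and prepare each ancilla pair $(\regb_{ij},\regb_{ji})$ that is needed to \emph{mediate} a primary-qubit $\mathsf{CZ}$ or a fan-out copy, but \emph{only} for the edges actually present in the current pattern, leaving every other ancilla pair in $\ket{00}$. The middle layer $D$ then consists solely of single-qubit Cliffords on primary/diagonal-ancilla qubits and one two-qubit Clifford per ancilla pair, so it has depth one and lies in $\scr{R}_\kappa$; folding $W_\regu$, the $X^{s_x}_{\regz_i}$'s and the global phase of the normal form into it gives $C_2(R,r,s)\in\scr{R}_\kappa$. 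The post-processing $C_3$ transfers everything back onto $\regz$ and disentangles the ancillas; it can be a single \emph{fixed} circuit because on an ancilla pair that $C_1(r)$ left in $\ket{00}$ its action is trivial (and the Hadamard/parity part of $C_3$ acts on all of $\regz$, with $C_1(r)$ pre-compensating on the inactive indices), so $C_3$ need not know $\supp(r)$. Since $C_1(r),C_3$ are built from $O(\kappa)$ CNOT, fan-out and Hadamard gates in constant depth they are $\qnczf$; $C_2(R,r,s)$ has $O(\kappa)$ single-qubit gates and at most one two-qubit gate on each of the $O(\kappa^{2})$ ancilla pairs. Each gate of the descriptions of $C_1(r)$ and $C_2(R,r,s)$ depends on $O(1)$ bits of $r$ and on $R,s$, and there are $O(\kappa^{2})$ of them, so the descriptions are $\ncz$-computable in size $O(\kappa^{2})$.

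\emph{Main obstacle.} The crux is exactly the plumbing in the last step: one must route the parity, the fan-out copies, and the induced $\mathsf{CZ}$'s through the ancilla pairs so that (i) only gates on the sanctioned pairs $(\regb_{ij},\regb_{ji})$ appear in the middle layer, (ii) every remaining $(R,r,s)$-dependent operation collapses to a single-qubit Clifford on a primary qubit (or a $\scr{R}_\kappa$-gate on a pair), and (iii) the ancilla pairs not used for the current $r$ are provably returned to $\ket{00}$, so that one fixed $C_3$ works for every $r$. Verifying (iii) — that the same $C_3$ is simultaneously correct on the ``active'' and the ``dormant'' ancilla pairs — together with the interaction between $C_1(r)$'s preprocessing and the single-qubit gate $W$ on $\regu$ that $C_2$ applies between them, is the part of the argument that is more than routine algebra.
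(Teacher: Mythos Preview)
Your approach is essentially the paper's: conjugate the fan-out by Hadamards to a parity gate, push the Paulis through, expand the phase $i^{\oplus_k z_k}$ into single-qubit $P$'s and pairwise $\mathsf{CZ}$'s via the mod-$4$ identity, and realize those $\mathsf{CZ}$'s on the ancilla pairs $(\regb_{ij},\regb_{ji})$ so that the middle layer lands in $\scr{R}_\kappa$. The decomposition into $C_1,C_2,C_3$ and the $\ncz$ size bound follow exactly as you sketch.

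There is one place where your write-up diverges from the paper and where the paper's choice is cleaner. You propose that $C_1(r)$ \emph{selectively} prepares ancilla pairs ``only for the edges actually present in the current pattern,'' and then argue that a fixed $C_3$ is harmless on dormant pairs left in $\ket{00}$. But $C_3$ contains fan-outs \emph{controlled on the primary registers}, so its action on a dormant ancilla is not trivial: it would XOR live data into an ancilla that $C_1(r)$ never touched. Your parenthetical about ``$C_1(r)$ pre-compensating on the inactive indices'' is exactly the patch this needs, but it is doing real work that you haven't spelled out. The paper sidesteps the issue entirely: its descending fan-outs in $C_1$ and ascending fan-outs in $C_3$ are \emph{both $r$-independent} and copy to \emph{all} ancillas; the only $r$-dependence in $C_1(r)$ is the parity gate $\oplus(r)$ on $\regu,\regz$, and the only $r$-dependence in $C_2$ is which $\mathsf{CZ}$'s (the gate $\Gamma_{p,r}$) are switched on. With that choice, item~(iii) of your ``main obstacle'' disappears --- the ascending fan-outs always undo the descending ones, the $\mathsf{CZ}$'s just deposit the right phase, and the case analysis $R\in\{X,Z,P\}$ plus the mod-$4$ calculation finishes the proof.
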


A general PX group element $R$ can be written (up to a global phase) as $X^x \, Z^z \, P^p$ for some $x, z, p \in \{0,1\}$. We will perform the analysis for the case of $R = X$, $R = Z$, and $R = P$ separately, and then put everything together to deduce the Proposition.

Before doing so, we will transform the right hand side of~\eqref{eq:p-h-1} via sequence of circuit equivalences. First, we observe that the fan-out gate $CX(r)$ which maps $\ket{u,z_1,\ldots,z_\kappa}$ to $\ket{u, z_1 \oplus (u\cdot r_1), \ldots, z_\kappa \oplus (u \cdot r_\kappa)}$, when conjugated by Hadamards, becomes a parity gate $\oplus(r)$ that maps $\ket{u,z_1,\ldots,z_\kappa}$ to $\ket{u \oplus (z \cdot r),z_1,\ldots,z_\kappa}$ where $z \cdot r = \sum_i z_i \cdot r_i$. Next, we observe that we can move the $X^{s_x}$ and $Z^{s_z}$ gates before the Hadamard on register $\reg{u}$, which changes them to $Z^{s_x}$ and $X^{s_z}$. Thus we get that the right hand side of~\eqref{eq:p-h-1} is equivalent to

\begin{figure}[H]
\[
        \Qcircuit @C=0.6em @R=0.8em {
          \reg{u} 	& & 	& 	& \gate{R}  & \qw & \targ &  \qw & \gate{Z^{s_x}} & \gate{X^{s_z}} & \gate{H} & \qw \\
          \reg{z}_1 & & 	&  	& \gate{H} & \qw  &   \multigate{2}{\oplus(r)} \qwx &  \qw & \qw & \qw & \gate{H} & \qw \\
          		\vdots   & & 	&    &  \vdots     &      &    &   	&      &  & \vdots &  \\
          \reg{z}_\kappa & & &  &  \gate{H} & \qw  & \ghost{\oplus(r)} &  \qw & \qw & \qw & \gate{H} & \qw \\					
          \regb && \ket{0}  & & {/} \qw & \qw   & \qw & \qw  & \qw & \qw & \qw & \qw 
        }
\]
\caption{} \label{fig:p-h-1}
\end{figure}

Next, we apply a series of fan-out gates to copy the contents of the registers $\reg{u}, \reg{z}_1,\ldots,\reg{z}_\kappa$ to the ancilla register $\reg{b}$. We label the individual qubits of register $\reg{b}$ as $\reg{b}_{ij}$ where $i,j \in \{0,1,\ldots,\kappa\}$. We first apply a fan-out gate controlled on register $\reg{u}$, with targets on $\{ \regb_{00},\ldots,\regb_{0\kappa} \}$. Then for each $j \in [\kappa]$, we apply a fan-out gate controlled on register $\reg{z}_j$ with targets $\{ \regb_{j0},\ldots,\regb_{j\kappa} \}$. We call these the ``descending fan-outs''. Then, we apply the same series of fan-out gates in reverse, to undo the copying procedure. We call these the ``ascending fan-outs''. Thus Circuit~\ref{fig:p-h-1} is equivalent to Circuit~\ref{fig:p-h-2}. 

\begin{figure}[H]
\[
        \Qcircuit @C=0.6em @R=0.8em {
          \reg{u} 	& & 	& 	& \gate{R}  & \qw & \targ 							&  \ctrl{4} & \qw & \qw & \qw & \qw & \ctrl{4} & \gate{Z^{s_x}} & \gate{X^{s_z}} & \gate{H} & \qw \\
          \reg{z}_1 & & 	&  	& \gate{H} & \qw  &   \multigate{2}{\oplus(r)} \qwx &  \qw & \ctrl{4} & \qw & \qw & \ctrl{4} & \qw & \qw & \qw & \gate{H} & \qw \\
          \vdots		    & &  	&    &   \vdots    &       &   	&   	&      &  &  & & & & & \vdots  \\
          \reg{z}_\kappa & & &  &  \gate{H} & \qw  & \ghost{\oplus(r)} 				&  \qw & \qw & \ctrl{5} & \ctrl{5} & \qw & \qw & \qw & \qw & \gate{H} & \qw \\				
          \regb_0 && \ket{0}  & & {/} \qw 		& \qw   & \qw 						& \targ  & \qw & \qw & \qw & \qw & \targ & \qw & \qw & \qw & \qw \\
          \regb_1 && \ket{0}  & & {/} \qw 		& \qw   & \qw   					& \qw  & \targ & \qw & \qw & \targ & \qw & \qw & \qw & \qw  & \qw \\
          	\vdots	    & &  	&    &       	&  &  \vdots  				&   	&      &    & & & & & & &  \\          
	\\
          \regb_\kappa && \ket{0}  & & {/} \qw 	& \qw   & \qw 						& \qw  & \qw & \targ & \targ & \qw & \qw & \qw & \qw & \qw & \qw \\
        }
\]
\caption{} \label{fig:p-h-2}
\end{figure}

Next, we move the $Z^{s_x}$ and $X^{s_z}$ gates left, before the ascending fan-outs. When moving the $X^{s_z}$ gate past the fan-out controlled on register $\reg{u}$, however, this incurs an $X^{s_z}$ correction on each target of the fan-out, which are the $\{ \regb_{00}, \ldots, \regb_{0\kappa}\}$ registers. Thus Circuit~\ref{fig:p-h-2} is equivalent to Circuit~\ref{fig:p-h-3}.

\begin{figure}[H]
\[
        \Qcircuit @C=0.6em @R=0.8em {
          \reg{u} 	& & 	& 	& \gate{R}  & \qw & \targ 							&  \ctrl{4} & \qw & \qw & \gate{Z^{s_x}} & \gate{X^{s_z}} &  \qw & \qw & \qw & \qw & \ctrl{4} & \gate{H} & \qw \\
          \reg{z}_1 & & 	&  	& \gate{H} & \qw  &   \multigate{2}{\oplus(r)} \qwx &  \qw & \ctrl{4} & \qw & \qw & \qw & \qw & \qw & \qw &\ctrl{4} &  \qw &  \gate{H} & \qw \\
          \vdots		    & &  	&    &   \vdots    &       &   	&   	&      &  &  & & & & & & & \vdots  \\
          \reg{z}_\kappa & & &  &  \gate{H} & \qw  & \ghost{\oplus(r)} 				&  \qw & \qw & \ctrl{5} & \qw &\qw & \qw & \qw & \ctrl{5} & \qw & \qw  & \gate{H} & \qw \\				
          \regb_0 && \ket{0}  & & {/} \qw 		& \qw   & \qw 						& \targ  & \qw & \qw & \qw & \gate{(X^{s_z})^{\otimes \kappa}} & \qw &\qw & \qw &  \qw & \targ & \qw & \qw \\
          \regb_1 && \ket{0}  & & {/} \qw 		& \qw   & \qw   					& \qw  & \targ & \qw & \qw & \qw & \qw &  \qw & \qw & \targ & \qw & \qw & \qw \\
          	\vdots	    & &  	&    &       	&  &  \vdots  				&   	&      &    & & & & &&  & & &  \\          
	\\
          \regb_\kappa && \ket{0}  & & {/} \qw 	& \qw   & \qw 						& \qw  & \qw & \targ & \qw & \qw & \qw & \qw & \targ & \qw & \qw & \qw & \qw \\
        }
\]
\caption{} \label{fig:p-h-3}
\end{figure}

For now we focus on the part of the circuit depicted in Circuit~\ref{fig:p-h-4} -- we omit the first layer of Hadamard gates, and everything past the descending fan-out gates. We show, for different gates $R$, how to ``push'' the $R$ correction past the descending fan-out gates.

\begin{figure}[H]
\[
        \Qcircuit @C=0.6em @R=0.8em {
          \reg{u} 	& & 	& 	& \gate{R}  & \qw & \targ 							&  \ctrl{4} & \qw & \qw & \qw & \qw &  \qw & \qw \\
          \reg{z}_1 & & 	&  	& \qw & \qw  &   \multigate{2}{\oplus(r)} \qwx &  \qw & \ctrl{4} & \qw & \qw & \qw & \qw  & \qw \\
          \vdots		    & &  	&    &       &       &     	&   	&      &  &  & & \\
          \reg{z}_\kappa & & &  &  \qw & \qw  & \ghost{\oplus(r)} 				&  \qw & \qw & \ctrl{5} & \qw &\qw & \qw & \qw  \\				
          \regb_0 && \ket{0}  & & {/} \qw 		& \qw   & \qw 						& \targ  & \qw & \qw & \qw & \qw & \qw  & \qw \\
          \regb_1 && \ket{0}  & & {/} \qw 		& \qw   & \qw   					& \qw  & \targ & \qw & \qw & \qw & \qw &  \qw  \\
          	\vdots	    & &  	&    &       	&  &  \vdots  				&   	&      &    & & & &   \\          
	\\
          \regb_\kappa && \ket{0}  & & {/} \qw 	& \qw   & \qw 						& \qw  & \qw & \targ & \qw & \qw & \qw & \qw \\
        }
\]
\caption{} \label{fig:p-h-4}
\end{figure}

\paragraph{Case 1.} Suppose that $R = X^x$. First, the $X^x$ gate commutes with the parity gate. Moving it past the first fan-out gate that is controlled on the register $\reg{u}$ incurs a $X^x$ correction on each of the $\{ \regb_{0j} \}_j$ registers. Thus Circuit~\ref{fig:p-h-4} is equivalent to Circuit~\ref{fig:p-h-5}.

\begin{figure}[H]
\[
        \Qcircuit @C=0.6em @R=0.8em {
          \reg{u} 	& & 	& 	& \qw  & \qw & \targ 							&  \ctrl{4} & \qw & \qw & \gate{X^x} & \qw &  \qw & \qw \\
          \reg{z}_1 & & 	&  	& \qw & \qw  &   \multigate{2}{\oplus(r)} \qwx &  \qw & \ctrl{4} & \qw & \qw & \qw & \qw  & \qw \\
          \vdots		    & &  	&    &       &       &    	&   	&      &  &  & & \\
          \reg{z}_\kappa & & &  &  \qw & \qw  & \ghost{\oplus(r)} 				&  \qw & \qw & \ctrl{5} & \qw &\qw & \qw & \qw  \\				
          \regb_0 && \ket{0}  & & {/} \qw 		& \qw   & \qw 						& \targ  & \qw & \qw & \gate{(X^x)^{\otimes \kappa}} & \qw & \qw  & \qw \\
          \regb_1 && \ket{0}  & & {/} \qw 		& \qw   & \qw   					& \qw  & \targ & \qw & \qw & \qw & \qw &  \qw  \\
          	\vdots	    & &  	&    &       	&  &  \vdots  				&   	&      &    & & & &   \\          
	\\
          \regb_\kappa && \ket{0}  & & {/} \qw 	& \qw   & \qw 						& \qw  & \qw & \targ & \qw & \qw & \qw & \qw \\
        }
\]
\caption{} \label{fig:p-h-5}
\end{figure}

\paragraph{Case 2.} Suppose now that $R = Z^z$. Moving the $Z^z$ gate past the parity gate incurs a $Z^{z \cdot r_j}$ correction on the register $\reg{z}_j$. This can be seen by repeatedly applying the identity $CNOT_{a,b} \cdot (I_a \otimes Z_b) = (Z_a \otimes Z_b) CNOT_{a,b}$, where $CNOT_{a,b}$ denotes a CNOT with control $a$ and target $b$.  Furthermore, the $Z$ gates all commute with the fan-out gates. Thus Circuit~\ref{fig:p-h-4} is equivalent to Circuit~\ref{fig:p-h-6}.

\begin{figure}[H]
\[
        \Qcircuit @C=0.6em @R=0.8em {
          \reg{u} 	& & 	& 	& \qw  & \qw & \targ 							&  \ctrl{4} & \qw & \qw & \gate{Z^z} & \qw &  \qw & \qw \\
          \reg{z}_1 & & 	&  	& \qw & \qw  &   \multigate{2}{\oplus(r)} \qwx &  \qw & \ctrl{4} & \qw & \gate{Z^{z \cdot r_1}} & \qw & \qw  & \qw \\
          \vdots		    & &  	&    &       &       &    	&   	&      &  &  \vdots & & \\
          \reg{z}_\kappa & & &  &  \qw & \qw  & \ghost{\oplus(r)} 				&  \qw & \qw & \ctrl{5} & \gate{Z^{z \cdot r_\kappa}} &\qw & \qw & \qw  \\				
          \regb_0 && \ket{0}  & & {/} \qw 		& \qw   & \qw 						& \targ  & \qw & \qw & \qw & \qw & \qw  & \qw \\
          \regb_1 && \ket{0}  & & {/} \qw 		& \qw   & \qw   					& \qw  & \targ & \qw & \qw & \qw & \qw &  \qw  \\
          	\vdots	    & &  	&    &       	&  &  \vdots  				&   	&      &    & & & &   \\          
	\\
          \regb_\kappa && \ket{0}  & & {/} \qw 	& \qw   & \qw 						& \qw  & \qw & \targ & \qw & \qw & \qw & \qw \\
        }
\]
\caption{} \label{fig:p-h-6}
\end{figure}

\paragraph{Case 3.} Suppose now that $R = P^p$. Then we claim that Circuit~\ref{fig:p-h-4} is equivalent to Circuit~\ref{fig:p-h-7}. Here, if $p = 0$, then the gate $\Gamma_{p,r}$ is the identity. Otherwise, the gate $\Gamma_{p,r}$ stands for the following tensor-product of $CZ$ gates:\footnote{Recall that a $CZ$ gate is a controlled-$Z$ gate; it maps $\ket{a,b}$ to $(-1)^{ab} \ket{a,b}$.}
\begin{enumerate}
	\item $CZ(\regb_{0j},\regb_{j0})$ for all $j$ such that $r_j = 1$, and
	\item $CZ(\regb_{ij}, \regb_{ji})$ for all $i < j$ such that $r_i = r_j = 1$. 
\end{enumerate}
Notice that all of these $CZ$ gates are disjoint, and thus can be implemented in a single layer.

\begin{figure}[H]
\[
        \Qcircuit @C=0.6em @R=0.8em {
          \reg{u} 	& & 	& 	& \qw  & \qw & \targ 							&  \ctrl{4} & \qw & \qw & \gate{P^p} & \qw &  \qw & \qw \\
          \reg{z}_1 & & 	&  	& \qw & \qw  &   \multigate{2}{\oplus(r)} \qwx &  \qw & \ctrl{4} & \qw & \gate{P^{p \cdot r_1}} & \qw & \qw  & \qw \\
          \vdots		    & &  	&    &       &       &   	&   	&      &  &  \vdots & & \\
          \reg{z}_\kappa & & &  &  \qw & \qw  & \ghost{\oplus(r)} 				&  \qw & \qw & \ctrl{4} & \gate{P^{p \cdot r_\kappa}} &\qw & \qw & \qw  \\				
          \regb_0 && \ket{0}  & & {/} \qw 		& \qw   & \qw 						& \targ  & \qw & \qw & \multigate{3}{\Gamma_{p,r}} & \qw & \qw  & \qw \\
          \regb_1 && \ket{0}  & & {/} \qw 		& \qw   & \qw   					& \qw  & \targ & \qw & \ghost{\Gamma_{p,r}} & \qw & \qw &  \qw  \\
          	\vdots & &  	   &  &              	&  &  \vdots  				    &    	&      &   	& 	&  &    &   \\          
          \regb_\kappa && \ket{0}  & & {/} \qw 	& \qw   & \qw 						& \qw  & \qw & \targ & \ghost{\Gamma_{p,r}} & \qw & \qw & \qw \\
        }
\]
\caption{} \label{fig:p-h-7}
\end{figure}

This can be verified by calculating the behavior of both circuits. For simplicity assume that $p = 1$. Circuit~\ref{fig:p-h-4} on input $\ket{u,z_1,\ldots,z_\kappa}$ produces an output state that has the following structure:
\begin{itemize}
	\item It is pre-multiplied by a phase factor $i^{u}$
	\item Registers $\reg{u}_0$ and $\reg{b}_{0j}$ are in the state $\ket{u \oplus (z \cdot r)}$ for all $j$.
	\item Registers $\reg{z}_i$ and $\reg{b}_{ij}$ are in the state $\ket{z_i}$ for all $i > 0$ and all $j$.
\end{itemize}
On the other hand, the state of Circuit~\ref{fig:p-h-7} is exactly the same except the phase factor in front is the product of
\begin{itemize}
	\item $i$ raised to the power $u + (z \cdot r) \mod 2$ (this comes from the $P$ gate on register $\reg{u}$).
	\item $i$ raised to the power $\sum_i z_i $ (this comes from the $P$ gates on registers $\{\reg{z}_i\}$).
	\item $-1$ raised to the power $(u +1)\sum_{j : r_j = 1} z_j $ (this comes from the CZ gates applied to $(\regb_{0j},\regb_{j0})$ for all $j$ such that $r_j = 1$).
	\item $-1$ raised to the power $\sum_{i < j : r_i = r_j = 1} z_i z_j $ (this comes from the CZ gates applied to $(\regb_{ij},\regb_{ji})$ for all $i < j$ such that $r_i = r_j = 1$).
\end{itemize}
Using the identity that for bits $c_1,\ldots,c_n$,
\[
	c_1 + \cdots + c_n \mod 2 = \sum_i c_i - 2\sum_{i < j} c_i c_j \mod 4,
\]
we get that these phase factors are equivalently
\begin{itemize}
	\item $i$ raised to the power $u + \sum_{i: r_i = 1} z_i - 2 u \sum_{j : r_j = 1} z_j - 2\sum_{i < j : r_i = r_j = 1} z_i z_j$
	\item $i$ raised to the power $\sum_{i : r_i = 1} z_i$
	\item $i$ raised to the power $2 (u +1)\sum_{j : r_j = 1} z_j $
	\item $i$ raised to the power $2  (\sum_{i < j : r_i = r_j = 1} z_i z_j)$
\end{itemize}
Summing all of these exponents modulo $4$, we get that the phase factor is $i^{u}$, as desired.

\paragraph{General case.}  Suppose that $R$ is a general $PX$ group element, so it can be written (up to a global phase) as $R =  X^x Z^z P^p$ for $x,z,p \in \{0,1\}$. Then by combining Cases 1, 2 and 3 we determine that Circuit~\ref{fig:p-h-1} is equivalent to Circuit~\ref{fig:p-h-8}.

\begin{figure}[H]
\[
        \Qcircuit @C=0.6em @R=0.8em {
          \reg{u} 	& & 	& 	& \qw  & \qw & \targ 							&  \ctrl{4} & \qw & \qw & \qw & \gate{P^p} & \gate{Z^z} & \gate{X^x} & \gate{Z^{s_x}} & \gate{X^{s_z}} &  \qw & \qw & \qw & \qw & \ctrl{4} & \gate{H} & \qw \\
          \reg{z}_1 & & 	&  	& \gate{H} & \qw  &   \multigate{2}{\oplus(r)} \qwx &  \qw & \ctrl{4} & \qw & \qw & \gate{P^{p \cdot r_1}} & \gate{Z^{z \cdot r_1}} & \qw & \qw & \qw & \qw & \qw & \qw &\ctrl{4} &  \qw &  \gate{H} & \qw \\
          \vdots		    & &  	&    &   \vdots    &       &  \ghost{\oplus(r)}   	&   	&      &  &  & \vdots  & \vdots & & & & &  & & & & \vdots  \\
          \reg{z}_\kappa & & &  &  \gate{H} & \qw  & \ghost{\oplus(r)} 				&  \qw & \qw & \ctrl{5} & \qw & \gate{P^{p \cdot r_\kappa}} & \gate{Z^{z \cdot r_\kappa}} & \qw &\qw & \qw & \qw & \qw  & \ctrl{5} & \qw & \qw  & \gate{H} & \qw \\				
          \regb_0 && \ket{0}  & & {/} \qw 		& \qw   & \qw 						& \targ  & \qw & \qw & \qw & \multigate{4}{\Gamma_{p,r}} & \gate{(X^x)^{\otimes \kappa}} & \qw & \qw & \gate{(X^{s_z})^{\otimes \kappa}} & \qw &\qw & \qw &  \qw & \targ & \qw & \qw \\
          \regb_1 && \ket{0}  & & {/} \qw 		& \qw   & \qw   					& \qw  & \targ & \qw & \qw & \ghost{\Gamma_{p,r}} & \qw & \qw & \qw & \qw & \qw &  \qw & \qw & \targ & \qw & \qw & \qw \\
          	\vdots	    & &  	&    &       	& 	 &  				&   	&      &    &  & & & && &  & & &  \\          
	          		    & &  	&    &       	& 	 &  				&   	&      &    &  & & & && &  & & &  \\        
          \regb_\kappa && \ket{0}  & & {/} \qw 	& \qw   & \qw 						& \qw  & \qw & \targ & \qw & \ghost{\Gamma_{p,r}} & \qw & \qw & \qw & \qw & \qw& \qw& \targ & \qw  & \qw & \qw & \qw \\
        }
\]
\caption{} \label{fig:p-h-8}
\end{figure}

Note that Circuit~\ref{fig:p-h-8} has the desired structure:
\begin{itemize}
	\item Letting $C_1(r)$ denote the the subcircuit up to and including the descending fan-out gates, we see that $C_1$ only depends on $r$ and is independent of $R$ and $s$. Furthermore, $C_1(r)$ can be implemented as a $\qnczf$ circuit. 
	\item Letting $C_2(R,r,s)$ denote the subcircuit in between the descending and ascending fan-out gates,  we see that $C_2$ depends on $R, r, s$, and can be implemented as a tensor product of single-qubit Clifford unitaries on all registers except for the pairs $(\reg{b}_{0j},\reg{b}_{j0})$ for $j > 0$, which have two-qubit Clifford unitaries acting on them (namely, products of $X$ and $CZ$ gates).
	\item Letting $C_3$ denote the subcircuit that includes the ascending fan-out gates as well as the final layer of Hadamard gates, we see that $C_3$ is independent of $R, r, s$, and can be implemented as a $\qnczf$ circuit.
\end{itemize}

Furthermore, the description of the circuit $C_2(R,r,s)$ in Lines 7 through 12 shows that each of the single- and two-qubit gates are simple functions of the inputs $(R,r,s)$, so this description can be computed by a classical $\ncz$ circuit of size $O(\kappa^2)$. 

This concludes the proof of the Proposition.

\bibliographystyle{alpha}
\bibliography{qgc}

\end{document}